\newtheorem{thm}{Proposition}
\newcommand{\pD}{\partial}
\newcommand{\E}{{\mathbb E}}
\newcommand{\dd}{{\mathrm d}}
\theoremstyle{nonumberplain}
\newtheorem{proof}{Proof}
\title{The inverse Cox-Ingersoll-Ross process for parsimonious financial price modeling}
\date{}
\author{L. Lin{\small$^{\mbox{\,\ref{ECUST}~\footnote{Email: llin@ecust.edu.cn}}}$} 
~and~ 
D. Sornette{\small$^{\mbox{\,\ref{RisksX},\ref{SFI}
~\footnote{Email: dsornette@ethz.ch}
}}$} 
}%
\begin{document}
\maketitle
\vspace{-9mm}
\begin{enumerate}
\item{School of Business, East China University of Technology and Science, 200237 Shanghai, China}\vspace{-3mm}\label{ECUST}%
\item{Institute of Risk Analysis, Prediction and Management (Risks-X), Southern University of Science and Technology (SUSTech),
Shenzhen, 518055, China}\label{RisksX} \vspace{-3mm}%
\item{Swiss Finance Institute, c/o University of Geneva, 40 blvd. Du Pont d'Arve, CH 1211 Geneva 4, Switzerland}\label{SFI}\vspace{-3mm}%
%\item{Research Institute of Financial Engineering, East China University of Technology and Science, 200237 Shanghai, China}\label{ECUSTFin}
\end{enumerate}%%%%%%%%%%%%%%%%%%%%%%%%%%%%%%%%%%%%%%%%%%%%%

\abstract{
We construct new classes of financial price processes based on the insight that the key variable driving prices $P$ is the earning-over-price ratio $\gamma \simeq 1/P$, which we refer to as the earning yield and is analogous to the yield-to-maturity of an equivalent perpetual bond. This modeling strategy is illustrated with the choice for real-time $\gamma$ in the form of the Cox-Ingersoll-Ross (CIR) process, which allows us to derive analytically many stylised facts of financial prices and returns, such as the power law distribution of returns, transient super-exponential bubble behavior, and the fat-tailed distribution of prices before bubbles burst. Our model sheds new light on rationalizing the excess volatility and the equity premium puzzles. The model is calibrated to five well-known historical bubbles in the US and China stock markets via a quasi-maximum likelihood method with the L-BFGS-B optimization algorithm. Using $\phi$-divergence statistics adapted to models prescribed in terms of stochastic differential equations, we show the superiority of the CIR process for $\gamma_t$ against three alternative models.}

\vskip 0.5cm
\vspace{3mm} 
\noindent {\bf Keywords:} asset pricing, financial risks, financial bubbles, excess volatility, fat tail distribution of returns, equity puzzle, earning yield, earning-over-price

%%%%%%%%----Affiliation---%%%%%%%%%%%%%%%%%

\section{Introduction}

We introduce a new approach to understand the properties of stock market prices, inspired by an apparently trivial decomposition
\begin{equation}\label{E:decomp1}
	P = \dfrac{E}{\gamma}~~,~~ \text{with} ~\gamma:=\dfrac{E}{P},
\end{equation}
where $P$ denotes the price of a given stock, $E$ is the corresponding current earnings per share (EPS)
and we refer to $\gamma$ as the earning yield, which is also called the earnings-over-price ratio. 
Note that $\gamma$ is the inverse of the widely utilised Price-to-Earnings (PE) ratio. 
Though less popular than the PE ratio, the earning yield is frequently encountered in financial statement analysis, because it is used as 
a metric for return on investment (ROI). As its name suggests, the earning yield represents how many monetary units (MU) a company earned for 1 MU invested in the company. The term \emph{earning yield}  may cause confusion for investors, as one might expect a company with a high earning yield to be better than one with a low earning yield. In reality, a company with the same earnings but a smaller earning yield is considered to have higher growth potential and is thus priced higher. In fact, the meaning of earning yield can be understood as (1) the relative importance of current earnings in pricing, and (2) the interest rate that a company issuing stock would face if it issued perpetual bonds instead, under the condition of surrendering growth and retention \footnote{ Section 2 will provide a more detailed discussion.}. Understandings these two facets can help disambiguate the meaning of earning yield and dispel any confusion. 

From a dynamic perspective, the expression (\ref{E:decomp1}) can be interpreted in  two different ways, leading to two distinct viewpoints on what dominates. The first interpretation focuses on the fact that the price $P$ varies linearly with earnings per share (EPS), given 1/$\gamma$, namely $P\simeq E$, while the second interpretation emphasises that the price $P$ varies inversely to the earning yield $\gamma$, namely $P\propto 1/\gamma$.  Given that the EPS is in general slowly varying with low volatility, we adopt the second interpretation and argue that the properties of the price process mainly arise from those of the earning yield.  
%It is in line with the findings such as $\gamma$ has greater volatility compared to EPS  \cite{booth1994functional} and demonstration of and \cite{campbell1988stock} revelation of its significant contribution to long-term stock price variations
Our proposal is that it is more natural and illuminating to 
model the \emph{real-time} $\gamma$, interpreted as the instantaneous 
market collective beliefs on earning yield, rather than the \emph{real-time} price $P$ itself.  
Our strategy involves using time-dependent continuous stochastic processes to represent real-time $\gamma$ and we have discovered that, even with simple processes,  various empirically observed stock price and return properties can be rationalized, 
 such as the power law distribution of returns,
 transient bubble behavior, and the fat-tailed distribution of prices before bubbles burst. 
 Moreover, simple real-time $\gamma$ dynamics provide straightforward explanation of the excess volatility and the equity premium puzzles.

Indeed, the structure of equation (\ref{E:decomp1}) provides a natural route to account for the
excess volatility puzzle identified by Shiller and LeRoy for equity markets, which refers to the fact that
 stock prices frequently undergo large changes that are difficult to explain from the 
 generally small changes in prospective corporate earnings or dividends 
 \citep{shiller1981,LeRoy1981,shiller1989,lyons2001,handbookExchangeRates2012}.
Translated using the notations of equation (\ref{E:decomp1}), 
$E$ moves slowly with a much smaller variance than that of $P$.
This mechanically implies that the interesting or relevant price dynamics lies in the denominator $\gamma$.
The fact that $\gamma$ is small in general (of the order of a few percent) and is in the denominator provides a natural
way to understand structurally the excess volatility puzzle, as relatively small changes (in absolute value) of $\gamma$
translate into large changes in price $P$. As an often well-documented effect, suppose $E=10$ monetary units (MU)
and $P=250$ MU, corresponding to $\gamma=0.04$. Suppose that $\gamma$ decreases to $0.03$. Then, the price
increases to $333$, a 33\% change, while $E$ has remained unchanged.
The excess volatility puzzle thus provides a motivation for proposing price processes deriving from those of $\gamma$.
We start from this empirical observation and propose a class of models for the price process of the form (\ref{E:decomp1}),
 so that the large fluctuations of the price can be ascribed to a suitably defined dynamics of real-time $\gamma$.
 
 The main model we consider assumes that the real-time $\gamma$ is not fixed but fluctuates around some $\gamma^* >0$.
 We show that this leads to an increased long-term return for the investor, which is approximately 
 proportion to the variance of real-time $\gamma$. Again, this is the interplay between being in the denominator 
 of the price equation and being relatively small with non-negligible excursions or fluctuations 
 that provides a possible concise description of the
equity premium puzzle \citep{MehraPres85,MehraPres03,Kocherla96}, 
namely the observation that the compensation for holding risky stocks is abnormally large.
In particular, our approach makes clear that models assuming constant discount rates 
tend to underestimate future price growth compared with models that account for fluctuating mean-reverting discount rates.

Stock prices often exhibit sharp rises, followed by crashes and drawdowns, 
demonstrating strongly nonlinear or abrupt long-term mean-reversal characteristics. 
In other words, over time scales from years to a few decades, prices can transiently deviate a lot from their very long-term (logarithmic) trends.
However, prices eventually return to the very long term trend associated with the growth rate of the underlying economy over very long time scales
\citep{MSCIlongerm}.
This very long-term mean-reversion property is the result of the combination of successions of 
faster price regime growth corrected by large drawdowns \citep{SornetteCauwelsbub2015}.
This nonlinear mean-reverting nature of stock prices suggests that the dynamics of real-time $\gamma$ should follow a mean-reversal process.
Ensuring positivity is important for real-world applications, and the simplest mean-reversal process that does so is the CIR process \citep{CIR85}. 
Thus, as the first serious incarnation of our proposition to capture the complexity of the price dynamics
via the earning yield process, we use the CIR process to model the real-time $\gamma$.
This choice is natural when remembering that the earning yield has the structure of an interest rate, 
and the CIR process is one of the most popular interest rate models.

The remainder of this paper is structured as follows. Section 2 presents the mathematical structure of the model and discusses the no-arbitrage condition. In Section 3, we derive various properties of our proposed price dynamics, including detailed statistical results on the distributions of prices and returns, ergodicity, emergent risk premium, and the nature of transient super-exponential growth that may describe the phenomenon of bubbles, which are seen as ubiquitous in financial markets. Section 4 presents our empirical analysis of the model. First, we describe the algorithm used to calibrate the model, which is then applied to five well-known historical bubbles in the US and Chinese stock markets. We also make an empirical comparison of our model to different alternative earning yield processes using several distance diagnostic metrics. Finally, Section 6 concludes.

\section{Price and discount rate processes}

\subsection{From price to earning yield and back}

Defining the earning yield by
\begin{equation}\label{defgamma}
	\gamma := \dfrac{E}{P}~,
\end{equation} 
by definition, one can write 
\begin{equation}\label{pricing1}
	P=\dfrac{E}{\gamma}~.
\end{equation} 

This describes the driving forces acting on the stock price $P$ in terms of $E$, the  
current earnings per share (EPS) and the earning yield $\gamma$. The variable $\gamma$ can be interpreted in two ways. In the first interpretation, 
it represents 
%the collective belief on 
the weight of current earnings in pricing the stock. In the second interpretation, 
$\gamma$ stands for the 
%market's collective belief on the 
interest rate that a company should have in a scenario with no-growth and no-retention.
\begin{itemize}
	\item {\it First interpretation}: The earning yield represents the weight of a company's current earnings in the stock price. Investors take into account not only the current earnings but also the future (discounted) earnings, and the current stock price is derived from the combined contributions of both. In valuing a stock, they weigh the importance of current and future earnings. A lower earning yield signifies that investors place more emphasis on future earnings and/or adopt a smaller discount rate when pricing the stock. This understanding of the earning yield can thus rationalize why a lower earning yield indicates a higher growth potential and/or a lower discount rate.
      \item  {\it Second interpretation}: The earning yield of a stock is the yield to maturity (YTM) derived from viewing the stock as an equivalent perpetual bond and pricing it using the bond pricing methodology under the no-arbitrage condition. In other words, the earning yield is an \emph{interest-like} variable. 
Let us illustrate this point with the following simple example. Let us assume that the company's earnings are expected to grow at a fixed rate $g$, the retention ratio remains at $b$ each year, and the cost of capital for the stock is $r$. With the current EPS being $E$,  the Gordon-Shapiro stock pricing model \citep{Williams1938,gordon1956capital} or the  \cite{miller1961dividend} model both gives
\begin{align}\label{E:simple}
	P = \sum_{t=1}^{\infty}\dfrac{E(1-b)^t(1+g)^t}{(1+r)^t} = \dfrac{E}{\frac{1+r}{(1-b)(1+g)}-1} ~.
\end{align} 
With the definition of earning yield, we obtain 
\begin{equation}
\gamma=\dfrac{1+r}{(1-b)(1+g)}-1~. 
\end{equation}
Thus, a smaller earning yield implies a larger growth potential and/or smaller discount rate.
%Let us consider the scenario where the holding period of the stock by the investor is $n$ years. 
For a perpetual bond that pays the dividend $E$ per period, if it is issued at par value based on the stock, 
what should be its yield-to-maturity? According to eq.(\ref{E:simple}) and noting that the perpetual bond has par value $E/\gamma$, the
answer to the question is obtained by solving the following equation
\begin{equation}
	\dfrac{E}{\gamma}=\dfrac{E}{1+\text{YTM}}+\dfrac{E}{(1+\text{YTM})^2}+\cdots+\dfrac{E}{(1+\text{YTM})^n}+\cdots~,
\end{equation}
whose solution is YTM$=\gamma$.  
This analysis does not imply that a stock can be converted into a perpetual bond. Rather, it implies that, under no-arbitrage conditions, stock (equity) can be equated to bond (debt). In line with the well-known \cite{modigliani1958cost} theory, an adjustment of the capital structure by replacing equity with debt will not affect the value of the company (in the absence of taxes, bankruptcy costs, agency costs, and asymmetric information, and in an efficient market). 
In this simplified framework, the above analysis suggests that the earning yield of a company having growth potential and the cost of debt (yield to maturity, YTM) of a company that use earnings as interest to repay its debt are essentially two sides of the same coin. Hence, the earning yield can be further understood as the interest rate the company would face when growth and retention are surrendered.
\end{itemize}

In practice, the earning $E$ typically changes only on a quarterly or monthly basis, while prices can be recorded
to fluctuate at the millisecond time scale. Therefore, the real-time $\gamma$ (denoted by $\gamma_t$ hereafter) calculated using equation (\ref{defgamma}) on real-time prices  $P_t$, cannot be directly regarded as earning yield, but rather as the \emph{instantaneous collective market belief on earning yield}. Intuitively, for the first interpretation, $\gamma_t$ represents the collective belief on the weight of current earnings in pricing the stock.

Let us elaborate further on  $\gamma_t$ using the second interpretation.
Consider a company whose earnings follow the growth rate sequence $\{g^1_t,g^2_t, g^3_t, ...,g^n_t, ...\}$ 
at future times $t+1, t+2, ..., t+n, ...$ expected at time $t$.
Let us assume that the earnings are retained according to the retention rate sequence $\{b^1_t,b^2_t, b^3_t, ...,b^n_t, ...\}$, 
and that the required return of capital in each future period is the sequence $\{r^1_t,r^2_t, r^3_t, ...,r^n_t, ...\}$. In addition, the expected earnings per share in period $t$ is denoted $E^e_t$. According to 
the stock pricing model of \cite{gordon1956capital} or to the \cite{modigliani1958cost} model, the stock price is given by
\begin{align*}
	P_t&=\sum_{k=1}^{\infty}\dfrac{E_{t+i}^e-b^{i}_t\cdot E_{t+i}^e}{\prod_{i=1}^{k} (1+r_t^{i})}=\sum_{k=1}^{\infty}\dfrac{E_t\cdot \prod_{i=1}^k (1-b_t^i)(1+g_t^i)}{\prod_{i=1}^{k} (1+r_t^{i})}\\
	&=\sum_{k=1}^{\infty}\dfrac{E_t}{ \prod_{i=1}^k \frac{1+r_t^{i}}{(1-b_t^i)(1+g_t^i)}}~.
\end{align*} 

The last equation can be rewritten in the following form, 
\begin{align}\label{E: bbb}
P_t =   {E_t \over 1+\gamma^1_t}  + {E_t \over (1+\gamma^1_t)(1+\gamma^2_t)} + {E_t \over (1+\gamma^1_t)(1+\gamma^2_t)(1+\gamma^3_t)} + ...~,
\end{align}
with the recursion structure, 
\begin{align*}
	\gamma_t^1&= \dfrac{1+r_t^1}{(1-b_t^1)(1+g^1_t)}-1,\\
	\gamma_t^2&=  \dfrac{(1+r_t^1)(1+r_t^2)}{(1-b_t^1)(1-b_t^2)(1+g^1_t)(1+g^2_t)}\cdot\dfrac{1}{1+\gamma_t^1}-1,\\
	\vdots\\
	\gamma_t^n&=\prod_{i=1}^n \frac{1+r_t^{i}}{(1-b_t^i)(1+g_t^i)}\prod_{i=1}^{n-1}\dfrac{1}{1+\gamma_t^i}-1,\\
	\vdots
\end{align*}

Given a total number $n_s$ of stock shares, the total value of the company should be $V_t=P_t\, n_s$. 
Let us imagine that the company alters its capital structure by exchanging all its equity with debt, 
i.e., issuing as much bond as possible to finance its operations such that 
the earning in each period will just cover the required debt payout. This amounts to sacrificing growth and not retaining earnings in the future. 
According to the \cite{modigliani1958cost} theory, the total liability of the company under this new capital structure 
is equal to the total value of the company before the exchange of equity into debt, 
which is $V_t=P_t \,n_s$. Multiplying both sides of eq. (\ref{E: bbb}) by $n_s$, we have,
\begin{align}\label{E: bbb2}
V_t=   {n_s\, E_t \over 1+\gamma^1_t}  + {n_s\, E_t \over (1+\gamma^1_t)(1+\gamma^2_t)} + {n_s\, E_t \over (1+\gamma^1_t)(1+\gamma^2_t)(1+\gamma^3_t)} + ...~.
\end{align}
The condition of no-arbitrage requires that each term in the sum in the r.h.s. of eq.(\ref{E: bbb2}) represents the value of a zero coupon bond 
maturing at the corresponding horizon indicated by the largest superscript. If the liability comes entirely from issuing perpetual bonds, with yield to maturity 
$\gamma_t$, the value of the company becomes, 
\begin{align*}
	V'_t =\dfrac{n_s E_t}{1+\gamma_t}+\dfrac{ n_s E_t}{(1+\gamma_t)^2}+\cdots+\dfrac{n_s E_t}{(1+\gamma_t)^n}+\cdots~.
\end{align*}
With $V_t=V'_t$, this recovers 
\begin{equation}\label{E:pricing1t}
	\gamma_t=\dfrac{E_t}{P_t}~,
\end{equation}
which is the definition of real-time $\gamma$.  The correspondence between $V_t$ and $V'_t$ is similar
that that between forward rates and yield rates in bonds. Indeed, the set $\{ \gamma^1_t, \gamma^2_t, ..., \gamma^n_t,...\}$
plays the role of forward rates, which can be mapped onto a single yield rate $\gamma_t$.
From the equation (\ref{E:pricing1t}), one arrives at 
\begin{equation}\label{E:pricing1tt}
	P_t=\dfrac{E_t}{\gamma_t}~.
\end{equation}

We consider expression (\ref{E:pricing1tt}) as the price process.  To simplify the exposition and without significant impact on our results, since the volatility of $P_t$ is mostly driven by $\gamma_t$, we will take $E_t\equiv E$ as a constant. By doing so, the price process  becomes specified once the dynamics of $\gamma_t$ is prescribed. Hence, we focus on the evolution of $\gamma_t$, which we term the earning yield process.

\subsection{Two examples for earning yield processes}

\subsubsection{Geometric Brownian motion}

The simplest process that ensures that $\gamma_t$ remains positive is the geometric Brownian motion (GBM).
We can state the following proposition.
\begin{thm} \label{proposition0}
Given definition \eqref{pricing1}, let us assume that the earning yield is given by the geometric Brownian motion
\begin{equation}
	\dfrac{ \dd \gamma_t}{\gamma_t} =-(\mu-\sigma^2) ~ \dd t- \sigma ~\dd W_t~. \label{E:gbmgamma} 
\end{equation}
Then, the dynamics of the price is the solution of the following stochastic differential equation (SDE) representing 
also a geometric Brownian motion:
\begin{equation}\label{E:gbmprice}
	\dfrac{\dd P_t}{P_t}=\mu~ \dd t+\sigma ~\dd W_t~.
\end{equation}	
\end{thm}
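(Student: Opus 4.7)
The plan is to apply Itô's lemma to the smooth deterministic map $P_t = f(\gamma_t)$ with $f(x) = E/x$, since by assumption $E$ is constant, so all randomness in $P_t$ flows through $\gamma_t$. The entire result is a one-line Itô calculation; the only subtlety worth flagging in advance is a sign/factor cancellation between the $\sigma^2$ piece hidden in the $\gamma_t$ drift and the Itô correction produced by inverting $\gamma_t$.

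First I would record the derivatives $f'(x) = -E/x^2$ and $f''(x) = 2E/x^3$, and evaluate them at $\gamma_t$, noting that by \eqref{E:pricing1tt} they equal $-P_t/\gamma_t$ and $2P_t/\gamma_t^2$ respectively. From the GBM assumption \eqref{E:gbmgamma} I would read off the quadratic variation $d\langle\gamma\rangle_t = \sigma^2 \gamma_t^2\, dt$. Plugging into Itô's formula
\begin{equation*}
dP_t \;=\; f'(\gamma_t)\, d\gamma_t \;+\; \tfrac{1}{2} f''(\gamma_t)\, d\langle \gamma \rangle_t,
\end{equation*}
the first term produces, after the sign flip from $f'<0$, the contribution $(\mu - \sigma^2) P_t\, dt + \sigma P_t\, dW_t$, while the second (Itô correction) term contributes $\sigma^2 P_t\, dt$.

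Summing the two, the $-\sigma^2$ drift piece carried by $\gamma_t$ and the $+\sigma^2$ piece generated by the convexity of $x \mapsto 1/x$ cancel exactly, yielding $dP_t = \mu P_t\, dt + \sigma P_t\, dW_t$, which is \eqref{E:gbmprice}. There is no substantive obstacle: the result is in fact a backward-engineered consistency check showing that the specific drift $-(\mu - \sigma^2)$ in \eqref{E:gbmgamma} was chosen precisely so that $P_t$ inherits a standard GBM with drift $\mu$ and volatility $\sigma$. The only items requiring care are tracking the sign coming from $f'(\gamma)<0$ and the factor of two in $f''$ that is compensated by the $\tfrac{1}{2}$ in front of the Itô correction.
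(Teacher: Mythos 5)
Your proposal is correct and is essentially the same argument as the paper's: an application of It\^o's lemma to $P_t=E/\gamma_t$, with the convexity correction $\tfrac12 f''(\gamma_t)\,\sigma^2\gamma_t^2\,\dd t=\sigma^2 P_t\,\dd t$ cancelling the $-\sigma^2$ term in the drift of $\gamma_t$, exactly as in the paper's computation. The paper additionally cross-checks the result by writing down the explicit exponential solution of \eqref{E:gbmgamma} and inverting it, but that is only a secondary verification of the same fact.
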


 \begin{proof}
	We simplify the notations by removing the subscript $t$. According to It\^{o}'s lemma, expression (\ref{pricing1}) leads to
\begin{align*}
	\dd P &=\dfrac{\pD P}{\pD \gamma}\dd \gamma +\dfrac{1}{2}\dfrac{\pD^2 P}{\pD \gamma^2}(\dd \gamma)^2= -\dfrac{E}{\gamma^2}~\dd \gamma +\dfrac{1}{2}\cdot 2\cdot \dfrac{E}{\gamma^3}~(\dd \gamma)^2 \\
&=-\dfrac{E}{\gamma^2}\cdot [\,-(\mu-\sigma^2) \gamma~\dd t-\sigma\gamma~\dd W \,]+\dfrac{E}{\gamma^3}\sigma^2\gamma^2~\dd t\\
&=\mu\dfrac{E}{\gamma}~\dd t +\sigma \dfrac{E}{\gamma}~\dd W=\mu P ~\dd t +\sigma P~\dd W~.
\end{align*}
which recovers (\ref{E:gbmprice}).
Another way to derive this result is obtained by using the explicit solution of (\ref{E:gbmgamma}), which reads
\begin{equation}
\gamma_t = \gamma_0 e^{-(\mu-{1 \over 2} \sigma^2) t + \sigma W_t}~.
\end{equation}
Inserting in definition (\ref{pricing1}) gives
\begin{equation}
	P_t=\dfrac{E}{\gamma_t}=  P_0 e^{(\mu-{1 \over 2}\sigma^2) t + \sigma W_t}~,
	\label{thnrwynbr2gq}
\end{equation}
where $P_0 := \dfrac{E}{\gamma_0}$. 
Expression (\ref{thnrwynbr2gq}) is indeed the solution of equation (\ref{E:gbmprice}).
{\hfill{$\square$}}
\end{proof}

As a simple process that ensures positivity, the geometric Brownian motion is widely accepted 
as a convenient price dynamics benchmark \citep{samuelson2016proof}. 
However, despite its popularity coming from its simplicity, the geometric Brownian motion fails 
to account for many financial stylised facts  \citep{econoph-styl1,econoph-styl2}.
Proposition  \ref{proposition0} provides an interesting interpretation of why the 
geometric Brownian motion model for prices is financially unrealistic.
Indeed, for the price to grow (with a positive drift coefficient $\mu-{1 \over 2}\sigma^2$), the corresponding $\gamma_t$ 
has to have a negative drift coefficient. This implies that the weight of current earnings in 
the determination of pricing continuously decreases geometrically over time 
or that the interest rate of corporate financing continuously decreases geometrically over time.
This is strongly counterfactual. It is only for the case where $\mu={1 \over 2}\sigma^2$ that
both $P_t$ and $\gamma_t$ have zero drift.

\subsubsection{Cox-Ingersoll-Ross process}

Given the inadequacy of the geometric Brownian motion, we now study the case 
where $\gamma_t$ follows the Cox-Ingersoll-Ross (CIR) process,  which was initially proposed to model interest rates (see \cite{CIR85}):
\begin{equation}\label{CIRgamma}
	\dd \gamma_t=-\alpha~(\,\gamma_t-\gamma^*\,)~\dd t +\psi \sqrt{\gamma_t}~\dd W_t~.
\end{equation}
The CIR process has a mean-reversal property, so that $\gamma_t$ tends to fluctuate around a mean value $\gamma^*$
with the amplitudes of its deviations from $\gamma^*$ controlled by parameter $\psi$.
The mean-reversal property of $\gamma_t$ can be rationalized by a behavioral interpretation. The ``mean" can be explained as 
 the typical representative belief on earning yield of a stock in the market. The instantaneous collective market belief may not always equate to the market typical representative belief  due to two effects. First, short-term agreements may not fully reflect the beliefs of most traders. Second, not all traders are involved in trading at any given time, which can result in a noisy divergence and a transition from the short-run value of $\gamma_t$ to the long-run value $\gamma^*$.

The structure of the stochastic component proportional to $\sqrt{\gamma_t}$ ensures that $\gamma_t$ never crosses $0$,
remaining non-negative at all times. As we shall prove below, intuitively, this property of the CIR process 
ensures the stationarity of the price process. The value of $\gamma^*$ can be associated with a fundamental value
$P^* := E / \gamma^*$. 
The use of the CIR process to model the stochastic dynamics of $\gamma_t$ can be justified from the fact that 
$\gamma_t$ plays the role of interest rate.  

%a discount rate (adjusted for the growth rate of earnings), as seen for instance
%from the structure of the Gordon-Shapiro valuation formula \citep{gordon1956capital}.  

Using a process such as (\ref{CIRgamma}) inserted in (\ref{pricing1}) means that prices are discounted at time $t$
inter-temporally with the same discount rate $\gamma_t$. In other words, at each time $t$, investors form an anticipation of 
what is the correct interest rate (similar to the yield-to-maturity or YTM) to apply to future earnings at all future times. 
Here, $\gamma_t$ stands for the YTM of an equivalent perpetual bond, which is time-varying. The fact that $\gamma_t$
changes with $t$ expresses the changing anticipation of investors from one period to the next.

This representation should be distinguished from the model in which the term structure of discount rate is made 
explicit in terms of a set
$\{\widetilde{\gamma}^1_t,\widetilde{\gamma}^2_t,\cdots, \widetilde{\gamma}^n_t,\cdots\}$ so that the price is given by
\begin{equation}
P_t =   {E \over 1+\widetilde{\gamma}^1_t}  + {E \over (1+\widetilde{\gamma}^1_t)(1+\widetilde{\gamma}^2_t)} + {E \over (1+\widetilde{\gamma}^1_t)(1+\widetilde{\gamma}^2_t)(1+\widetilde{\gamma}^3_t)} + ...
\label{trhwbgvq}
\end{equation}
Equation (\ref{trhwbgvq}), which is similar to eq.(\ref{E: bbb}),  expresses the idea that the price $P_t$ is obtained by discounting future earnings with discount rates $\widetilde{\gamma}^n_t$ that are
anticipated to be varying along the term structure $1, 2, 3, ...$ and changing randomly from one period $t$ to the next.
Equation (\ref{trhwbgvq}) differs from (\ref{E: bbb}) in that the former corresponds to a loose pricing model where there may not necessarily be a connection between $\widetilde{\gamma}^k_t$ and $\widetilde{\gamma}^j_t$, whereas the latter assumes a discount model in which growth rate, retention ratio, and required rate of return are all anticipated and predetermined.
An end-member assumption is that the $\widetilde{\gamma}^n_t$'s are identically independently distributed (i.i.d.) according to some 
probability density function $p_{\widetilde{\gamma}}(\widetilde{\gamma})$. Then, performing the change of variable $x_n := {1 \over 1+\widetilde{\gamma}^n}$ and dropping 
the subscript $t$, equation (\ref{trhwbgvq}) becomes
\begin{equation}
P =   E \left( x_1 + x_1 x_2 + x_1 x_2 x_3+ x_1 x_2  x_3 x_4 +  ...\right)~,~~~~x_n := {1 \over 1+\widetilde{\gamma}^n}~,
\label{trhwbgwt2vq}
\end{equation}
where the $x_n$'s are i.i.d. random variables with probability density function $p_x(x)$.
In the sequel, we shall comment on the relationships between model (\ref{pricing1}) with (\ref{CIRgamma}) 
and model (\ref{trhwbgwt2vq}).

\begin{thm} \label{proposition1}
Given definition (\ref{pricing1}), let us assume that the earning yield is given by
\begin{equation}
	\dd \gamma_t=\alpha~(\gamma^* - \gamma_t)~\dd t +\psi\sqrt{\gamma_t}~\dd W_t~,\qquad \gamma^*=\dfrac{E}{P^*}~.\label{E:bubbleCIRgamma} 
\end{equation}
Then, the dynamics of the price is the solution of the following stochastic differential equation (SDE) 
\begin{equation}\label{E:Pb}
	\dfrac{\dd P_t}{P_t}=\left[~\alpha\left(\dfrac{P^*-P_t}{P^*}\right)+\psi^2\dfrac{P_t}{E}~\right]\dd t+\psi \sqrt{\dfrac{P_t}{E}}~\dd W_t~.
\end{equation}	
%\begin{equation}\label{pricing2}
%P_t=\dfrac{E}{\gamma_t}	
%\end{equation}
%where  $\gamma_t$ follows a CIR process, 
%\begin{equation}
%	\dd \gamma_t=-\alpha~(\gamma_t-\gamma^*\,)~\dd t +\psi\sqrt{\gamma}~\dd W_t~,\qquad \gamma^*=\dfrac{E}{P^*}
%\end{equation}.
\end{thm}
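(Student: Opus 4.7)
The plan is to repeat the strategy used in the proof of Proposition~\ref{proposition0}, i.e.\ apply It\^o's lemma to the deterministic function $P=E/\gamma$ with $E$ constant, but now plugging in the CIR dynamics instead of the GBM dynamics. Concretely, I would first record the derivatives $\pD P/\pD\gamma=-E/\gamma^2$ and $\pD^2 P/\pD\gamma^2=2E/\gamma^3$, and the quadratic variation $(\dd\gamma_t)^2=\psi^2\gamma_t\,\dd t$ coming from \eqref{E:bubbleCIRgamma}. Substituting into It\^o's formula gives
\begin{equation*}
\dd P_t=-\frac{E}{\gamma_t^2}\,\alpha(\gamma^*-\gamma_t)\,\dd t+\frac{E}{\gamma_t^3}\psi^2\gamma_t\,\dd t-\frac{E}{\gamma_t^2}\psi\sqrt{\gamma_t}\,\dd W_t.
\end{equation*}

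The second step is purely algebraic: rewrite each coefficient in terms of $P_t$ using the identities $E/\gamma_t=P_t$, $E/\gamma_t^2=P_t^2/E$, and $\gamma^*=E/P^*$. The drift becomes $-\alpha P_t^2/P^*+\alpha P_t+\psi^2 P_t^2/E=\alpha P_t(P^*-P_t)/P^*+\psi^2 P_t^2/E$, and the diffusion becomes $-\psi P_t^{3/2}/\sqrt{E}\,\dd W_t$. Dividing through by $P_t$ then yields the right-hand side of \eqref{E:Pb} up to the sign in front of $\dd W_t$.

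The remaining issue is cosmetic: It\^o's lemma naturally produces a minus sign in front of $\dd W_t$ (since $P$ is a decreasing function of $\gamma$), whereas \eqref{E:Pb} is written with a plus sign. This is resolved by noting that $\widetilde W_t:=-W_t$ is again a standard Brownian motion, and if $\gamma_t$ is driven by $W_t$ then $P_t$ can equivalently be represented as being driven by $\widetilde W_t$; with a mild abuse of notation (shared throughout the paper between \eqref{E:gbmgamma} and \eqref{E:gbmprice}), we relabel $\widetilde W_t$ back as $W_t$.

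There is no real obstacle here: the derivation is an essentially mechanical It\^o calculation, and the only subtlety is keeping track of which expressions live in the $\gamma$--parameterisation and which in the $P$--parameterisation when rewriting the coefficients. Compared with Proposition~\ref{proposition0}, the new features worth emphasising in the exposition are (i) the appearance of a \emph{nonlinear} mean-reverting drift proportional to $P(P^*-P)/P^*$ rather than a constant drift, which will later be used to interpret bubble behavior, and (ii) the state-dependent diffusion coefficient $\psi\sqrt{P_t/E}$ that grows with the price level and underlies the fat-tailed return behavior derived in Section 3.
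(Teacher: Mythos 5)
Your proposal is correct and follows essentially the same route as the paper's proof in Appendix \ref{proof-E:Pb}: apply It\^o's lemma to $P=E/\gamma$ with the CIR dynamics, rewrite the coefficients via $\gamma_t=E/P_t$ and $\gamma^*=E/P^*$, and absorb the minus sign on the diffusion term by the symmetry of Brownian motion ($-W_t$ is again a standard Wiener process), exactly as the paper does. The algebra in your drift and diffusion computations checks out, so nothing further is needed.
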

The proof is in Appendix \ref{proof-E:Pb}.

\subsection{No-arbitrage condition}

 We show how the proposed price dynamics (\ref{E:Pb}) excludes risk-free arbitrage opportunities.
 In other words, we formulate a no-arbitrage condition.
 
Let us assume that the market is made of $N$ stocks. For each of these stocks, there is at least one derivative 
(e.g., a European call option) for which it is the underlying asset. We also assume 
the existence of a unique risk-free asset with a riskless interest rate $r_f$. The price dynamics of the $i$th stock price has the same
form as equation (\ref{E:Pb}),
\begin{equation}\label{E:Pbi}
	\dfrac{\dd P^{(i)}_t}{P^{(i)}_t}=\left[~\alpha^{(i)}\left(\dfrac{P^{(i)*}-P^{(i)}_t}{P^{(i)*}}\right)+\psi^{(i) 2}~\dfrac{P^{(i)}_t}{E^{(i)}}~\right]\dd t+\psi^{(i)} \sqrt{\dfrac{P^{(i)}_t}{E^{(i)}}}~\dd W^{(i)}_t~,
\end{equation}	
with $i=1,2,\cdots, N$. 

According to the fundamental theory of asset pricing, the opportunity of risk-free arbitrage for any asset trading on the market is absent if and only if there is at least one stochastic process $m_t$, the so-called stochastic discount factor (SDF), that satisfies the following two conditions (see \cite{harrison1979martingales, harrison1981martingales,duffie2010dynamic}): 
 \begin{itemize}
\item[(1)] $m_t$ is a martingale process, which means that $\E_t(\dd (m_t)=0$.
\item[(2)] The product of $m_t$ and the price discounted by the risk-free rate $r_f$ of any asset is also a martingale process: 
$m_t\,\widetilde{P}_t = \E_t(m_{t'}\,\widetilde{P}_{t'}), t'>t$, where $P_t$ is the observed price and
 $\widetilde{P_t}=e^{-r_f t} P_t$. This condition is equivalent to ${\E_t(\dd m_t P_t) \over m_t P_t}=r_f dt$.
\end{itemize}
By Ito calculus, 
\begin{equation}
{\E_t(\dd m_t P_t) \over m_t P_t} = {\E_t(\dd m_t) \over m_t}+{\E_t(\dd P_t) \over P_t}  +{\E_t(\dd m_t \dd P_t) \over m_t P_t}~.
\label{thwrtbgv}
\end{equation}
Since $m_t$ is a martingale, the second condition reduces to 
\begin{equation}\label{E: martingale2}
	\E_t\left(\dfrac{\dd P_t}{P_t}\right)+\E_t\left(\dfrac{\dd m_t}{m_t}\dfrac{\dd P_t}{P_t}\right) = r_f\dd t~.
\end{equation}
The following proposition states that there is at least one SDF $m_t$ that eliminates any opportunity of risk-free arbitrage for a market comprising $N$ stocks and their derivatives. 

\begin{thm} \label{proposition2}
Suppose a market contains $N$ stocks and their derivatives. The price $P^{(i)}_t$ of the $i$th stock is the solution of Equation (\ref{E:Pbi}) and the price of its derivative $\xi^{(i)}_t$ is determined by a $C^2$ bivariate function of $P^{(i)}_t$ and $t$, $f^{(i)} : (P^{(i)}_t , t) \mapsto \xi^{(i)}_t$. 
Consider the stochastic process $m_t$ given by	
\begin{equation*}
	\dfrac{\dd m_t}{m_t}=\vartheta^{(1)}_t\dd Z^{(1)}_t+\vartheta^{(2)}_t\dd Z^{(2)}_t+\cdots+\vartheta^{(N)}_t\dd Z^{(N)}_t, \quad i=1,\cdots,N,
\end{equation*}
where the following conditions hold:
\begin{align*}
% \E_t[\,\dd Z^{(i)}_t \dd Z^{(j)}_t\,] &=\delta_{ij}\dd t, \quad i, j=1,\cdots,N,\\
    &\E_t[\,\dd W^{(i)}_t \dd Z^{(j)}_t\,] =\rho^{(i)}\,\delta_{ij}\dd t,\quad i, j=1,\cdots,N,\\
	&\vartheta^{(i)} _t =\dfrac{1}{\rho^{(i)}}\left(\dfrac{[r_f-\alpha^{(i)}]-\psi^{(i)2}}{\psi^{(i)}}\dfrac{1}{\sqrt{P_t^{(i)}E^{(i)}}}+\dfrac{\alpha^{(i)}\sqrt{P_t^{(i)}E^{(i)}}}{P^{*(i)}\psi^{(i)}}\right)\, \quad i =1,\cdots,N,
\end{align*}
where $\rho^{(i)}$ is an arbitrary multivariate function of $N$ independent variables,  $P^{(i)}_t, i=1,2,\cdot, N$, and $Z^{(i)}_t$, $W^{(i)}_t$ are Gauss-Wiener processes.
Then, $m_t$ qualifies as a SDF for the whole market, implying that the no-arbitrage condition holds. 
\end{thm}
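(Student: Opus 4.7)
My plan is to check the two defining properties of a stochastic discount factor in sequence, handling the stocks first and their derivatives second. Property (1), that $m_t$ is a martingale, is essentially free: $\dd m_t / m_t$ is a linear combination of It\^o integrals against $\{Z^{(i)}_t\}$ with no drift, so $\E_t[\dd m_t] = m_t \sum_i \vartheta^{(i)}_t \E_t[\dd Z^{(i)}_t] = 0$. The real content is in property (2).

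For property (2) applied to the stocks, the stated formula for $\vartheta^{(i)}_t$ is \emph{derived} rather than merely checked. I would read off the drift and diffusion of $\dd P^{(i)}_t / P^{(i)}_t$ directly from (\ref{E:Pbi}), and use the correlation structure $\E_t[\dd W^{(i)}_t \dd Z^{(j)}_t] = \rho^{(i)}\,\delta_{ij} \dd t$ to collapse the cross term in (\ref{E: martingale2}) onto the single index $j=i$, leaving $\vartheta^{(i)}_t \rho^{(i)} \psi^{(i)} \sqrt{P^{(i)}_t / E^{(i)}} \, \dd t$. The identity (\ref{E: martingale2}) then becomes a single scalar linear equation for $\vartheta^{(i)}_t$, and isolating $\vartheta^{(i)}_t$ recovers the expression given in the Proposition. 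Independence of the Brownian pairs across stock indices causes the $N$ conditions to decouple into $N$ parallel one-stock problems, which is why the arbitrary choice of $\rho^{(i)}$ as a function of $(P^{(1)}_t,\dots,P^{(N)}_t)$ never creates cross-asset inconsistencies.

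For property (2) applied to each derivative $\xi^{(i)}_t = f^{(i)}(P^{(i)}_t, t)$, I would invoke It\^o's lemma on $f^{(i)}$. Expanding $\dd \xi^{(i)}_t$ yields a drift plus a single diffusive term proportional to $\partial_P f^{(i)} \cdot P^{(i)}_t \psi^{(i)} \sqrt{P^{(i)}_t / E^{(i)}} \, \dd W^{(i)}_t$; substituting into $\E_t[\dd \xi^{(i)}_t / \xi^{(i)}_t] + \E_t[(\dd m_t / m_t)(\dd \xi^{(i)}_t / \xi^{(i)}_t)] = r_f \dd t$, with the $\vartheta^{(i)}_t$ already fixed at the stock level, reduces the SDF condition to a deterministic PDE on $f^{(i)}$. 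This PDE is exactly the Feynman--Kac / no-arbitrage pricing PDE for a derivative on $P^{(i)}_t$ under the measure induced by $m_t$. Because $W^{(i)}$ is the only Brownian driving $P^{(i)}_t$, the submarket $(P^{(i)}_t,\text{bond})$ is complete, so any admissible $C^2$ derivative is uniquely replicable and its arbitrage-free price satisfies this PDE automatically.

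The hard part will be this last step: the algebra that produces the PDE is routine, but cleanly invoking replication or the Feynman--Kac representation requires regularity and integrability hypotheses on $f^{(i)}$ and on $\vartheta^{(i)}_t$. In particular, $\vartheta^{(i)}_t$ contains a factor $1/\sqrt{P^{(i)}_t E^{(i)}}$ that can blow up near $P^{(i)}_t = 0$, so some control on the boundary behavior of the CIR-driven price (or a growth condition on $f^{(i)}$) is needed to ensure the associated stochastic integrals are well defined. By contrast, property (1) and the stock part of property (2) are mechanical It\^o calculus and present no real obstacle.
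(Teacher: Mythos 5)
Your proposal is correct and follows essentially the same route as the paper: the martingale property is immediate, the stock-level condition reduces to the same scalar linear equation fixing $\vartheta^{(i)}_t\rho^{(i)}=(r_f-\mu^{(i)}_t)/\sigma^{(i)}_t$ (derived rather than verified, which is equivalent), and the derivative condition rests on the same Black--Scholes-type pricing PDE, which the paper obtains explicitly by delta-hedging the derivative with $\partial \xi^{(i)}_t/\partial P^{(i)}_t$ shares of the stock while you invoke single-factor completeness/replication to the same effect. Your added caveat about integrability of $\vartheta^{(i)}_t\sim 1/\sqrt{P^{(i)}_t E^{(i)}}$ near the boundary is a reasonable technical refinement that the paper does not address.
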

The proof is in Appendix \ref{proof-martin-SDF}. 

It is important to stress that Proposition 2 only demonstrates the existence of a SDF in the market without referring to its uniqueness. 
Different specifications of $\rho^{(i)}$ will result in different qualified SDFs. 

The proof  in Appendix \ref{proof-martin-SDF} gives an intuition of the mechanism for the absence
of risk-free arbitrage opportunities. Consider a mini-market that comprises only the $i$th stock, some derivatives of this stock, and a risk-free bond. The proof takes advantage of the fact that any asset in the mini-market can be replicated from the other assets. This means that
each such mini-market can be devoid of risk-free arbitrage opportunities. Given that the entire market comprises $N$ such mini-markets, 
it stands to reason that there is no risk-free arbitrage opportunity for the entire market. 
Meanwhile, it is unsurprising that the universal SDF $m_t$ acts as the combination of the $N$ local SDF's of each mini-market.  

\subsection{Representing the earning yield process using the Campbell-Shiller Decomposition}

The common practice in finance is to work with the logarithm of prices or the logarithm of valuation ratios.
Using the logarithm of prices as a starting point makes no substantive
difference between the price and its inverse.  The question arises as to whether 
focusing on the inverse of the price instead of the log of the price, or
the log price-dividend ratio, is a substantive step forward. This section addresses this question
and argues that working with the inverse of the price makes a substantive difference.

Let us first start with the 
Campbell-Shiller decomposition, which is a well-known logarithmic linearization modeling approach that can represent log-price or its variant valuation ratio as a present value model of rational expectations for future fundamental changes and real discount rates \citep{csdecomp,campbelldecomp}. With the Campbell-Shiller decomposition, one can show that the logarithm of earning yield ($\ln\gamma_t$)  can also  approximately be characterized as a constant plus the ``present value" of expected current and  future values of the one-period required return minus the  one-period growth in earnings adjusted by the dividend ratio. 

By denoting the dividend ratio at time $t$ by $c_t$, which is $1$ minus the retention ratio $b_t$,  and denoting the dividend by $D_t$, we have,    
\begin{align}
	R_{t+1} =\dfrac{P_{t+1}+D_{t+1}}{P_t}=\dfrac{P_{t+1}+ E_{t+1}c_{t+1} }{P_t}
\end{align}
Then, the current price-earning ratio can be expressed by the following multiplicative form ex-post:
\begin{equation}
	\dfrac{P_t}{E_t}=\dfrac{1}{R_{t+1}}\left(1+\dfrac{P_{t+1}}{E_{t+1} c_{t+1}} \right)\cdot \dfrac{E_{t+1}}{E_t}\cdot c_{t+1}~.
	\label{trhwbgq}
\end{equation}
Taking the logarithm of both terms in equation (\ref{trhwbgq}) gives
\begin{equation}\label{E:lngamma}
	\ln \gamma_t=\ln R_{t+1}-\ln \left(1+\dfrac{1}{c_{t+1}\gamma_{t+1}}\right)-(\ln E_{t+1}-\ln E_t)-\ln c_{t+1}
\end{equation}

Starting from eq.(\ref{E:lngamma}), the Campbell-Shiller decomposition procedure involves employing a first-order Taylor approximation and iterative forward steps, assuming the validity of the transversal condition $\lim_{n\to\infty} \rho^n \ln\gamma_{t+n}=0$.  Using symbol $\mathscr{L}$ to denote the lead operator, and denoting $r_{t} :=\ln R_t$ and $e_t :=\ln E_{t}-\ln E_{t-1} +(1-\rho) \ln c_t$,
we obtain (See Appendix \ref{A:laow0nn} for the detailed derivation):
\begin{align}\label{E:csdecompforlng}
	\ln \gamma_t &=-\dfrac{\kappa}{1-\rho}+\dfrac{r_{t+1}}{1-\rho\mathscr{L}}-\dfrac{e_{t+1}}{1-\rho\mathscr{L}}\notag\\
	&=-\dfrac{\kappa}{1-\rho} + (1+\rho\mathscr{L}+\rho^2\mathscr{L}^2+\cdots)(r_{t+1}-e_{t+1})\\\notag
	&=-\dfrac{\kappa}{1-\rho} + \sum_{j=0}^{\infty} \rho^j (r_{t+1+j}-e_{t+1+j}), 
\end{align}
with 
\begin{equation}
	\begin{cases}
		\kappa = \ln(1+e^{-\overline{\ln c\gamma}}) + \dfrac{e^{-\overline{\ln c\gamma}}}{1+e^{-\overline{\ln c\gamma}}}	\cdot \overline{\ln c\gamma}\\
		\rho = \dfrac{e^{-\overline{\ln c\gamma}}}{1+e^{-\overline{\ln c\gamma}}}	~,
	\end{cases}
\end{equation}
where $\overline{\ln c\gamma}$ represents $\mathbb{E}(\ln c_t\gamma_t)$, the unconditional expected logarithm of the dividend-price ratio. 
Note that decomposition (\ref{E:csdecompforlng}) mathematically holds for any returns, prices, earnings and dividend ratio ex-post. 
But ex-ante, rational expectation is necessary to guarantee that the market precisely keeps all the future value $r_{t+1+j}$ and $e_{t+1_j},\quad j=0,1,2,...$ in mind when determining the current $\ln \gamma_t$. 

The decomposition (\ref{E:csdecompforlng}) serves as a potential starting point for modeling $\ln \gamma_t$. 
However, in the present paper, we choose not to adopt this approach for explaining and characterizing the behavior of $\gamma_t$. 
for the two following reasons.

\begin{enumerate}
\item  One of the conditions for the validity of the Campbell-Shiller  decomposition is that $\gamma_t$ should satisfy the transversal condition $\lim_{n\to\infty} \rho^n \ln\gamma_{t+n}=0 $. However, this condition effectively rules out the possibility of the existence of bubbles. As we will see soon, when the market is in a recurrent explosive bubble regime, $\gamma_t$ has a positive probability of rapidly dropping to zero at a certain finite time $t_c$ and instantaneously rebounding (correspondingly, the price surges to infinity and is instantly reflected). In this case, $\exists t_c < \infty,\text{where} \ln\gamma_{t_c}=\infty$ and thus the transversal condition does not hold. Moreover, when $\ln\gamma_{t_c}$ approaches infinity,  the current value of $\ln \gamma_t$ also diverges when forecast is in the context of rational expectations. Hence, the Campbell-Shiller decomposition method based on rational expectations completely fails in such bubble scenarios. In contrast, the modeling approach employed in this paper is more suitable for characterizing bubbles.
\item The Campbell-Shiller decomposition relies on a first-order Taylor expansion around $\mathbb{E}(\ln c_t\gamma_t)$ to specify the dynamics of $\ln \gamma_t$ , with the implicit assumption that the deviations of $\ln \gamma_t$  from $\mathbb{E}(\gamma_t)$ and $\ln c_t$ from $\mathbb{E}(\ln c_t$ are small. However, even when $c_t$ remains constant, a slight deviation $\Delta \gamma_t$ of $\gamma_t$ from its mean $\gamma^*$ will result in a deviation of $\ln \gamma_t$ roughly by $\dfrac{\Delta \gamma_t}{\gamma^*}$. Since $\gamma^*$ typically has magnitudes in the range $10^{-1} \sim 10^{-3}$, $\ln \gamma_t$ deviates significantly from  $\mathbb{E}(\gamma_t)$, thus undermining the accuracy of the first-order approximation.
\end{enumerate}

We now return to the general question of why focusing on the inverse of the price instead of the logarithm of the price is a substantive step forward. 
Our contribution is to propose that one should construct processes for the price in the form of the logarithm of a process 
motivated to model the inverse of the price and not start
from expressing the logarithm of the price as a process itself.
In this sense, our framework in term of the logarithm of the earning yield provides an important insight because it frames the search for price process
in a certain mathematical form. If the starting point is to imagine processes for the logarithm of the price, the 
corresponding natural class of models is very different.
It would not be trivial to imagine a priori that the logarithm of the  Cox-Ingersoll-Ross model is relevant.
In that sense, our framework is quite fundamental as identifying the ``right'' variable that makes the theory simple and insightful.

Our claim that the logarithmic of earning yield is the ``right variable'' actually emphasises, from an epistemological perspective, that  $P_t  \leftarrow \dfrac{E}{\gamma_t}$ instead of $\gamma_t \leftarrow \dfrac{E}{P_t}$.
The symbol ``$\leftarrow$" represents the direction of causality, indicating that the symbol on the right is the ``cause," while the symbol on the left is the ``effect." Although the cause and effect are equivalent in terms of quantity, they exhibit asymmetry. Traditional notation often carelessly uses ``=" to represent causality, which can easily lead to ambiguity. The equal sign only expresses equivalence in quantity, allowing for symmetrical treatment of variables on both sides, enabling them to be freely rearranged or subject to inverse functions.
However, in reality, cause and effect are asymmetrical and cannot be arbitrarily interchanged. Using ``=" to denote causality is a poor mathematical practice that misrepresents the relationship between variables. For example, Newton's second law is often written as $F=ma$, but it is actually expressing $a \leftarrow \dfrac{F}{m}$, meaning that the acceleration is determined by the external force, rather than $F \leftarrow ma$, which would imply that the amount of acceleration of a object determines the forces acting on the object. Similarly, directly writing $P_t ={E}/{\gamma_t}$ for  $P_t  \leftarrow {E}/{\gamma_t}$ can easily lead one to innocuously transform the equation into $\ln \gamma_t = \ln E - \ln P_t$. Consequently, it may be mistakenly assumed that modeling $-\ln P_t$ directly is equivalent to modeling $\ln \gamma_t$. Our contribution lies in recognizing that, in reality, $\gamma_t$ is the cause and $P_t$ is the effect, i.e., $P_t$ is determined by $\gamma_t$, not the inverse. Moreover, the underlying meaning of $\gamma_t$ is the inverse of price. Therefore, the inverse of price is a more essential determining variable than the price itself. As the example of Newton's second law illustrates, modeling the external force, the ``cause," provides a more concise and clear characterization and prediction of the acceleration than directly modeling the acceleration itself. Similarly, modeling $\ln \gamma_t$ instead of $\ln P_t$ achieves the same effect, which is the basis for the present work.

\section{Properties of the price process (\ref{E:Pb})} 

\subsection{Classification of two different regimes \label{bhrgrw}}

The properties of the price process (\ref{E:Pb}) derive from those of the CIR process (\ref{E:bubbleCIRgamma}).
We consider solely the regimes where $\alpha, \gamma$ and $\psi$ are strictly positive.
Two regimes need to be considered. 
\begin{enumerate}
\item {\it Non-explosive nonlinear regime}: for $2\alpha \gamma^* > \psi^2$, the CIR process $\gamma_t$ starting from an arbitrary initial positive value 
remains strictly positive at all times. The condition $\alpha>0$ then ensures that the CIR process is stationary and ergodic. 
Correspondingly, this implies that the price process remains bounded at all times and is also stationary and ergodic. This is proved
in Appendix \ref{A:ergodic}. 

\item {\it Recurrent explosive bubble regime}: for $2\alpha \gamma^* \leq \psi^2$, the CIR process $\gamma_t$ has a strictly positive probability of hitting $0$ and, when this occurs, the origin is 
instantaneously reflecting. Let $t_c=\inf \{t: \gamma_t=0\}$ be the time of first hit with the origin. Then, the price $P_t$
has a strictly positive probability of diverging in finite time at some stochastic time $t_c$. 
We call this regime the recurrent exploding bubble regime, characterised by stochastic finite-time singularities. 
The price process is also stationarity and ergodic at long times.
\end{enumerate}

A number of models of financial bubbles with stochastic finite-time singularities have been previously proposed 
\citep{SorJohBou96,JSL99,JLS00,AS1,IdeSor02,AS2fearful,CorsiSor14,LinRenSor14,lin2013,LinSchSor19,SchSor20}. The model closest to the present work
in the exploding bubble regime is \citep{AS1,AS2fearful} in which the price is an inverse power of a 
drifting Brownian motion. The exploding bubble regime generalises these works by considering the more elaborate
richer and more financially relevant CIR process. 

For fixed $\alpha$ and $\psi$, the condition $2\alpha \gamma^* \leq \psi^2$ implies that a too low reference interest rate $\gamma^*$ 
or a too weak mean reversal coefficient $\alpha$
spells potential disaster in the form a looming finite time singularity. Of course, singularities (almost) never
occur in Nature or Society but the dynamics preceding them can develop until a time when the associated excesses
trigger other mechanisms to take over and make the system transition to other phases, such as crashes, drawdowns
and even recessions. From this perspective, the policies of ultra-low interest rates taken by central banks in response to the great financial crises
can be interpreted as having pushed the reference interest rate $\gamma^*$ too low (below the safety zone 
threshold $\psi^2/2\alpha$), thus
promoting dangerously extraordinary price appreciations, found in real estate as well as in certain sectors
of the economy such as the digital social sector of GAFAM.

Another way to interpret this regime comes from the definition $\gamma^*=E/P^*$: if the anchoring price $P^*$
becomes too large compared with the earning $E$ (i.e., $P^*>H :=2\alpha E/\psi^2$), the price dynamics enters a fundamentally
unstable regime with a bubble developing strongly and exploding as a finite-time singularity.

The bifurcation from a stable price regime when $2\alpha \gamma^* > \psi^2$
to the explosive regime when $2\alpha \gamma^* \leq \psi^2$, just from a change of expectation of the reference price 
and/or earnings, provides a simple formal rationalisation of the common lore that financial 
markets are characterised by shifts between stable and unsustainable exuberant regimes.
For instance, \cite{Biancietaljf22} shows that the US economy can be characterised by longer-term regime shifts in asset values 
that synchronise with shifts in interest rates. 

While such strongly explosive price behaviours can sometimes 
be observed at least transiently during some financial bubbles, most bubbles seem to exhibit weaker explosive regimes \citep{Phillips2011}
but still characterised by transient ``super-exponential'' growth \citep{SornetteCauwelsbub2015,Ardila-AlvarezSor21}.
In the remainder of this work, we thus focus on the first regime $2\alpha \gamma^* > \psi^2$, which 
exhibits many interesting properties.
This condition $2\alpha \gamma^* > \psi^2$ means that the volatility should not be too large, or alternatively
the strength $\alpha$ of the mean reverting process as well as the amplitude of the typical interest rate $\gamma^*$ should be 
sufficiently large. 

\subsection{Transition probability of the price}

The price process characterised by Equations (\ref{pricing1}) and (\ref{E:bubbleCIRgamma}) has the following explicit transition density. 

\begin{thm}\label{proptranden}
Defining $P^* := \dfrac{E}{\gamma^*}$ and $H:=\dfrac{2\alpha E}{\psi^2}$,
the probability $f(P_t,t\mid P_0)$ to find the price value $P_t$ at time $t$ given that the price was $P_0$ at time $0$ is given by
\begin{equation}\label{E:traden}
	f(P_t,t\mid P_0)=c~e^{-(u+v)}v^{\frac{q}{2}+2}u^{-\frac{q}{2}}~I_q(2\sqrt{uv})~, 
\end{equation}
where
\begin{align}
c &=\dfrac{1-e^{-\alpha t}}{H}  \label{trhtgbq1} \\
u & = \dfrac{H}{P_0}\cdot\dfrac{1}{e^{\alpha t}-1} \label{trhtgbq2} \\
v & = \dfrac{H}{P_t}\cdot\dfrac{1}{1-e^{-\alpha t}} \label{trhtgbq3} \\
q & = \frac{H}{P^*}-1~,  \label{trhtgbq4}
\end{align}
%with $H$ is defined as
%\begin{equation}\label{E:H}
%	H=\dfrac{2\alpha E}{\psi^2}~,
%\end{equation}
and $I_q(\cdot)$ is the modified Bessel function of the first kind of order $q$, 
\begin{equation*}
	I_q(x)=\sum_{k=0}^{\infty}\left(\dfrac{x}{2}\right)^{2k+q}\dfrac{1}{k!~\Gamma(k+q+1)},\quad x\in\mathbb{R}.
\end{equation*}
\end{thm}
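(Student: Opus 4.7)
The plan is to derive the density of $P_t$ by starting from the known transition density of the CIR process $\gamma_t$ and performing a standard change of variables via the relation $P_t = E/\gamma_t$ given by equation (\ref{E:pricing1tt}). Since $E$ is treated as a constant, the map $\gamma \mapsto E/\gamma$ is a smooth monotone bijection on $(0,\infty)$ (and we are in the non-explosive regime $2\alpha\gamma^*>\psi^2$ where $\gamma_t$ stays strictly positive), so the density of $P_t$ is obtained unambiguously by multiplying the density of $\gamma_t$ by the Jacobian $|d\gamma/dP| = E/P_t^2$.

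First, I would recall the Feller/CIR transition density, which in the notation of equation (\ref{E:bubbleCIRgamma}) reads
\begin{equation*}
f_\gamma(\gamma_t,t\mid\gamma_0) \;=\; c_\gamma\, e^{-U-V}\left(\frac{V}{U}\right)^{q/2} I_q\!\bigl(2\sqrt{UV}\bigr),
\end{equation*}
with $c_\gamma = 2\alpha/[\psi^2(1-e^{-\alpha t})]$, $U = c_\gamma \gamma_0 e^{-\alpha t}$, $V = c_\gamma \gamma_t$, and $q = 2\alpha\gamma^*/\psi^2 - 1$. Next I would rewrite each of these in terms of the price variables through the substitutions $\gamma_0 = E/P_0$, $\gamma_t = E/P_t$, $\gamma^* = E/P^*$, together with $H := 2\alpha E/\psi^2$. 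A direct computation gives $c_\gamma E = H/(1-e^{-\alpha t})$, hence
\begin{equation*}
U \;=\; \frac{H}{P_0}\cdot\frac{1}{e^{\alpha t}-1}, \qquad V \;=\; \frac{H}{P_t}\cdot\frac{1}{1-e^{-\alpha t}}, \qquad q \;=\; \frac{H}{P^*}-1,
\end{equation*}
matching (\ref{trhtgbq2})--(\ref{trhtgbq4}).

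Finally, I would apply the change-of-variables formula, $f(P_t,t\mid P_0) = f_\gamma(E/P_t,t\mid E/P_0)\cdot E/P_t^2$, and absorb the Jacobian into the prefactor and the powers of $V$ and $U$. The only nontrivial step is verifying that the $E/P_t^2$ factor, together with the original CIR prefactor $c_\gamma$, reproduces exactly the asserted combination $c\, v^{q/2+2} u^{-q/2}$ of (\ref{E:traden})--(\ref{trhtgbq1}). Since $V^2 = H^2/[P_t^2(1-e^{-\alpha t})^2]$, one finds
\begin{equation*}
c_\gamma\cdot\frac{E}{P_t^2} \;=\; \frac{H}{P_t^2(1-e^{-\alpha t})} \;=\; \frac{1-e^{-\alpha t}}{H}\cdot V^2,
\end{equation*}
so the Jacobian promotes the factor $(V/U)^{q/2}$ into $V^{q/2+2}U^{-q/2}$ and produces the prefactor $c = (1-e^{-\alpha t})/H$.

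The main obstacle is not analytic but bookkeeping: one must be careful that the CIR density is quoted in a convention consistent with the drift normalisation in (\ref{E:bubbleCIRgamma}) (so that $q$ comes out as $2\alpha\gamma^*/\psi^2 - 1$ rather than the $-1/2$ shifted version that appears in some references), and that all substitutions $\gamma = E/P$ are consistently applied to the initial state, the terminal state, and the long-run mean. Once these substitutions are tracked cleanly, the identification of (\ref{trhtgbq1})--(\ref{trhtgbq4}) and of the exponents $q/2+2$ on $v$ and $-q/2$ on $u$ follows from the single algebraic identity above, completing the proof.
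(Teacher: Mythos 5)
Your proposal is correct and follows essentially the same route as the paper's own proof: both start from the Feller/CIR transition density, substitute $\gamma_0=E/P_0$, $\gamma_t=E/P_t$, $\gamma^*=E/P^*$, and apply conservation of probability with the Jacobian $E/P_t^2$, which combines with the CIR prefactor to give $c\,v^{q/2+2}u^{-q/2}$ exactly as you show. The algebraic identity $c_\gamma\cdot E/P_t^2=\bigl[(1-e^{-\alpha t})/H\bigr]V^2$ is precisely the bookkeeping step the paper performs, so there is no gap.
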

The proof of Proposition \ref{proptranden} is given in Appendix \ref{proof-transition-prob}. 

Let us note that the exponent $q$ remains positive for $P^* \leq H$, which is equivalent to $2\alpha \gamma^* \geq \psi^2$.
This is nothing but the condition that guarantees that the CIR process never touches $0$,
as discussed in section \ref{bhrgrw}. In contrast, for $P^* > H$, or $2\alpha \gamma^* < \psi^2$,
$q$ becomes negative, and the exponent $\frac{q}{2}+2$ becomes smaller than $2$, implying 
the mathematical divergence of the mean of the prices, as we shall discuss below.

\subsection{Stationary distribution of the price \label{rhjun4euj3n}}
 
 The fact that the price process defined by Equation \eqref{E:Pb} is an ergodic Markovian process 
derives from the equivalent property for the CIR process $\gamma_t$.
The proof of ergodicity is presented in Appendix \ref{A:ergodic}.  
There thus exists a stationary distribution $\pi(P)$, which is the long-time limiting distribution of the transition distribution (\ref{E:traden}).

\begin{thm}\label{prop:stableden}
Let us consider a price process characterised by the stochastic differential equation (SDE) \eqref{E:Pb}. 
Irrespective of whether $H \geq P^*$ or $H< P^*$  and for an arbitrary $P_0$, the invariant probability density of the price is given by      
\begin{equation}\label{invariantdentsiy}
	\pi(P)={H^{\mu^*} \over \Gamma \left(\mu^*\right)}e^{-\frac{H}{P}}P^{-(1+\mu^*)}~,~~{\rm with}~\mu^*=H/P^*= 2\alpha\gamma^*/\psi^2~.
\end{equation} 
The characteristic function of the price distribution is correspondingly
\begin{equation}
	\varphi_P(t)=\E(e^{it P})=\dfrac{2 H^{\frac{H}{2P^*}}}{\Gamma\left(\frac{H}{P^*}\right)}(-i\,t)^{\frac{H}{2P^*}}K_{\frac{H}{P^*}}(2\sqrt{H}\sqrt{-i\,t})~,
\end{equation}
where $K_m(\cdot)$ denotes the modified Bessel function of the first kind order $m$ (see \cite{cfIG} for the general derivation).
\end{thm}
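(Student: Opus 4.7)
The proof rests on a single observation: since $P_t = E/\gamma_t$ with $E$ a constant and $\gamma_t$ a one-dimensional CIR process, the stationary law of $P_t$ is simply the pushforward of the stationary law of $\gamma_t$ under the bijection $\gamma\mapsto E/\gamma$. The main work is then to massage the resulting density into the form \eqref{invariantdentsiy} and to recognise its Fourier transform as the known characteristic function of the inverse Gamma distribution.

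First, I would quote the classical stationary distribution of the CIR process \eqref{E:bubbleCIRgamma}, which is a Gamma law with shape $\mu^*=2\alpha\gamma^*/\psi^2$ and rate $2\alpha/\psi^2$. Using $H=2\alpha E/\psi^2$, so that $2\alpha/\psi^2 = H/E$, this reads
\begin{equation*}
\pi_\gamma(\gamma) \;=\; \frac{(H/E)^{\mu^*}}{\Gamma(\mu^*)}\,\gamma^{\mu^*-1}\,e^{-(H/E)\gamma},\qquad \gamma>0.
\end{equation*}
Note that $\mu^*>0$ whenever $\alpha,\gamma^*,\psi>0$, so the Gamma law is well defined in both regimes of Section~\ref{bhrgrw}; in the recurrent explosive bubble regime $\mu^*\in(0,1)$, the density has an integrable singularity at the origin, which is exactly what will generate the heavy price tail below.

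Second, I would perform the deterministic change of variables $P=E/\gamma$ on the stationary density. The Jacobian is $|d\gamma/dP|=E/P^2$, so substituting $\gamma=E/P$ gives
\begin{equation*}
\pi(P) \;=\; \frac{(H/E)^{\mu^*}}{\Gamma(\mu^*)}\Bigl(\frac{E}{P}\Bigr)^{\mu^*-1} e^{-H/P}\,\frac{E}{P^2} \;=\; \frac{H^{\mu^*}}{\Gamma(\mu^*)}\,e^{-H/P}\,P^{-(1+\mu^*)},
\end{equation*}
which is precisely \eqref{invariantdentsiy}, the inverse Gamma density with shape $\mu^*$ and scale $H$. For the characteristic function, I would then appeal to the standard derivation: after the substitution $u=H/P$ in $\int_0^\infty e^{itP}\pi(P)\,dP$, one recovers the classical integral representation of the modified Bessel function $K_{\mu^*}$, and the result quoted in the proposition then follows. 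The computation is given in \cite{cfIG}, which I would cite directly rather than reproduce.

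The main obstacle is not computational but conceptual: to justify identifying the $t\to\infty$ limit of \eqref{E:traden} with the pushforward of the CIR invariant measure, one needs ergodicity of $P_t$ from an arbitrary starting value $P_0$. This is where the regime distinction matters, because in the range $P^*>H$ (equivalently $2\alpha\gamma^*\le\psi^2$) the CIR process reaches $0$ with positive probability and must be treated with the instantaneously reflecting boundary, so the diffusion for $P_t$ has stochastic finite-time singularities as discussed in Section~\ref{bhrgrw}. I would handle this by invoking the ergodicity proof of Appendix~\ref{A:ergodic} together with the standard fact that the Gamma law remains the unique invariant distribution of CIR under the reflecting boundary condition; combined with continuity of $\gamma\mapsto E/\gamma$ on $(0,\infty)$, this justifies the pushforward argument uniformly in both regimes and for any $P_0>0$.
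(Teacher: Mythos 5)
Your proof is correct, but it takes a somewhat different route from the paper's. You push forward the classical Gamma stationary law of the CIR process (shape $\mu^*=2\alpha\gamma^*/\psi^2$, rate $2\alpha/\psi^2=H/E$) through the deterministic map $\gamma\mapsto E/\gamma$, with the Jacobian $E/P^2$, and land directly on the inverse-Gamma density \eqref{invariantdentsiy}; the characteristic function is then quoted from \cite{cfIG}, exactly as the paper itself does. The paper instead starts from its own explicit price transition density \eqref{E:traden} of Proposition \ref{proptranden}, expands the modified Bessel function $I_q$ as a power series, and takes the limit $t\to\infty$, in which $c\to 1/H$, $v\to H/P$ and $u\to 0$ so that only the $n=0$ term of the series survives, recovering \eqref{invariantdentsiy}. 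The two computations are essentially the same operation performed in opposite order (limit of a pushforward versus pushforward of a limit): your version is shorter and avoids the Bessel-series manipulation, at the price of importing the standard CIR stationary Gamma law as an external fact, whereas the paper's version is self-contained given Proposition \ref{proptranden}. Your attention to the regime $P^*>H$ (Feller condition violated, reflecting origin) and the appeal to the ergodicity argument of Appendix \ref{A:ergodic} to identify the limit law as the unique invariant distribution is the right justification and matches the paper's implicit reliance on the same ergodicity result; the paper's limit argument handles both regimes uniformly because the transition density \eqref{E:traden} is valid in both, so no further change is needed in either approach.
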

The derivation of expression (\ref{invariantdentsiy}) is given in Appendix \ref{proof-stablede}.

Expression (\ref{invariantdentsiy}) shows that the probability density of the price has a power tail $\pi(P) \simeq 1/P^{1+\mu^*}$,
with a tail exponent $\mu^*:=H/P^*= 2\alpha\gamma^*/\psi^2$.  
The non-explosive nonlinear regime ($2\alpha \gamma^* > \psi^2; H>P^*$) corresponds to $\mu^*>1$ so that the mean of the price is exists.
In contrast, in the recurrent explosive bubble regime ($2\alpha \gamma^* \leq \psi^2; H<P^*$), $\mu^*\leq1$ and thus the mean of
the price is infinite. Thus, the transition from the non-explosive nonlinear regime to the recurrent explosive bubble regime
is mirrored by the transition from a price process with finite to infinite mean (see \citep{Sornette_CPNS04}
for a classification of power laws according to the values of the tail exponent $\mu^*$ and its consequences).

For arbitrary very long sequences of $N$ prices $\{\,P_{t_0},P_{t_1},\cdots,P_{t_N}\,\}$ at time-points $\{\,t_1,t_2\dots,t_N\,\}$,
these prices are asymptotically and identically distributed random variables with distribution density (\ref{invariantdentsiy}).
Thanks to the property of ergodicity, this stationary distribution of prices along time can be converted to an ensemble distribution 
or cross-sectional distribution (at a fixed time)  across all possible parallel price realisations.
Applied to a stock market of $N$ stocks considered statistically equivalent,  
the asymptotic cross-sectional tail dependence of stock prices can be described by a power-law distribution with tail exponent $\mu^*$. 
\cite{kaizoji2006} pioneered the diagnostic of the existence of global market bubbles via the transition of the power law tail exponent $\mu^*$
for the cross-sectional distribution of stock prices
from values larger than $1$ to a value converging to $1$ close to the end of the Japanese Internet bubble.
Within our model framework, the maturation of the Internet bubble can be interpreted as decreases of $\alpha, \gamma^*$ and/or increase of $\psi$
such that $P^*$ increased to overpass a decreasing $H$.
Given the empirical evidence that, in most financial bubbles, the crash follows a period of lower volatility
\citep{SornetteCauwels2018}, we infer that $\psi$ does not increase significantly during a bubble regime.
We thus deduce that it is $\alpha$ and/or $\gamma^*$ that decreased during the maturation of the Internet bubble.
In summary, the price process characterised by equations (\ref{pricing1}) and (\ref{E:bubbleCIRgamma}),
or equivalently by equation \eqref{E:Pb}, rationalises the curious observation of \cite{kaizoji2006} that had
not until now found a theoretical explanation.

\subsection{Conditional moments of price and returns}\label{S:condevr}

Based on the explicit conditional transition density (\ref{E:traden}), the first and second conditional moments of the price and cumulative return can be easily calculated. 

\begin{thm}\label{rhj243bq}
Defining the cumulative return $R_t=\dfrac{P_t}{P_0}$,
for the price process characterised by equations (\ref{pricing1}) and (\ref{E:bubbleCIRgamma}) and
for $H\geq P^*$, the conditional expectation of $R_t$ over the time interval from $0$ to $t$ is given by
\begin{equation}\label{E:condret}
	\E(R_t\mid P_0)=\dfrac{e^{-u}}{c\,q\,P_0}\cdot{}_1F_1(q,q+1,u)~,
\end{equation}
where the parameters $H$,$c$,$u$,$q$ are the same as in Proposition \ref{proptranden}, and ${}_1F_1(r,s,x)$ is a Kummer confluent hypergeometric function defined as
\begin{align*}
	{}_1F_1(r,s,x)=\dfrac{\Gamma(s)}{\Gamma(s-r)\Gamma(r)}\int_0^1 e^{xz}z^{r-1}(1-z)^{s-r-1}\dd z=\dfrac{\Gamma(s)}{\Gamma(r)}\sum_{n=0}^{\infty}\dfrac{\Gamma(n+r)}{\Gamma(n+1)\Gamma(n+s)}\cdot u^n~.
\end{align*}
\end{thm}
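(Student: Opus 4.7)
The plan is to reduce the computation to $\E(P_t\mid P_0)$ via $\E(R_t\mid P_0)=\E(P_t\mid P_0)/P_0$, and then integrate $P_t$ against the explicit transition density (\ref{E:traden}) supplied by Proposition \ref{proptranden}. First, I would perform the change of variable from $P_t$ to $v$, using the definitions (\ref{trhtgbq1})--(\ref{trhtgbq4}). Since $v = H/[P_t(1-e^{-\alpha t})]$ and $1/c = H/(1-e^{-\alpha t})$, one gets the clean relation $P_t = 1/(cv)$ and $\dd P_t = -\dd v/(cv^{2})$. Substituting into $\int_0^\infty P_t\, f(P_t,t\mid P_0)\,\dd P_t$ collapses the explicit $c$'s and powers of $v$, leaving
\begin{equation*}
\E(P_t\mid P_0) \;=\; \frac{1}{c}\int_0^\infty e^{-(u+v)}\,v^{q/2-1}\,u^{-q/2}\,I_q\!\bigl(2\sqrt{uv}\bigr)\,\dd v.
\end{equation*}

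Next, I would expand the Bessel function using its defining power series
\begin{equation*}
u^{-q/2} I_q(2\sqrt{uv}) \;=\; \sum_{k=0}^{\infty}\frac{u^{k}\, v^{k+q/2}}{k!\,\Gamma(k+q+1)},
\end{equation*}
so that the integrand becomes $e^{-u}$ times $e^{-v}v^{k+q-1}$ summed over $k$. Interchanging sum and integral (justified by nonnegativity and monotone convergence, provided $q>0$, i.e.\ $H>P^\ast$, which matches the condition stated in the proposition and ensures the $v$-integral converges near $v=0$), each term reduces to the Gamma integral $\int_0^\infty v^{k+q-1}e^{-v}\,\dd v = \Gamma(k+q)$. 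After cancellation this yields
\begin{equation*}
\E(P_t\mid P_0) \;=\; \frac{e^{-u}}{c}\sum_{k=0}^{\infty}\frac{\Gamma(k+q)}{\Gamma(k+q+1)}\,\frac{u^{k}}{k!} \;=\; \frac{e^{-u}}{c}\sum_{k=0}^{\infty}\frac{1}{k+q}\,\frac{u^{k}}{k!}.
\end{equation*}

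The final step is to identify this series with the Kummer function. Using $(q)_n/(q+1)_n = q/(q+n)$, one recognises $\sum_{k\geq 0} u^{k}/[k!(k+q)] = q^{-1}\,{}_1F_1(q,q+1,u)$, which after dividing by $P_0$ delivers the claimed formula (\ref{E:condret}). The bulk of the work is routine; the only genuine subtlety is the convergence/interchange step, which is precisely where the hypothesis $H\geq P^\ast$ (equivalently $q\geq 0$, with $q>0$ strictly for the integral to be well-defined) enters. I would end by remarking that this also explains the regime restriction: in the recurrent explosive bubble regime $H<P^\ast$ the integral diverges, consistent with the infinite unconditional mean of $P_t$ found in Section~\ref{rhjun4euj3n}.
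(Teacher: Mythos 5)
Your proposal is correct and follows essentially the same route as the paper's own proof in Appendix \ref{jiu5mori9}: the change of variable $P_t = 1/(cv)$, the series expansion of $I_q$, termwise integration giving $\Gamma(k+q)$, and the identification of the resulting series with $q^{-1}\,{}_1F_1(q,q+1,u)$. Your added remark that strict positivity $q>0$ (i.e.\ $H>P^*$) is what makes the $v$-integral converge near $v=0$ is a valid refinement of the stated hypothesis $H\geq P^*$.
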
 
The proof of this result is given in Appendix \ref{jiu5mori9}.  

\begin{thm}\label{wrhbtbqbq}
For the bubble process characterised by Equations (\ref{pricing1}) and (\ref{E:bubbleCIRgamma}), for $H\geq P^*$, 
the conditional variance of $R_t$ is given by
\begin{equation}\label{condvar}
\mathbb{V}ar(R_t\mid P_0)=\dfrac{e^{-u}}{q c^2P_0^2}\left[\dfrac{1}{q-1}\cdot {}_1F_1(q-1,q+1,u)-\dfrac{1}{q e^{u}}\cdot{}_1F_1(q,q+1,u)^2\right].
\end{equation}	
\end{thm}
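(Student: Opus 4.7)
The plan is to compute the conditional variance via the standard identity $\mathbb{V}\mathrm{ar}(R_t\mid P_0) = \E(R_t^2\mid P_0) - [\E(R_t\mid P_0)]^2$. The second term is already supplied by Proposition \ref{rhj243bq}, so the only real work is to evaluate $\E(R_t^2\mid P_0) = \E(P_t^2\mid P_0)/P_0^2$ using the explicit transition density from Proposition \ref{proptranden}.

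First I would perform a change of variables to clean up the integral. Writing $P_t = 1/(cv)$ (which follows from the definitions of $c$ and $v$ in \eqref{trhtgbq1}--\eqref{trhtgbq3}), one checks that $f(P_t,t\mid P_0)\,\dd P_t = e^{-(u+v)} v^{q/2} u^{-q/2} I_q(2\sqrt{uv})\,\dd v$, i.e., the noncentral chi-square-type law naturally associated with the CIR process. Since $P_t^2 = 1/(c^2 v^2)$, the expectation reduces to
\begin{equation*}
\E(P_t^2\mid P_0) \;=\; \frac{e^{-u} u^{-q/2}}{c^2}\int_0^\infty v^{q/2-2} e^{-v} I_q(2\sqrt{uv})\,\dd v .
\end{equation*}

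Next I would substitute the series representation $I_q(2\sqrt{uv}) = \sum_{k\ge 0}(uv)^{k+q/2}/[k!\,\Gamma(k+q+1)]$, interchange sum and integral (justified by absolute convergence when $q>1$, which is the finite-variance regime one needs anyway), and recognise each resulting integral as a gamma function $\Gamma(k+q-1)$. Using $\Gamma(k+q+1) = (k+q)(k+q-1)\Gamma(k+q-1)$ collapses the ratio of gamma factors to $1/[(k+q)(k+q-1)]$, so that
\begin{equation*}
\E(P_t^2\mid P_0) \;=\; \frac{e^{-u}}{c^2}\sum_{k=0}^\infty \frac{u^k}{k!(k+q)(k+q-1)} .
\end{equation*}
Comparing this to the integral/series representation of ${}_1F_1$ given in Proposition \ref{rhj243bq}, one sees that the sum equals $\,{}_1F_1(q-1,q+1,u)/[q(q-1)]$, yielding $\E(R_t^2\mid P_0) = \dfrac{e^{-u}}{c^2 q(q-1) P_0^2}\,{}_1F_1(q-1,q+1,u)$.

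Finally, subtracting the square of \eqref{E:condret} and factoring $e^{-u}/(c^2 q P_0^2)$ out of both terms produces exactly \eqref{condvar}. The main obstacle is the series-manipulation step: the dual appearance of $\Gamma(k+q+1)$ and $\Gamma(k+q-1)$ has to be matched precisely to the parameters of ${}_1F_1(q-1,q+1,u)$, and one must be careful about the tacit hypothesis $q>1$ (equivalently $\mu^*>2$) needed for the $v$-integral near $0$ to converge. Everything else is bookkeeping on the Bessel series and the explicit transition law already furnished by Proposition \ref{proptranden}.
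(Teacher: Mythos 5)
Your proposal follows essentially the same route as the paper's own proof in Appendix \ref{A:condvar}: change variables from $P_t$ to $v=1/(cP_t)$ in the transition density, expand $I_q(2\sqrt{uv})$ as a series, integrate term by term to get $\Gamma(n+q-1)$ factors, recognise the Kummer series ${}_1F_1(q-1,q+1,u)$, and then apply $\mathbb{V}ar(R_t\mid P_0)=\frac{1}{P_0^2}[\E(P_t^2\mid P_0)-\E(P_t\mid P_0)^2]$. Your side remark that the $v$-integral actually forces $q>1$ (i.e., $H>2P^*$, $\mu^*>2$) rather than merely $H\geq P^*$ is a correct and worthwhile refinement of the stated hypothesis, consistent with the paper's later observation that the stationary variance diverges when $H\leq 2P^*$.
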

The derivation of this result is given in Appendix \ref{A:condvar}. In a nutshell, 
we have
\begin{equation}\label{condp2}
\E(P_t^2\mid P_0)=\dfrac{e^{-u}}{c^2 q(q-1)}\cdot{}_1F_1(q-1,q+1,u).
\end{equation}
Using $\mathbb{V}\text{ar}(R_t\mid P_0):=\E(R^2_t\mid P_0)-\E(R_t\mid P_0)^2=\dfrac{1}{P_0^2}~[~\E(P^2_t\mid P_0)-\E(P_t\mid P_0)^2~]$,
this allows one to obtain (\ref{condvar}).

\begin{thm}\label{wrhwth2tybtbqbq}
Using expression (\ref{invariantdentsiy}) for the stationary distribution $\pi(P)$, we obtain
\begin{equation}
\E (R_{\infty}\mid P_0)=\dfrac{H\cdot\Gamma\left(\frac{H}{P^*}-1\right)}{P_0\cdot\Gamma\left(\frac{H}{P^*}\right)}=\dfrac{P^*}{P_0}\cdot\left(1-\dfrac{P^*}{H}\right)^{-1}~.\label{E:condlongR}
\end{equation}
and
\begin{align}\label{condsteadyvar}
\mathbb{V}ar(R_{\infty}\mid P_0) &=	\E (R_{\infty}\mid P_0)-\E (R_{\infty}\mid P_0)^2\notag\\ &=\dfrac{P^{*2}}{P^2_0}\cdot\left(1-\dfrac{P^*}{H}\right)^{-1}\left[\left(1-\dfrac{2P^*}{H}\right)^{-1}-\left(1-\dfrac{P^*}{H}\right)^{-1}\right]~.	
\end{align}
 \end{thm}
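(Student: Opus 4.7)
My plan is to reduce both identities to moment computations against the stationary density $\pi(P)$ given in Proposition \ref{prop:stableden}, noting that $R_\infty = P_\infty/P_0$ and $P_0$ is deterministic, so
\[
\E(R_\infty\mid P_0)=\frac{1}{P_0}\int_0^\infty P\,\pi(P)\,\dd P,\qquad \E(R_\infty^2\mid P_0)=\frac{1}{P_0^2}\int_0^\infty P^2\,\pi(P)\,\dd P.
\]
Since $\pi(P)=\tfrac{H^{\mu^*}}{\Gamma(\mu^*)}e^{-H/P}P^{-(1+\mu^*)}$ with $\mu^*=H/P^*$ is exactly an inverse-gamma density with shape $\mu^*$ and scale $H$, I expect both moments to collapse to ratios of Gamma functions.

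The concrete steps are as follows. First I would perform the change of variable $y=H/P$ (hence $\dd P=-H/y^2\,\dd y$) in $\int_0^\infty P^k \pi(P)\,\dd P$ for $k=1,2$, which maps the integral onto $\tfrac{H^k}{\Gamma(\mu^*)}\int_0^\infty y^{\mu^*-k-1}e^{-y}\dd y=\tfrac{H^k\,\Gamma(\mu^*-k)}{\Gamma(\mu^*)}$. For $k=1$ this yields $\E(P_\infty)=H\,\Gamma(\mu^*-1)/\Gamma(\mu^*)=H/(\mu^*-1)$, and substituting $\mu^*=H/P^*$ produces
\[
\E(P_\infty)=\frac{HP^*}{H-P^*}=P^*\left(1-\frac{P^*}{H}\right)^{-1},
\]
which after division by $P_0$ gives formula \eqref{E:condlongR}. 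For $k=2$ the analogous step delivers $\E(P_\infty^2)=H^2/[(\mu^*-1)(\mu^*-2)]=P^{*2}(1-P^*/H)^{-1}(1-2P^*/H)^{-1}$.

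Finally, I would assemble $\mathbb{V}ar(R_\infty\mid P_0)=[\E(P_\infty^2)-\E(P_\infty)^2]/P_0^2$ and factor the common $P^{*2}(1-P^*/H)^{-1}/P_0^2$ out of the difference, leaving the bracketed term $(1-2P^*/H)^{-1}-(1-P^*/H)^{-1}$ as required by \eqref{condsteadyvar}. I would also flag the integrability conditions: the first-moment formula requires $\mu^*>1$ (i.e.\ $H>P^*$) and the variance formula requires $\mu^*>2$ (i.e.\ $H>2P^*$), both of which are implicit in working within the non-explosive nonlinear regime identified in Section \ref{bhrgrw}; outside these ranges the corresponding moments diverge, consistent with the power-law tail $\pi(P)\sim P^{-(1+\mu^*)}$ discussed in Section \ref{rhjun4euj3n}.

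The computation is essentially routine once the inverse-gamma structure is recognized, so there is no serious obstacle. The only subtle point is the moment-existence boundary: one must be careful to state the result only in the regime $H>2P^*$ for the variance and $H>P^*$ for the mean, and to interpret it correctly on the boundary $H=2P^*$ (where the variance formula blows up). Writing this caveat explicitly avoids the apparent discrepancy between the statement, which allows both $H\geq P^*$ and $H<P^*$ in proposition \ref{prop:stableden}, and the finiteness requirement here.
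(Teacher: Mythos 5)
Your proposal is correct and follows essentially the same route as the paper's own proof: both compute $\E(P_\infty^k)$ for $k=1,2$ directly against the stationary density via the substitution $z=H/P$, reducing to Gamma integrals $\Gamma(H/P^*-k)/\Gamma(H/P^*)$, and then assemble the variance. Your explicit identification of $\pi(P)$ as an inverse-gamma law and your statement of the moment-existence conditions ($H>P^*$ for the mean, $H>2P^*$ for the variance) merely make precise what the paper relegates to the discussion immediately following the proposition.
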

 The proof of these results are given in Appendix \ref{jiu52th1mori9}.  

These two expressions (\ref{E:condlongR}) and (\ref{condsteadyvar}) are valid only for $H>P^*$. 
For $H\leq P^*$, $\int_0^{\infty} e^{-z}~z^{\frac{H}{P^*}-2}~\dd z \to +\infty$ and $\E (R_{\infty}\mid P_0)=\infty$. 
This is just another way to retrieve the divergence of the mean of a random variable distributed 
according to a power law tail with exponent $\mu^* <1$ \citep{Sornette_CPNS04} as discussed above.
The long-term expectation of cumulative returns does not exist even if the long-term stationary density exists. 
The mechanism is simply that, as time passes, larger and larger realisations of $R_t$ are sampled so that the 
mean does not converge and diverges stochastically. Similarly, when $H\leq 2P^*$, 
the variance of the long-term return does not exist, as the integral $\int_0^{\infty} e^{-z}~z^{\frac{H}{P^*}-3}~\dd z \to \infty$.

\subsection{Interpretation of expression (\ref{E:condlongR}), emergent risk premium and the equity premium puzzle }

In the non-explosive nonlinear regime $2\alpha \gamma^* > \psi^2$, the discount rate $\gamma_t$ is fluctuating 
around $\gamma^*$, mean-reverting stochastically around it so that $\E (\gamma_t) = \gamma^*$.
A naive investor will deduce that the intrinsic value is then given by $P^*=E/\gamma^*$, which can be argued to 
be the equilibrium price representing the final converged overall consensus on earning yield, and thus
the anticipated long-term return should be $\E(R^*\mid P_0)=\E (P^*/P_0\mid P_0)=P^*/P_0 $. 
Another way to formulate this is to naively surmise from 
expression (\ref{pricing1}) that the price should be characterised by a mean value $\E (P_t)=P^*=E/\gamma^*$.
However, as is well-known, the mean of the inverse is not the inverse of the mean as expression (\ref{E:condlongR})
exemplifies. Indeed, the nonlinear (inverse) dependence of the price $P_t$ on the discount rate $\gamma_t$
implies that $\E (P_t)= \phi P^*$, showing the existence of an amplification factor 
\begin{equation}\label{E:phidef}
	\phi=\left(1-\dfrac{P^*}{H}\right)^{-1}=\left(1-\dfrac{\psi^2 }{2\,\alpha \gamma^*}\right)^{-1}~,~~{\rm for}~  P^*<H~.
\end{equation}
Thus, the expected stationary return is equal to the naively anticipated long-term return multiplied by the factor $\phi$.
This leads naturally to define an ``emergent risk premium rate'' as
\begin{equation}
\rho_e := \ln\phi =-\ln \left(1-{P^*}/{H}\right)~\in (0,\infty)~~{\rm for}~ P^*<H.
\label{tyheyb2gvq}
\end{equation}
To first order in $\dfrac{P^*}{H}$, we have $\rho_e \approx  \dfrac{P^*}{H} = \dfrac{\psi^2 }{2\,\alpha \gamma^*}$.

The volatility amplitude $\psi$ of $\gamma_t$ is the main positive contribution to $\phi$ as seen from 
\begin{align*}
	\dfrac{\partial}{\partial \psi} \ln \phi &=\dfrac{\partial }{\partial \psi}\ln\left(1-\frac{P^*}{H}\right)^{-1}=-\dfrac{\partial }{\partial \psi}\ln\left(1-\dfrac{\psi^2 P^*}{2\alpha E}\right)= \dfrac{\phi \Psi}{\alpha \gamma^*} >0 ~.
%\text{Similarly,~} \dfrac{\dd}{\dd \psi} \ln \phi^{\text{crash}}&=\left(1-\dfrac{\psi^2 P_f}{2\alpha E}\right)^{-1}2 P_f\psi
\end{align*}  
%Since $P^*\gg P_f$, one could have $\dfrac{\dd}{\dd \psi} \ln \phi^{\text{bubble}}> \dfrac{\dd}{\dd \psi} \ln \phi^{\text{crash}}$.
Figure \ref{F:riskpermium_bubble} shows the dependence of the cumulative return $\E(R_t\mid P_0) = \E(P_t\mid P_0) / P_0$
(equal to the normalised price) as a function of time for several values of $\psi$, as obtained from Equation (\ref{E:condret}).
The larger $\psi$ is, the steeper is the price increase at early times, the larger is the price acceleration at early times (see below 
our discussion of the super-exponential transient regime) and the larger the asymptotic price at long times.
Thus, the larger the amplitude $\psi$ of the fluctuations of $\gamma_t$, the larger is the emergent risk premium rate.

%FIGURE 1
\begin{figure}
	\includegraphics[width=15cm]{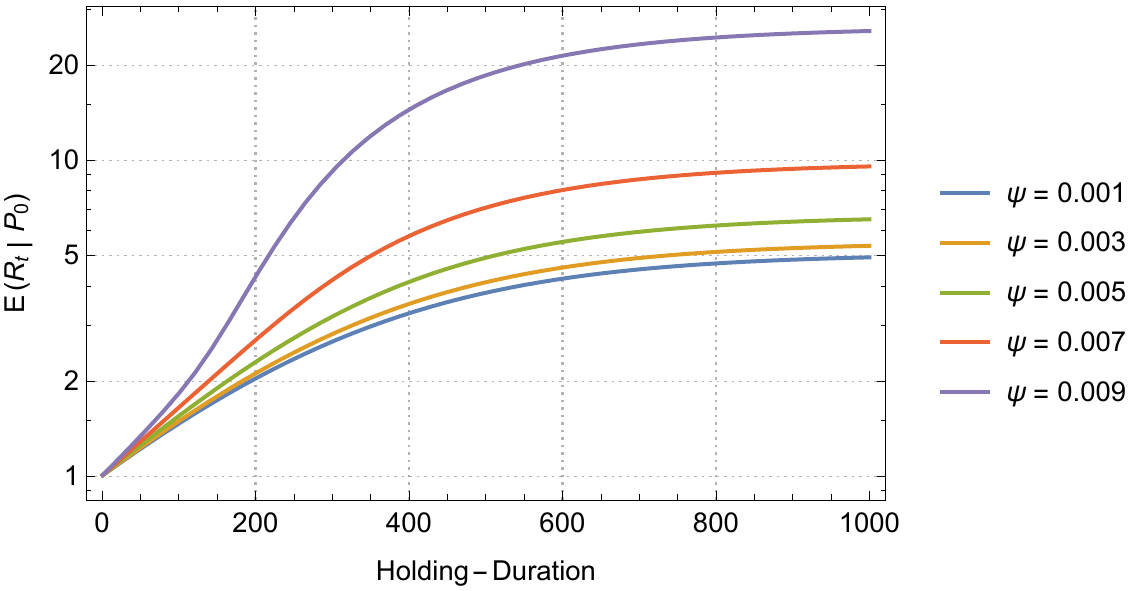}
\caption{Cumulative expected return $\E(R_t\mid P_0) = \E(P_t\mid P_0) / P_0$ as a function of the holding duration for different values of the amplitude $\psi$ 
of the volatility of the $\gamma_t$ process.  All trajectories are generated by setting $E=0.1$, $P^*=10$, $P_0=2$, and $\alpha=0.005$ 
but with five different $\psi$ from $0.001$ to $0.009$. The smallest values of $\psi$ is chosen to let $\psi\sqrt{P/E}$ 
correspond to the typical volatility $0.5\%$ in a standard financial market, considering the typical value of the P/E ratio $=25$.
Correspondingly, $H={2\alpha E}/{\psi^2}$ varies from $12.3$ to $1000$ and $\E(R_{\infty}\mid P_0)= \phi P^*/P_0$ takes values from $5.05$ to $26.84$. }
\label{F:riskpermium_bubble}
\end{figure}    

The fact that $\gamma_t$ is not fixed but fluctuates around $\gamma^*$ leads to an increased long-term return equal to $\ln\phi$
for the investor. This mechanism provides a possible concise description of the
equity premium puzzle \citep{MehraPres85,MehraPres03,Kocherla96}, 
namely the observation that the compensation for holding risky stocks is larger than predicted by
a large class of economic models. Indeed, our simple framework suggests that models that assume constant discount rates 
tend to underestimate future price growth compared with models that account for fluctuating mean-reverting discount rates.

Note that the mechanism for the increased future price growth can be intuitively understood
as roughly resulting from combining different
discounting factors as in (\ref{trhwbgvq}) and (\ref{trhwbgwt2vq}). Indeed, taking the expectation of expression (\ref{trhwbgwt2vq})
yields $\E(P) = {E \over 1- \E(x)}$, assuming that the $x_n$'s are i.i.d. random variables.
From the definition $x := {1 \over 1+\gamma}$ (see (\ref{trhwbgwt2vq})), $x$ is a convex function of $\gamma \in [0, 1]$
and, by Jensen inequality, ${E \over 1- \E(x)}  <  \E\left({E \over 1- x} \right)$.
The gap $\E\left({E \over 1- x} \right) -  {E \over 1- \E(x)}$ is the analog of $\E(P_t) - P^* = (\phi-1)P^*$.
The approximate correspondence between model (\ref{pricing1}) with (\ref{CIRgamma}) 
and model (\ref{trhwbgwt2vq}) is further reinforced by noting that both lead to a power law tail 
for the distribution of prices and returns. For model (\ref{pricing1}), this is shown by Proposition \ref{prop:stableden}
and expression (\ref{supercond}). For model (\ref{trhwbgwt2vq}), this derives from the fact that it is 
a special case \citep{deCalanetal85} of the general of class of Kesten processes with multiplicative and additive stochastic components
\citep{Kesten73,MalSorkes01,LuxSor02,Buraczewski16}.

\subsection{Transient super-exponential price dynamics}

Even in the non-explosive nonlinear regime $2\alpha \gamma^* > \psi^2$ ($P^* < H$), the price exhibits transient
growth regimes that are faster than exponential. A first visual indication is provided in figure \ref{F:riskpermium_bubble},
where the normalised expected price trajectory, plotted with a logarithmic scale as
a function of linear time,  exhibits a clearly visible transient convexity up to $t \approx 200$ for $\psi = 0.009$.
Recall that, in a linear-log plot (linear-in-time with logarithmic-in-price scales), an exponential growth is qualified as a straight line.
A convex curve in such linear-log plot is growing faster than exponential with the instantaneous growth rate (the local tangent to the curve)
growing itself. We are interested in this pattern because it has been previously argued to be a characteristic feature of a financial bubble
\citep{JohSorBr10,SornetteCauwelsbub2015,sorwhy17,SchSor20,Ardila-AlvarezSor21}, since it can only be transient
and has to correct to a long term exponential growth with rate dictated by long-term economic growth.

The existence of transient strongly increasing price trajectories reminiscent of financial bubbles can be observed
in figure \ref{F: price_sim_facet}, which shows simulated price trajectories for the same parameters as 
in figure \ref{F:riskpermium_bubble}. For each value of $\psi$, 10 simulated trajectories are displayed with a specific colour.
For better visibility, the top panel depicts the 20 price trajectories for the two largest $\psi$ values, while the bottom panel
corresponds to the three smallest values. With again the price plotted along the y-axis in logarithmic scale, one can observe 
transient growth regimes that are reminiscent of the prices observed during financial bubbles with super-exponential growth.

%FIGURE 2
\begin{figure}
	\includegraphics[width=15cm]{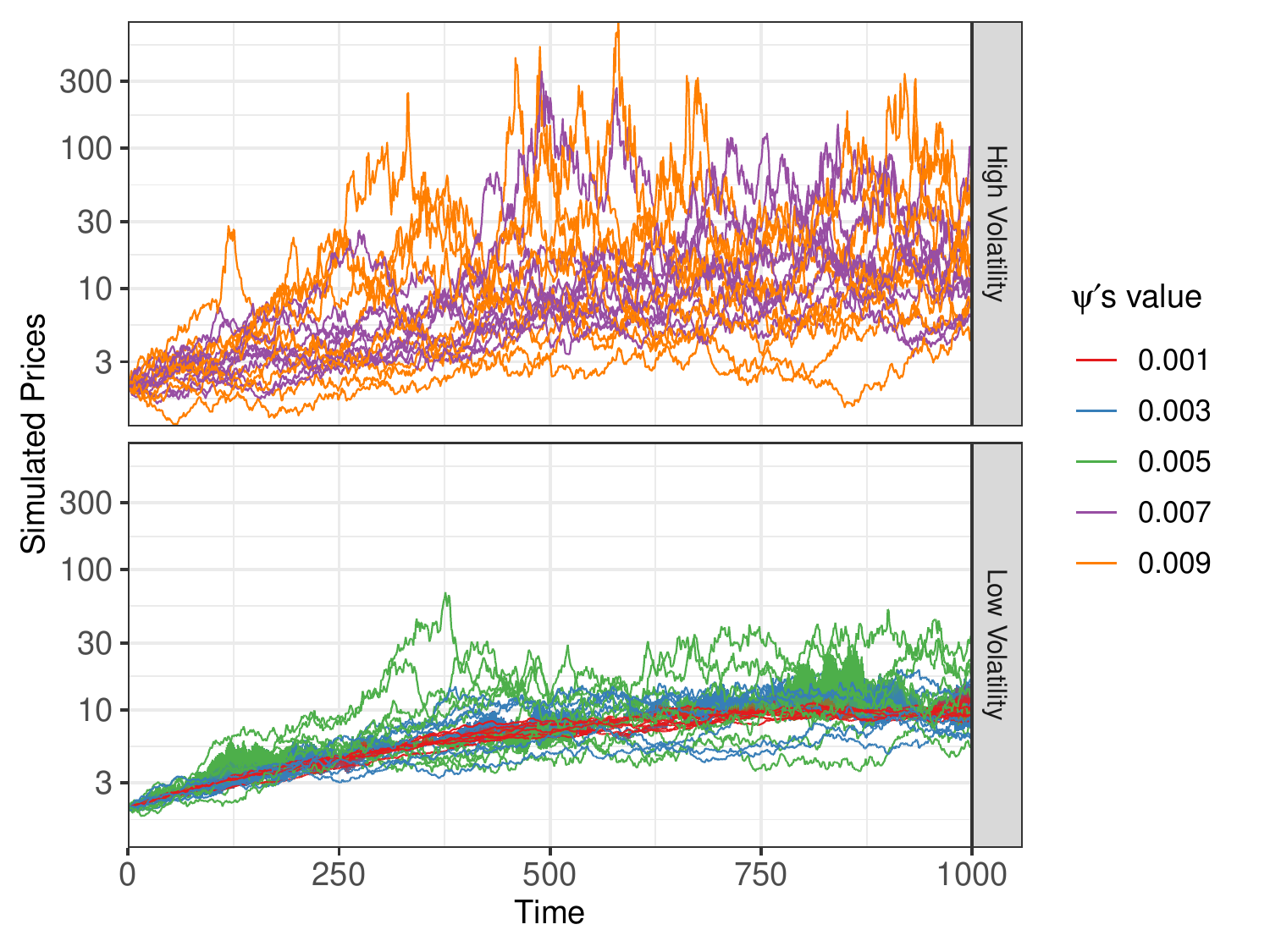}
\caption{Simulated  price trajectories for the same parameters as in figure \ref{F:riskpermium_bubble}  and for the five values of $\psi$
given in the inset.  For each value of $\psi$, 10 simulated trajectories are displayed with a specific colour indicated in the inset. The 20 price
trajectories corresponding to the two largest $\psi$ values are shown in the top panel, and the 30 price trajectories for the three lowest
$\psi$ values are in the lower panel. Note the logarithmic scale used for the y-axis.}
\label{F: price_sim_facet}
\end{figure}    

Figure \ref{F:super_price} is similar to figure \ref{F:riskpermium_bubble} showing  $\E(P_t\mid P_0) / P_0$
as a function of time but for different initial price $P_0$. Although the five average price trajectories
converge to the same long-term expected price level $\phi P^*\approx 53.5$, they first exhibit a convex growth shape
in the linear-log plot, corresponding to a super-exponential transient regime, as previously documented.
This convex (super-exponential) growth then transitions to a concave growth, associated with a slowdown of the price
increase until convergence to the steady-state level. The smaller the initial price $P_0$, the longer is the duration
of the super-exponential regime.
%{\bf didier: I think that it should be possible to rescale all the curves onto a master curve
%with a suitable translation and scaling of the abscissa (time) and ordinate. Indeed, all the curves seem very similar
%up to a translation and dilation transformation.  This can be perhaps proved theoretically from the dynamical solutions?}
 
%FIGURE 3
\begin{figure}
	\includegraphics[width=15cm]{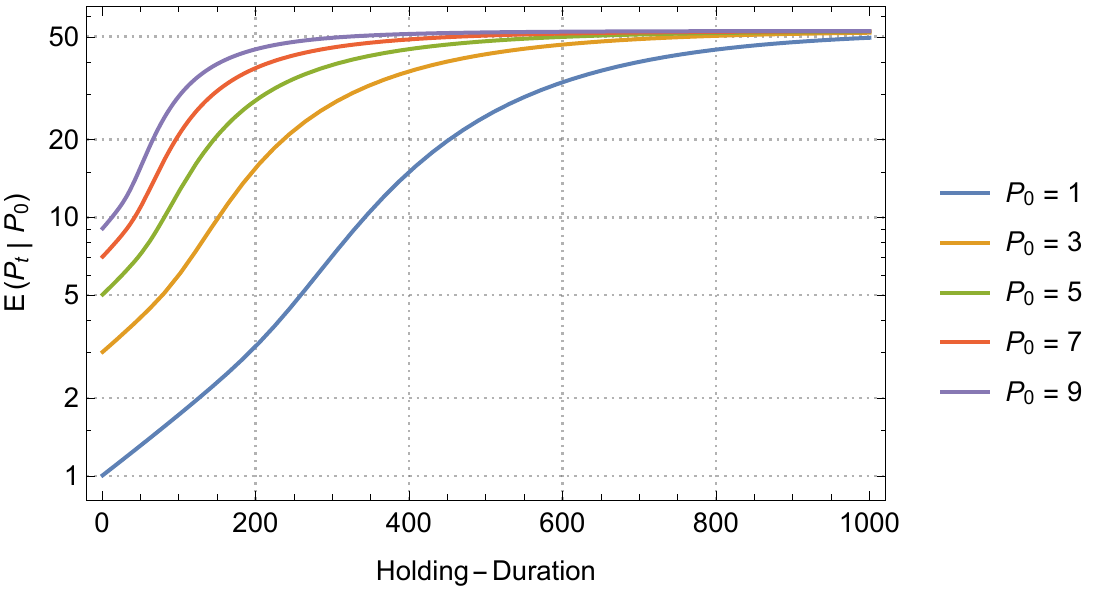}
\caption{The evolution of expected price as a function of the holding duration in a bubble process.
The parameters are $E=0.1, P^*=10, \alpha=0.005$ with $\psi=0.009$ corresponding to the largest values used in 
figures \ref{F:riskpermium_bubble} and \ref{F: price_sim_facet}. }
\label{F:super_price}
\end{figure}  

The following proposition specifies the conditions needed for the existence of the super-exponential growth regime
and characterises its duration. 

\begin{thm}\label{prop:super}
Let the price process be characterised by the SDE (\ref{E:Pb}) in the non-explosive nonlinear regime $2\alpha \gamma^* > \psi^2$ ($P^* < H$).
Given an initial price $P_0$, 
and with the definitions $q= \frac{H}{P^*}-1$ (\ref{trhtgbq4}), $H=\dfrac{2\alpha E}{\psi^2}$ and $P^* = E / \gamma^*$, $g(\alpha)=e^{\alpha}-1$, and
$
	\Omega=\dfrac{e^{\frac{H}{P_0\,g(\alpha) }}\left(-\frac{H}{P_0\,g(\alpha)}\right)^q}{\Gamma(q)-\Gamma(q,-\frac{H}{P_0\,g(\alpha)})}~,
$
under the condition that
\begin{align}\label{supercond}
	-{g(\alpha)\,e^{\alpha}}P_0\,\Omega^2+{[\,e^{\alpha}H+g(\alpha)(1+e^{\alpha}q)P_0\,]}\,\Omega-{(e^{\alpha}+1)H+P_0\,g(\alpha)(q-1)}>0~,
\end{align}
then the expected price persists in super-exponential growth from $t=0$ till at least $t=1$ and its duration $t_c$ is determined by finding the time 
that is solution of the following equation:
\begin{align}\label{solvingsupert}
	\dfrac{\partial^2}{\partial t^2}\left[\,-\dfrac{H}{P_0}\dfrac{1}{e^{\alpha t}-1}-\ln(1-e^{-\alpha t})+\ln\dfrac{H}{q}+\ln{}_1F_1\left(q,q+1,-\dfrac{H}{P_0}\dfrac{1}{e^{\alpha t}-1}\right)\,\right]=0. 
\end{align} 
The choice of the minimum duration of $1$ is arbitrary and is readily generalisable to any positive value.
\end{thm}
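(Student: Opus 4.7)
The plan is to work with $L(t) := \ln\E(P_t\mid P_0)$ and identify super-exponential growth with strict log-convexity $L''(t) > 0$: in a linear--log plot, pure exponential growth is a straight line ($L''=0$), so convexity means the instantaneous growth rate is itself increasing in time. Introducing the auxiliary function
\begin{equation*}
\Omega(t) \;:=\; \frac{q\,e^{u(t)}}{{}_1F_1(q,q+1,u(t))},
\end{equation*}
Proposition 5 can be rewritten as $\E(P_t\mid P_0) = 1/[c(t)\,\Omega(t)]$, so $L(t) = -\ln c(t) - \ln\Omega(t) = -\ln c(t) - u(t) - \ln q + \ln F(u(t))$ with $F(u) := {}_1F_1(q,q+1,u)$. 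This expression coincides (modulo an irrelevant additive constant) with the bracketed expression in (\ref{solvingsupert}), after applying Kummer's transformation ${}_1F_1(q,q+1,u) = e^u\,{}_1F_1(1,q+1,-u)$ together with the integral representation of ${}_1F_1(1,q+1,-u)$ in terms of the incomplete gamma function $\gamma(q,\cdot) = \Gamma(q) - \Gamma(q,\cdot)$, which reconciles the sign convention used inside the final argument of ${}_1F_1$ there. Hence (\ref{solvingsupert}) is precisely the equation $L''(t_c)=0$ characterising the end of the super-exponential regime.

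The heart of the argument is to recast $L''(1) > 0$ as (\ref{supercond}). At $t=1$, $u(1) = H/(P_0 g(\alpha))$, and the same Kummer-plus-incomplete-gamma identification reveals that the constant $\Omega$ in the statement coincides with $\Omega(1) = q e^{u(1)}/F(u(1))$. Differentiating $L$ twice by the chain rule yields
\begin{equation*}
L''(t) \;=\; -(\ln c)''(t) \;+\; u''(t)\Bigl[\frac{F'(u(t))}{F(u(t))} - 1\Bigr] \;+\; [u'(t)]^{2}\,(\ln F)''(u(t)),
\end{equation*}
and Kummer's differential equation $uF''(u) + (q+1-u)F'(u) - qF(u) = 0$ gives the identity
\begin{equation*}
(\ln F)''(u) \;=\; \frac{q}{u} \;-\; \frac{q+1-u}{u}\,\frac{F'(u)}{F(u)} \;-\; \Bigl(\frac{F'(u)}{F(u)}\Bigr)^{2}.
\end{equation*}
A short differentiation of the closed form of $\Omega(t)$ gives the key relation $F'(u)/F(u) = (\Omega(t) - q)/u(t)$, which converts $(\ln F)''(u(1))$ into an explicit rational function of $\Omega$, $u(1)$, and $q$. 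Substituting the elementary closed forms of $u'(1), u''(1), c'(1), c''(1)$ (polynomials in $e^{\alpha}$ and $g(\alpha) = e^{\alpha}-1$) into $L''(1)$ and multiplying through by the positive factor $P_0\,g(\alpha)\,e^{\alpha}$ then collapses $L''(1) > 0$ to a degree-two polynomial inequality in $\Omega$ matching the left-hand side of (\ref{supercond}).

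To close the argument I would verify $L''(0^+) > 0$ using the Kummer large-argument asymptotics ${}_1F_1(q,q+1,u) \sim (q/u)\,e^u$ as $u \to \infty$ (equivalently $t \downarrow 0$), which gives $L(0^+) = \ln P_0$ with a positive leading $t^2$-coefficient in the Taylor expansion of $L$. Proposition 7 ensures $L(t) \to \ln(\phi P^*)$ monotonically with $L'(t) \to 0$, so $L''$ is eventually negative; direct inspection of the explicit formula for $L''$ (rational in $e^{\alpha t}$, $F$, $F'$, $F''$) rules out oscillation, so the sign change of $L''$ is unique and occurs at the $t_c$ solving (\ref{solvingsupert}). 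Combined with $L''(0^+) > 0$ and $L''(1) > 0$ under (\ref{supercond}), this forces $t_c > 1$, so $L''>0$ on $[0,1]$ and the super-exponential regime persists at least until $t = 1$. The main technical obstacle is the algebraic collapse in the middle paragraph: identifying $\Omega$ as $qe^{u(1)}/F(u(1))$ in closed form, invoking Kummer's differential equation to rewrite $(\ln F)''$, and reducing $L''(1)$ to the explicit quadratic in $\Omega$ of (\ref{supercond}); everything else is chain-rule bookkeeping and a uniqueness-of-crossing argument based on the bounded monotone relaxation of $L(t)$ to its steady-state value.
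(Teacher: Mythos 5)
Your proposal is correct and takes essentially the same route as the paper: super-exponential growth is identified with log-convexity of the expected price, condition (\ref{supercond}) is obtained as the positivity of $\frac{\partial^2}{\partial t^2}\ln\E(P_t\mid P_0)$ at $t=1$ computed from (\ref{E:condret}) (the first-order condition being automatic), and $t_c$ is the inflexion point of $\ln\E(P_t\mid P_0)$, i.e.\ equation (\ref{solvingsupert}). The only substantive difference is organisational: the paper collapses $\left.\frac{\partial^2}{\partial t^2}\ln\E(P_t\mid P_0)\right|_{t=1}>0$ to the quadratic in $\Omega$ by a direct brute-force expansion, whereas you reach the same quadratic through the (correct) identity $F'(u)/F(u)=\bigl(\Omega(t)-q\bigr)/u(t)$ with $\Omega(t)=q\,e^{u}/{}_1F_1(q,q+1,u)$ and Kummer's equation, and your added checks ($L''(0^+)>0$ and a single sign change of $L''$, the latter only sketched) simply make explicit what the paper implicitly assumes when it verifies the sign of the second derivative only at $t=1$.
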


\begin{proof}
The proof of this proposition is given in Appendix \ref{A:propsuper}. {\hfill $\square$}
\end{proof}

%FIGURE 4
\begin{figure}
	\includegraphics[width=15cm]{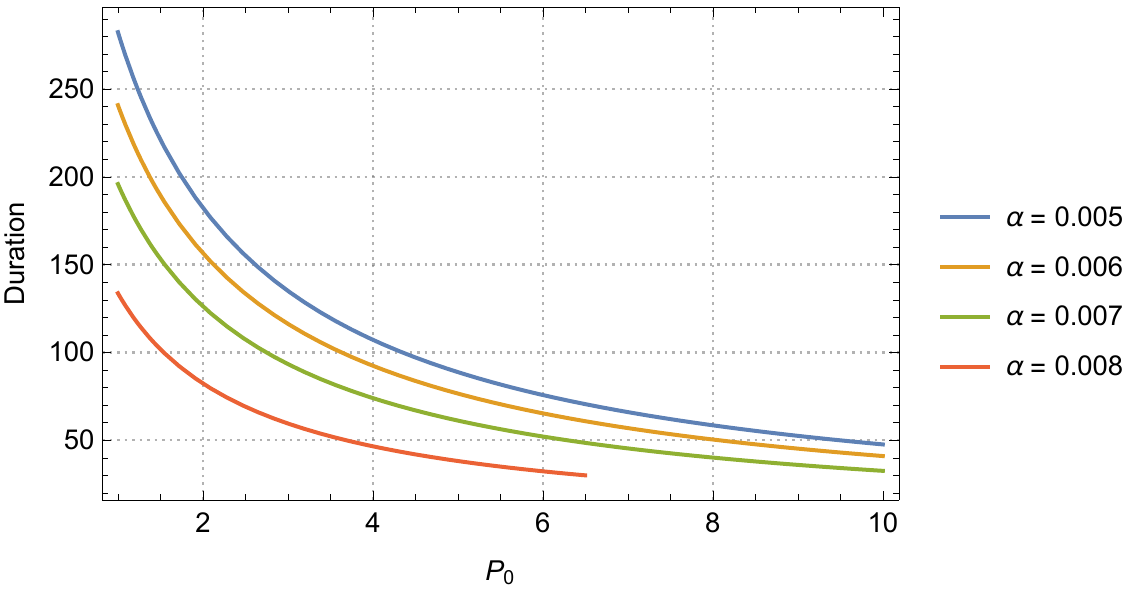}
\caption{Dependence of the duration $t_c$ of the super-exponential growth regime obtained by solving equation (\ref{solvingsupert})
as a function of the initial price $P_0$. The parameters are $E=0.1$, $\psi=0.009$, $P^*=10$. The mean-reverting strength $\alpha$
of the yield process $\gamma_t$ takes values from $0.005$ to $0.008$ as indicated in the inset, with their colour codes.
We scan values of $\alpha$ such that condition (\ref{supercond}) is valid, ensuring a minimum duration of the super-exponential regime.
This corresponds to having $\alpha$ approximately $\in [0.005; 0.008]$. For instance, taking 
$\alpha=0.004$ leads to $H=9.87<10=P^*$. Similarly, for $\alpha>0.008$, condition (\ref{supercond}) is violated.}
\label{F:super_t}
\end{figure}  

Figure \ref{F:super_t} shows the dependence of the duration $t_c$ of the super-exponential regime obtained by solving equation (\ref{solvingsupert})
as a function of the initial price $P_0$ at time $t=0$. The smaller $P_0$ and the smaller $\alpha$, the longer is the duration
of the super-exponential regime. The curve shown in figure \ref{F:super_t} for $\alpha=0.005$ is consistent with the results of figure \ref{F:super_price}.    

Figure \ref{F:super_t2} shows the dependence of the duration $t_c$ of the super-exponential regime 
as a function of the volatility amplitude $\psi$.  It exhibits
a concave shape with maximum value obtained for $\dfrac{\psi^2}{2\alpha \gamma^*}\approx 0.7$. The results can be explained as follows. When $H$ is too small to approach $P^*$ ($P^*/H \to 1$), the long-run expected price $\phi P^*$ becomes infinite such that it leaves enough time for the expected price to converge. Thus, the duration of super-exponential growth can be largely cut. However, when $H$ is sufficiently large ($P^*/H \to 1/2$), it requires quite a tight time for the expected price to converge and leave little time for the duration of super-exponential growth.

%{\bf didier: I do not understand these explanations $\to$The results can be justified as follows. When $H$ is too small to approach $P^*$ ($P^*/H \to 1$), the long-run expected price $\phi P^*$ becomes infinite such that it leaves enough time for the expected price to converge. Thus, the duration of super-exponential growth can be largely cut. However, when $H$ is sufficiently large ($P^*/H \to 1/2$), it requires quite a tight time for the expected price to converge and leave little time for the duration of super-exponential growth.}

%FIGURE 5
\begin{figure}
	\includegraphics[width=15cm]{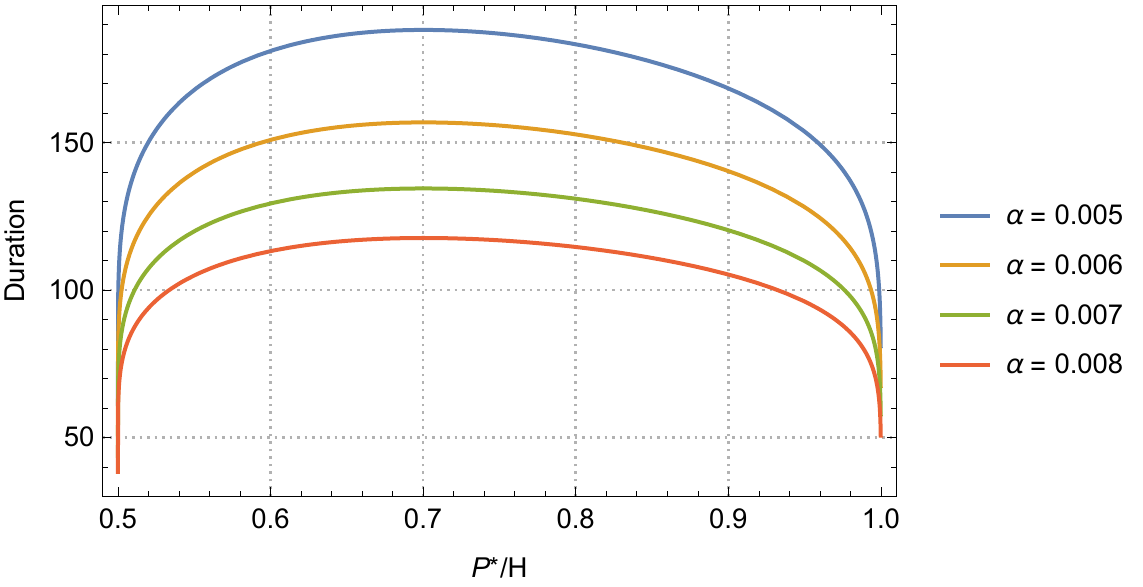}
\caption{Duration of the transient super-exponential growth regime obtained by solving equation (\ref{solvingsupert})
as a function of $\dfrac{P^*}{H}:=\dfrac{\psi^2}{2\alpha \gamma^*}$. The parameters are $E=0.1$, $P_0=2$, $P^*=10$.
Each curve corresponds to a fixed value of $\alpha$ from $0.005$ to $0.008$ as shown in the inset.}
\label{F:super_t2}
\end{figure}

%\section{Real data analysis}
\section{ Calibration of the model to real data}

\subsection{Quasi-maximum likelihood calibration}

The price process (\ref{E:Pb}), which derives from expressions (\ref{pricing1}) with (\ref{E:bubbleCIRgamma}) in Proposition \ref{proposition1},
can be calibrated by transforming the empirical price back into its corresponding earning yield (or discount rate) $\gamma_t$, which is then calibrated to 
the CIR process. For a more stable numerical implementation, instead of the original form \eqref{CIRgamma} or (\ref{E:bubbleCIRgamma}), 
it is preferable to use the following form
\begin{equation}\label{E: CIRexp}
	\dd \gamma_t=(b-\alpha\, \gamma_t)\,\dd t+\psi \sqrt{\gamma_t}\,\dd W_t~.
\end{equation}
After the statistical estimates $\widehat{\alpha}$ and $\widehat{b}$ are obtained, $\widehat{\gamma^*}$ can be recovered from $\widehat{b}/\widehat{\alpha}$. 
In principle, one could implement an exact likelihood inference, since the CIR process has an explicit analytical conditional density \eqref{E:CIRdensity0},
i.e. a non-central $\chi^2$ distribution. However, the involved modified Bessel function $I_q(\cdot)$ makes it difficult to 
calculate the conditional distribution, as it tends to explode, inducing numerical implementation overflows. A
solution is to implement the quasi-maximum likelihood approach to take advantage of the Euler discretisation of Equation (\ref{E: CIRexp}),
\begin{equation*}
	\gamma_{t+\Delta_s}-\gamma_t=(b-\alpha\, \gamma_t)\,\Delta_s+ (\psi^2\gamma_t)^{\frac{1}{2}}(W_{t+\Delta_s}-W_t)~,
\end{equation*}
%makes use of the following approximation of the true log-likelihood function, 
where $\Delta_s$ denotes the time lag between two consecutive observations of the process. As $(W_{t+\Delta_s}-W_t\sim \mathcal{N}(0, \Delta_s)$, the approximate transition density is
\begin{equation*}
	\gamma_{t+\Delta_s}\mid \gamma_t \sim \mathcal{N}( \gamma_t+(b-\alpha\, \gamma_t)\,\Delta_s, ~ (\psi^2\gamma_t)^{\frac{1}{2}} \Delta_s)~.
\end{equation*}
Thus, the quasi-log-likelihood for a series  $\pmb{\gamma}:=\{\gamma_t\}_{t=0,1,\cdots,n}$ of observed data determined for a specific set of  parameters is (neglecting a constant term)
\begin{equation}
\begin{split}
	\ln\mathcal{L}(\pmb{\gamma},(b,\alpha,\psi))&=-\dfrac{1}{2}\sum_{t=1}^{t=n}\left\{\ln \psi^2\gamma_{t-1}+\dfrac{1}{\Delta^2_s \psi^2\gamma_{t-1}}\cdot [\Delta \gamma_t-\Delta_s (b-\alpha \gamma_{t-1})]^2 \right\},
\end{split}
\end{equation}
where $\Delta \gamma_t=\gamma_t-\gamma_{t-1}$. The optimal parameters are obtained to satisfy
\begin{equation*}
	\max_{\{b,\alpha,\psi\}} \ln\mathcal{L}(\pmb{\gamma},(b,\alpha,\psi))~.
\end{equation*}
Further, the corresponding variance for each estimated parameter can be obtained from the diagonal terms of the following Fisher information matrix,
\begin{equation*}
	\mathbb{V}ar (\widehat{\pmb{\theta}})=\left(\dfrac{\partial^2\mathcal{L}}{
\partial\pmb{\theta}\partial\pmb{\theta}}\right)^{-1}\biggl|_{\pmb{\theta}=\widehat{\pmb{\theta}}},\qquad \pmb{\theta}=(b,\alpha,\psi)~.
\end{equation*} 

Given a time series of daily prices $\{P_t\}_{t=1,\cdots,n}$ with $n$ observations, we calculate the corresponding series $\{\gamma_t\}_{t=0,1,\cdots,n}$
using $\gamma_t=E/P_t$. Because we focus on empirical time series that have exhibited transient bubble regimes,
we assume that the volatility of $E$ can be neglected and we treat $E$ as a constant in the calibration. We adopt the L-BFGS-B method to optimise the quasi-log-likelihood. The limited-memory Broyden–Fletcher–Goldfarb–Shanno (BFGS) optimisation algorithm belongs to the family of quasi-Newton BFGS algorithms,
which uses a limited amount of computer memory. The L-BFGS-B extends L-BFGS to handle simple bound constraints in search of the optimal parameters \citep{byrd1995}. Moreover, as $\Delta_s\to 0$ is the necessary condition for the quasi-maximum likelihood method to obtain consistent estimations, we use $\Delta_s=1/252$ (data of a year including $252$ observations) in our implementation.

\subsection{Description of the five empirical data sets}

We calibrate the model for the following five well-known historical bubbles. 

\vskip 0.3cm
\noindent
{\bf The US Dotcom bubble ending in mid-2000}. It is often called the Internet-Communication-Technology (ICT) bubble and regarded as a textbook-style bubble (\cite{homm2012,Phillips2011}). Over 1999, the cumulative return of the internet stock index (represented by NASDAQ) surged 
with an eightfold multiplication of the price; in contrast, for non-internet stocks, the return was smaller than 20\% \citep{kaizoji2015}. However, by mid-2001, a year after the bubble ended, the cumulative sum of the astronomical returns for internet stocks had completely evaporated. From April 11, 2000, the NASDAQ index underwent five days of consecutive sharp drops, with a drawdown of 25.8\%. It was the largest drawdown since the index reached the peak of 5132.5 on March 10, 2000. For model calibration, we employ the NASDAQ daily close price as $P_t$ and set April 11, 2000, as the end of the time window, with the corresponding date, April 12, 1999, as the start of the time window. According to the report from the NASDAQ offset market exchange white book, the P/E ratio was approximately $150$ in mid-1999. Thus, we set $\gamma_0=\gamma_{t_{\text{start}}}=1/150$ to be the initial value of $\gamma_t$ in the calibration window. Hereafter, this bubble is referred to as NASDAQ 2000. 
\item {\bf US stock market bubble ending in October 1987}. This bubble is often identified in retrospect as the most striking drop on October 19, 1987, known as the ``Black Monday.'' Much work has been conducted to unravel the origin(s) of the crash \citep{sorwhy17}. However, no clear cause has been singled out. Some commentators have ascribed the crash to the overinflated prices from a speculative bubble during the earlier period, which pushed global stock markets into an unsustainable state. The sharp drop on Black Monday is only the grand finale of consecutive market declines with an impressive cumulative depreciation of $29.6\%$, which began on October 6, 1987, lasting for nearly two trading weeks. For model calibration, we employ the S\&P 500 daily close price as $P_t$ and set October 6, 1987, as the end of the time window, with the corresponding date, October 7, 1986, as the start of the time window. According to the earnings data from S\&P 500 Global, the P/E ratio for S\&P 500 was approximately $6.9$ in October 1986. Thus, we set $\gamma_0=\gamma_{t_{\text{start}}}=1/6.9$ to be the initial value of $\gamma_t$ in the calibration window. Hereafter, this bubble is referred to as S\&P 500 1987.  

\vskip 0.3cm
\noindent

{\bf US bubbles ending in October 1929}. This bubble is famously the herald of the Great Depression. It was preceded by extraordinary growth and prosperity on Wall Street in the late 1920s, with the Dow Jones Industrial Average index (DJIA) peaking at 381 on September 3, 1929. This bubble has been analysed in great detail by \cite{galbraith2009great}. Its development is characterised by many stories on economic prosperity, with the reconstruction boom after the first world war. From October 11, 1929, incredible consecutive drops of DJIA kicked off. It only took one trading month for the price of DJIA to fall to $195.4$ from $392.4$, with a huge drawdown of $50\%$. For model calibration, we employ the DJIA daily close price as $P_t$ and set October 11, 1929, as the end of the time window, with the corresponding date, October 12, 1928, as the start of the time window. As the annual normalised P/E ratio of DJIA in 1928 is $12.5$, we set $\gamma_0=\gamma_{t_{\text{start}}}=1/12.5$ to be the initial value of $\gamma_t$ in the calibration window. Hereafter, this bubble is referred to as DJIA 1929.  

\vskip 0.3cm
\noindent
{\bf Chinese stock market bubble ending at the beginning of 2008}. After the reform of the split-share structure launched in 2006, the most representative stock index for the market of A-shares (i.e., the Shanghai Stock Exchange Composite [SSEC] index) underwent an exuberant growth. The growth was fuelled by compelling growth stories, confirming from all directions great positive outlooks based on the rapid fundamental growth of the Chinese economy, its sky-rocketing export surpluses inducing enormous reserves and liquidity, the strength of its currency, and the coming Olympic Games offering a path to extraordinary expected prosperity \citep{lin2018}. Multiple anecdotal sources strongly posit that Chinese investors changed from prudent to very confident in the future prospects and gains. In 2007, the SSEC peaked at 6124 in October from 2600 in January (corresponding to a relative appreciation of 235.5\%) within 10 months. However, after an anodyne rally in the first two weeks of 2018, SSEC began to slump on January 14, 2008. The following eight months saw the index return to the original $2600$ level, capping a breathtaking roller-coaster performance for the bubble. For model calibration, we employ SSEC daily close price as $P_t$ and set January 14, 2008, as the end of the time window, with the corresponding date, January 15, 2007, as the start of the time window. According to the Census and Economic Information Centre (CEIC) data, the P/E ratio for SSEC was approximately $20$ in January 1999; we set $\gamma_0=\gamma_{t_{\text{start}}}=1/20$ to be the initial value of $\gamma_t$ in calibration window. Hereafter, this bubble is referred to as SSEC 2008.

\vskip 0.3cm
\noindent
{\bf Chinese stock market bubble ending in the mid of 2015}.
This bubble started around mid-2014, inducing an approximate growth of 150\% in a year, crashed thereafter from mid-2015. The origin of this bubble is often considered to be quite different from SSEC 2008. It is thought to result from investors utilising strong leverage that led to an amplification and disconnection between the price and the realities of economic activity and corporate earnings. Indeed, the Chinese government encouraged small retail investors to join in investing in the stock market, and around 7\% of China's population did profit from the easy access to credit for investment purposes \citep{sornette2015chinese}. However, as a remarkable feature, the price's roller-coaster performance of this bubble is quite similar to SSEC 2008. For model calibration, we employ SSEC daily close price as $P_t$ and set June 30, 2008, (which opens a consecutive seven-day drop, the longest losing streak since the peak of the bubble) as the end of the time window, with the corresponding date, July 1, 2014, as the start of the time window. According to the CEIC data, the P/E ratio for SSEC was approximately $10$ in July 2014. Thus, we set $\gamma_0=\gamma_{t_{\text{start}}}=1/10$ to be the initial value of $\gamma_t$ in calibration window. Hereafter, this bubble is referred to as SSEC 2015.

\begin{figure}
\centering
\includegraphics[width=17cm]{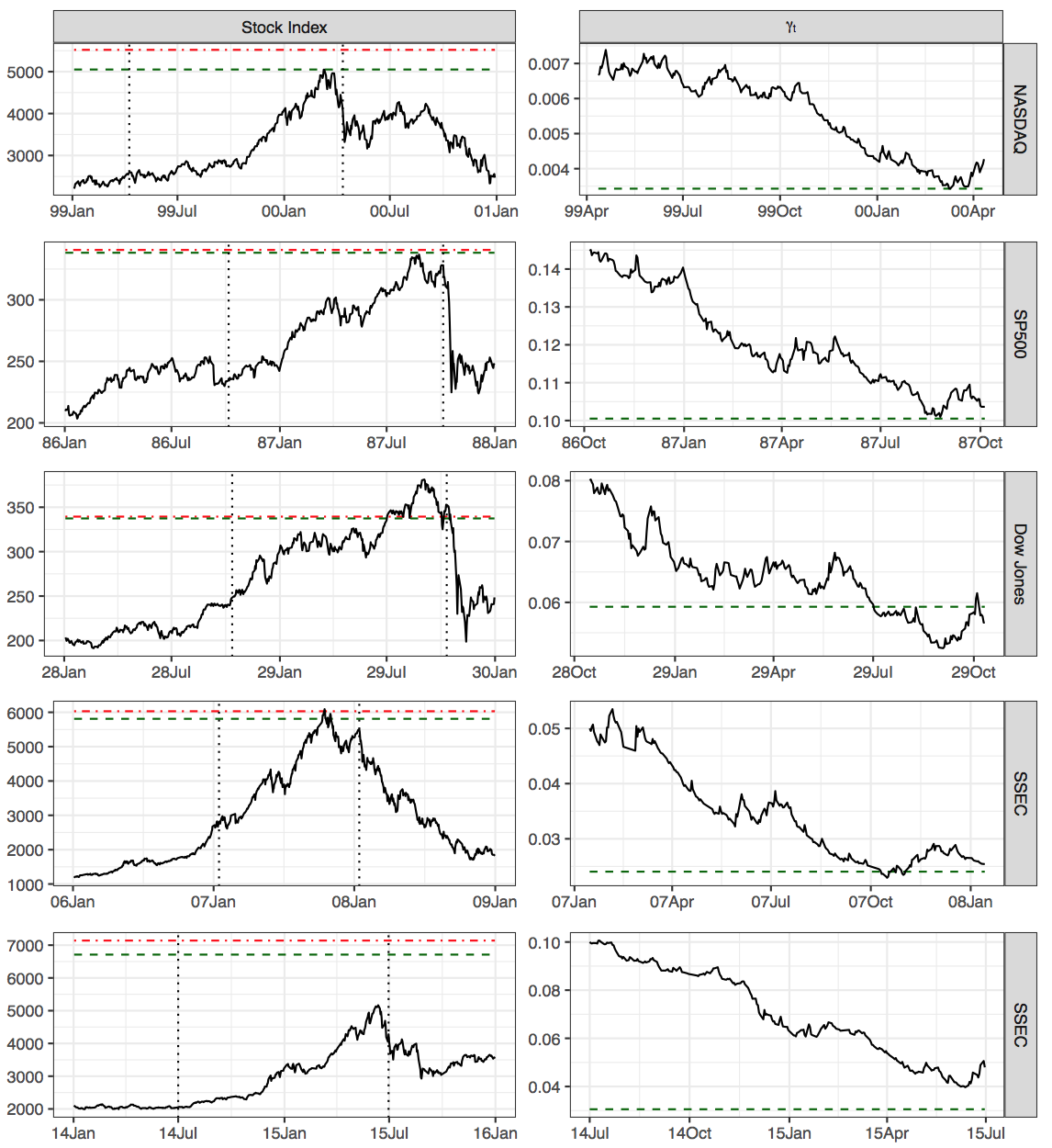}
\caption{Calibration of the model for five historical bubbles. The left panels show the stock index prices for each case. 
The calibration window for each case is delineated by the two dotted vertical lines. 
For each bubble, the estimation algorithm is implemented in a time window of one year 
ending just before the crash following the bubble.  
The right panels present the corresponding time series of $\gamma_t$ in the calibration window of one year ending just before each crash.  
The horizontal green dashed lines indicate the estimated $\widehat{P^*}=E/\widehat{\gamma^*}$ and $\widehat{\gamma^*}$ respectively in the left and right panels. The red dotted-dashed line represents the estimated $\phi P^*$.}\label{F:calibration}
\end{figure}

\subsection{Results of the calibration on the five empirical bubble examples}

For each time series of $\gamma_t$, we choose $\{b=0,\alpha=0,\psi=0\}$ as the lower bound and the sufficiently large values $\{b=100,\alpha=100,\psi=100\}$ as the upper bound of the parameter space to implement the L-BFGS-B algorithm. The initial values for optimisation are all set to be $\{b=0.01,\alpha=0.01,\psi=0.01\}$ for the above five bubble cases. After $\widehat{b},\widehat{\alpha},\widehat{\psi}$ has been obtained, we derive $\widehat{\gamma^*}=\widehat{b}/\widehat{\alpha}$. Further, $\widehat{P^*}=E/\widehat{\gamma^*}$ , $\widehat{\phi}=[1-{\widehat{\psi}^2 }/{(2\,\widehat{\alpha} \widehat{\gamma^*}})]^{-1}$, and $\widehat{P^{\dagger}}=\widehat{\phi}\widehat{P^*}$ can be sequentially calculated. Table \ref{T:1} lists all calibrated parameters, where $t_{\text{start}}$ and $t_{\text{end}}$ respectively gives the start and end of date for the calibration time window, and $P_{\text{max}}$ gives the maximum price of the stock index within the calibration time window. $\ln\mathcal{L}$ denotes the quasi-maximum log-likelihood for the empirical $\gamma_t$ over the time window based on the estimated parameters.  

\renewcommand\arraystretch{1.8}
\begin{table}[!p]
\begin{tabular}{cccccc}
\toprule
                        & NASDAQ 2000                                                                           & S\&P500 1987                                                                          & Dow Jones 1929                                                            & SSEC 2008                                                                           & SSEC 2015                                                                           \\ \midrule
$t_{\text{start}}$      & 1999/04/12                                                                            & 1986/10/06                                                                            & 1928/10/12                                                                & 2007/01/15                                                                          & 2014/07/01                                                                          \\
$t_{\text{end}}$        & 2000/04/11                                                                            & 1987/10/07                                                                            & 1929/10/11                                                                & 2008/01/14                                                                          & 2015/06/30                                                                          \\
$P_{\text{max}}$        & 5132.5                                                                                & 337.9                                                                                 & 381.1                                                                     & 6092.1                                                                              & 5166.35                                                                             \\
$\widehat{b}$           &\renewcommand\arraystretch{0.5} \begin{tabular}[c]{@{}c@{}}$1.51\times 10^{-5}$\\ ($3.23\times 10^{-5}$)\end{tabular} & \renewcommand\arraystretch{0.5}\begin{tabular}[c]{@{}c@{}}$8.17\times 10^{-4}$\\ ($1.24\times 10^{-4}$)\end{tabular} &\renewcommand\arraystretch{0.5} \begin{tabular}[c]{@{}c@{}}$0.0012$\\ ($5.89\times 10^{-4}$)\end{tabular} &\renewcommand\arraystretch{0.5} \begin{tabular}[c]{@{}c@{}}$2.38\times 10^{-4}$\\ $(2.04\times 10^{-4})$\end{tabular} &\renewcommand\arraystretch{0.5} \begin{tabular}[c]{@{}c@{}}$1.63\times 10^{-4}$\\ $(2.65\times 10^{-4})$\end{tabular} \\[1.5ex]
$\widehat{\alpha}$      &\renewcommand\arraystretch{0.5} \begin{tabular}[c]{@{}c@{}}$0.0044$\\ ($0.0059$)\end{tabular}                         & \renewcommand\arraystretch{0.5}\begin{tabular}[c]{@{}c@{}}$0.0081$\\ ($0.0086$)\end{tabular}                         &\renewcommand\arraystretch{0.5} \renewcommand\arraystretch{0.5}\begin{tabular}[c]{@{}c@{}}$0.0194$\\ ($0.0092$)\end{tabular}             & \renewcommand\arraystretch{0.5}\begin{tabular}[c]{@{}c@{}}$0.0099$\\ ($0.0061$)\end{tabular}                       & \renewcommand\arraystretch{0.5}\begin{tabular}[c]{@{}c@{}}$0.0054$\\ ($0.0039$)\end{tabular}                       \\
$\widehat{\gamma^{*}}$  & $0.034$                                                                               & $0.101$                                                                               & $0.059$                                                                   & $0.024$                                                                             & $0.031$                                                                             \\
$\widehat{\psi}$        &\renewcommand\arraystretch{0.5} \begin{tabular}[c]{@{}c@{}}$0.0016$\\ ($8.81\times 10^{-8}$)\end{tabular}             & \renewcommand\arraystretch{0.5}\begin{tabular}[c]{@{}c@{}}$0.0033$\\ ($3.49\times 10^{-7}$)\end{tabular}             & \renewcommand\arraystretch{0.5}\begin{tabular}[c]{@{}c@{}}$0.0037$\\ ($4.48\times 10^{-7}$)\end{tabular} & \renewcommand\arraystretch{0.5}\begin{tabular}[c]{@{}c@{}}$0.0042$\\ ($5.86\times 10^{-7}$)\end{tabular}           &\renewcommand\arraystretch{0.5} \begin{tabular}[c]{@{}c@{}}$0.0044$\\ ($6.70\times 10^{-7}$)\end{tabular}           \\
$\widehat{\phi}$        & $1.093$                                                                               & $1.007$                                                                               & $1.006$                                                                   & $1.037$                                                                             & $1.064$                                                                             \\
$\widehat{H}$           & $5.94\times 10^5$                                                                     & $5.04\times 10^5$                                                                     & $5.61\times 10^5$                                                         & $1.59\times 10^6$                                                                   & $1.12\times 10^6$                                                                   \\
$\widehat{P^*}$         & $5051.3$                                                                              & $338.2$                                                                               & $337.4$                                                                   & $5810.4$                                                                            & $6711.6$                                                                            \\
$\widehat{P^{\dagger}}$ & $5520.6$                                                                              & $340.5$                                                                               & $339.4$                                                                   & $6030.7$                                                                            & $7140.1$                                                                            \\
$2 \ln \mathcal{L}$    & $-3916.5$                                                                             & $-2706.0$                                                                             & $-2773.0$                                                                 & $-2809.0$                                                                           & $-2620.0$\\
\bottomrule                                                                         
\end{tabular}
\caption{Parameters of model (\ref{E:Pb}), which derives from expressions (\ref{pricing1}) with (\ref{E:bubbleCIRgamma}) in Proposition \ref{proposition1},
obtained by calibrating five historical bubbles via quasi-maximum likelihood estimation method. The standard error calculated from Fisher information matrix are provided in the parentheses.
% {\bf didier: please explain what are the numbers put in parentheses)}
}\label{T:1}
\end{table}

As shown in table  \ref{T:1}, $\widehat{\alpha}$ and $\widehat{\psi}$ are respectively in the typical range of $0.005\sim 0.02$ and $0.002\sim 0.004$ for all bubbles. Moreover, the estimated $\widehat{P^*}$ are not far from the peak price $P_{\text{max}}$, except for SSEC 2015. This suggests that, before the final burst, the first four bubbles had almost reached maturation. SSEC 2015 seems to collapse too early before the instantaneous collective market belief on earning yield $\gamma_t$
had time to reach its equilibrium state corresponding to $\widehat{\gamma^*}$, leading $P_{\text{max}}$ to remain
much smaller than $P^*$. This exception could be rationalised by the evidence that the Chinese real estate market and the overall economy synchronously cooled significantly when this bubble developed. As the real estate sector largely drove the Chinese economy in the first five years of the 2010s, the cooling of the real estate market could have dragged down the expected growth rate of the whole Chinese economy. It acted as an external pressure to cause a regime shift for $\gamma^*$ in mid-2015, stopping the bubble early from further growth. 

The left panels of Figure \ref{F:calibration} show the price trajectories for each of the five stock indices studied here, in time windows showing clearly the 
ascending bubble regimes followed by the crashes and following drawdowns. The right panels show the 
corresponding earning yields $\gamma_t$. The calibrated $\widehat{P^*}$ are found close to the peak of the prices, except for SSEC 2015. 
It is interesting to note that $\widehat{P^*}$ and $\widehat{P^{\dagger}}=\widehat{\phi}\widehat{P^*}$ are almost identical for S\&P500 1987 and Dow Jones 1929, indicating that   the emergent risk premium $\ln\phi$ for these bubbles is rather small. As table \ref{T:1} shows, this might be ascribed to different reasons. For S\&P500 1987, the small $\phi$ came from a large $\gamma^*$ (recall Equation (\ref{E:phidef})) that reflects a large expected equilibrium yield in that bubble. For Dow Jones 1929, 
the small $\phi$ results from the large value of $\alpha$, which controls the typical time scale for $\gamma_t$ to converge to $\gamma^*$.

\subsection{Model comparison to test to goodness of fit of the price process (\ref{E:Pb})}

All the above calibrations and corresponding analyses have been performed under the assumption that the obtained yield process
$\gamma_t$ truly obeys the CIR process in the form (\ref{E: CIRexp}).  Recall that $\gamma_t$ is obtained from the 
obtained price process by using (\ref{pricing1}). Indeed, if the price follows equation (\ref{E:Pb}),
then $\gamma_t$ follows equation (\ref{E:bubbleCIRgamma}) and vice-versa, according to Proposition \ref{proposition1}.

It remains to be shown empirically whether the dynamics characterised by eq.(\ref{E: CIRexp}) for $\gamma_t$ fits data better 
than alternative yield processes. We now turn to this question by developing tests based on 
$\phi$-divergence statistics specifically adapted to models prescribed in terms of stochastic differential equations (SDE) \citep{iacus2009,pardo2018}. In a nutshell, this approach tests whether there is a statistically significant difference between an alternative model compared with the model of reference (null hypothesis) 
in the explanatory power of a given dataset, where  the difference between the two (approximated) parametric models is measured in terms of certain $\phi$-divergence measures.  

Let us denote the parameter vector $\pmb{\theta}=(\pmb{\theta}',\pmb{\theta}^{''})$ that contains all parameters  in the following generic SDE describing an ergodic diffusion process $X_t$
\begin{equation}
	\dd X_t=\mu(\pmb{\theta}', X_t)\,\dd t + \sigma(\pmb{\theta}^{''}, X_t)\,\dd W_t, 
\end{equation}
where parameter vector $\pmb{\theta}'$ and $\pmb{\theta}^{''}$ are respectively with dimension $p$ and $q$.  Let $f(X, \pmb{\theta})$ be the probability density for the observations $X$. 
Let us assume that the model under testing is calibrated and obtains parameters $\pmb{\theta}_0$, while the alternative model is calibrated and obtains parameters $\widehat{\pmb{\theta}}$. 
The $\phi$-divergence measure between the model $f(X, \pmb{\theta}_0)$ (null hypothesis) and $f(X, \widehat{\pmb{\theta}})$  is defined as the following expected function of likelihood ratios, 
\begin{equation}\label{E:phidev}
	D_{\phi} (\widehat{\pmb{\theta}},\pmb{\theta}_0)=\E _{\pmb{\theta}_0}\left[\,\phi\left(\dfrac{f(X,\widehat{\pmb{\theta}})}{f(X, \pmb{\theta}_0)}\right)\,\right]~,
\end{equation}
where $\phi(x)$ is a convex continuous function and is restricted on $[0,+\infty]$ with $\phi(1)=\phi'(1)=0$. Using different $\phi(x)$ allows the above divergence 
to be translated into a large body of well-known distance measures for two probability densities. For example, 
\begin{itemize}
	\item  for $\phi(x)=x (\ln x -1) +1$, $D_{\phi} $ recovers to the Kullback-Leibler (KL) divergence \citep{KLdivergence}  $D_{KL}(f(X, \widehat{\pmb{\theta}})||f(X, \pmb{\theta}_0))$;
\item for $\phi(x)=\left(\dfrac{x-1}{x+1}\right)^2$, $D_{\phi}$ recovers to the Balakrishnan-Sanghvi  (BS) divergence \citep{balakrishnan1968} equals to\\ $\displaystyle\int \left(\dfrac{f(X, \widehat{\pmb{\theta}})-f(X, \pmb{\theta}_0)}{f(X, \widehat{\pmb{\theta}})+f(X, \pmb{\theta}_0)}\right)^2f(X, \pmb{\theta}_0)\,\dd X$;
\item for $\phi(x)=-2 x^{\frac{1}{2}}+2x+1$, $D_{\phi}$ recovers to the Rathie-Kannappan (RK) divergence \citep{RKdivergence}, which is propotional to \\$\displaystyle \int \left[1-\left(\dfrac{f(X, \widehat{\pmb{\theta}})}{f(X, \pmb{\theta}_0)}\right) ^{-\frac{1}{2}} \right] f(X, \pmb{\theta}_0)\,\dd X$. 
\end{itemize}

Given a sample of $n$ observations and some asymptotically efficient estimator, such as the quasi-log-likelihood estimator $\widehat{\pmb{\theta}}$,  \cite{iacus2013} has proven that the $\phi$-divergence statistic $T_{\phi,n}=2n D_{\phi,n}^{\text{EXP}}(\widehat{\pmb{\theta}},\pmb{\theta}_0)$ is asymptotically distribution-free under $H_0: \pmb{\theta}=\pmb{\theta}_0$ versus   $H_1: \pmb{\theta}\neq\pmb{\theta}_0$,  and
\begin{equation*} 
	T_{\phi,n}\overset{d}{\longrightarrow} \chi^2_{p+q}, \quad \text{when}~ n\to \infty, \Delta_s\to 0.
\end{equation*}

$D_{\phi,n}^{\text{EXP}}(\widehat{\pmb{\theta}},\pmb{\theta}_0)$ is the empirical version for the theoretical $\phi$-divergence measure (\ref{E:phidev}), which reads, 
\begin{equation*}
	D_{\phi,n}^{\text{EXP}}(\widehat{\pmb{\theta}},\pmb{\theta}_0)=\dfrac{1}{n}\sum_{t=1}^{t=n} \phi\left(\dfrac{p_t(\widehat{\pmb{\theta}})}{p_t(\pmb{\theta}_0)}\right)~,
\end{equation*}
where $p_t(\pmb{\theta})=\exp\left(\dfrac{1}{2}\left\{\ln \sigma^2(\pmb{\theta}^{''}, X_{t-1})+\dfrac{1}{\Delta_s^2 \sigma^2(\pmb{\theta}^{''}, X_{t-1})}\cdot[\Delta X_t-\Delta_s \mu(\pmb{\theta}^{'}, X_{t-1})]^2\right\}\right)$.

\vspace{3mm}

To implement the test for a given time series $\{\gamma_t\}_{t=1}^{t=n}$, we first estimate $\widehat{\pmb{\theta}}$ for the chosen alternative model 
by maximizing its corresponding quasi-log-likelihood. The second step is to calculate the above $\phi$-divergence statistic  $T_{\phi,n}$ by taking into account $\pmb{\theta}_0$, the parameters for the model under testing. Then, given the significance level $\alpha$,  we test whether to reject $H_0$ by examining if $T_{\phi,n} > c_{\alpha}$ where $c_{\alpha}$ is the $1-\alpha$ quantile of the limiting random variable that follows the $\chi^2_{p+q}$ distribution.  

For the $\gamma_t$ time series obtained for each of the five historical bubbles studied above, we compare the CIR model to the following three alternative models:
\begin{itemize}
	\item Brownian Motion (BM): $\dd \gamma_t = b\, \dd t + \psi\, \dd W_t$, 
 	\item  Geometric Brownian Motion: (GBM): $\dd \gamma_t = -\alpha \gamma_t \,\dd t + \psi \gamma_t\, \dd W_t$,
	\item CKLS process (CKLS):   $\dd \gamma_t =( b -\alpha \gamma_t) \,\dd t + \psi \gamma_t^{v}\, \dd W_t$.
\end{itemize}
The BM model is the simplest stochastic model, in which the increment of $\gamma_t$ only has a constant downward trend $b$ without any mean-reverting effect.  
The GBM model is another possibly model for $\gamma_t$. Assuming that the stock price $P_t$ obeys the geometric Brownian motion with drift of $\mu$ and diffusion of $\sigma$ taken as a classical benchmark, this implies that $\gamma_t=E/P_t$ also follows a geometric Brownian motion with  drift of $\alpha=(\mu-\sigma^2)$ and diffusion of $\psi=\sigma$, as reported in Proposition \ref{proposition0}.
The CKLS process is also famous in financial applications, in particular to model short-term interest (see \citep{CKLS}). The CIR process is a special case 
of the CKLS process obtained for $v=0.5$. Hence, choosing the CKLS as an alternative model can help determine whether the non-square root of $\gamma_t$ in 
the diffusion term has more explanatory power.  

% Please add the following required packages to your document preamble:
% \usepackage{booktabs}
\begin{table}[!p]
\begin{tabular}{@{}cccccc@{}}
\toprule
     & NASDAQ 2000                                                                       & S\&P500 1987                                          & Dow Jones 1929                                         & SSEC 2008                                                                 & SSEC 2015                                                     \\ \midrule
\multicolumn{6}{c}{Panel A: Kullback-Leibler Divergence}                                                                                                                                                                                                                                                                                                        \\[1ex]\hline
BM   &\renewcommand\arraystretch{1} \begin{tabular}[c]{@{}c@{}}6.76\\ (0.149)\end{tabular}                            &\renewcommand\arraystretch{1} \begin{tabular}[c]{@{}c@{}}2.79\\ (0.593)\end{tabular} &\renewcommand\arraystretch{1} \begin{tabular}[c]{@{}c@{}}3.80\\ (0.434)\end{tabular} &\renewcommand\arraystretch{1} \begin{tabular}[c]{@{}c@{}}6.83\\ (0.145)\end{tabular}                    &\renewcommand\arraystretch{1} \begin{tabular}[c]{@{}c@{}}15.3\\ (0.004)$^{**}$\end{tabular} \\[2ex]
GBM  &\renewcommand\arraystretch{1} \begin{tabular}[c]{@{}c@{}}8.48\\ (0.076)\end{tabular}                            &\renewcommand\arraystretch{1} \begin{tabular}[c]{@{}c@{}}3.59\\ (0.464)\end{tabular} &\renewcommand\arraystretch{1} \begin{tabular}[c]{@{}c@{}}5.35\\ (0.254)\end{tabular} &\renewcommand\arraystretch{1} \begin{tabular}[c]{@{}c@{}}$1.36\times 10^2$\\ ($<0.001$)$^{***}$\end{tabular} &\renewcommand\arraystretch{1} \begin{tabular}[c]{@{}c@{}}11.0\\ (0.02)$^*$\end{tabular}     \\[2ex]
CKLS &\renewcommand\arraystretch{1} \begin{tabular}[c]{@{}c@{}}0.01\\ (\textgreater{}0.999)\end{tabular}              &\renewcommand\arraystretch{1} \begin{tabular}[c]{@{}c@{}}0.74\\ (0.946)\end{tabular} &\renewcommand\arraystretch{1} \begin{tabular}[c]{@{}c@{}}9.28\\ (0.054)\end{tabular} &\renewcommand\arraystretch{1} \begin{tabular}[c]{@{}c@{}}$1.05\times 10^3$\\ ($<0.001$)$^{***}$\end{tabular} &\renewcommand\arraystretch{1} \begin{tabular}[c]{@{}c@{}}25.1\\ ($<0.001$)$^{***}$\end{tabular}  \\[1ex]\hline
\multicolumn{6}{c}{Panel B: Balakrishnan-Sanghvi Divergence}                                                                                                                                                                                                                                                                                                    \\[1ex]\hline
BM   &\renewcommand\arraystretch{1} \begin{tabular}[c]{@{}c@{}}4.20\\ (0.379)\end{tabular}                            &\renewcommand\arraystretch{1} \begin{tabular}[c]{@{}c@{}}1.50\\ (0.827)\end{tabular} &\renewcommand\arraystretch{1} \begin{tabular}[c]{@{}c@{}}1.99\\ (0.738)\end{tabular} &\renewcommand\arraystretch{1} \begin{tabular}[c]{@{}c@{}}4.36\\ (0.359)\end{tabular}                    &\renewcommand\arraystretch{1} \begin{tabular}[c]{@{}c@{}}4.81\\ (0.308)\end{tabular}        \\[2ex]
GBM  &\renewcommand\arraystretch{1} \begin{tabular}[c]{@{}c@{}}5.99\\ (0.200)\end{tabular}                            &\renewcommand\arraystretch{1} \begin{tabular}[c]{@{}c@{}}1.29\\ (0.863)\end{tabular} &\renewcommand\arraystretch{1} \begin{tabular}[c]{@{}c@{}}2.67\\ (0.615)\end{tabular} &\renewcommand\arraystretch{1} \begin{tabular}[c]{@{}c@{}}5.99\\ (0.200)\end{tabular}                    &\renewcommand\arraystretch{1} \begin{tabular}[c]{@{}c@{}}7.29\\ (0.121)\end{tabular}        \\[2ex]
CKLS &\renewcommand\arraystretch{1} \begin{tabular}[c]{@{}c@{}}$3.30\times 10^{-3}$\\ (\textgreater{}0.999)\end{tabular} &\renewcommand\arraystretch{1} \begin{tabular}[c]{@{}c@{}}0.32\\ (0.988)\end{tabular} &\renewcommand\arraystretch{1} \begin{tabular}[c]{@{}c@{}}4.63\\ (0.360)\end{tabular} &\renewcommand\arraystretch{1} \begin{tabular}[c]{@{}c@{}}9.03\\ (0.060)\end{tabular}                    &\renewcommand\arraystretch{1} \begin{tabular}[c]{@{}c@{}}4.89\\ (0.299)\end{tabular}        \\[1ex]\hline
\multicolumn{6}{c}{Panel C: Rathie-Kannappan Divergence}                                                                                                                                                                                                                                                                                                        \\[1ex]\hline
BM   &\renewcommand\arraystretch{1} \begin{tabular}[c]{@{}c@{}}3.66\\ (0.453)\end{tabular}                            &\renewcommand\arraystretch{1} \begin{tabular}[c]{@{}c@{}}1.42\\ (0.840)\end{tabular} &\renewcommand\arraystretch{1} \begin{tabular}[c]{@{}c@{}}1.92\\ (0.750)\end{tabular} &\renewcommand\arraystretch{1} \begin{tabular}[c]{@{}c@{}}3.77\\ (0.438)\end{tabular}                    &\renewcommand\arraystretch{1} \begin{tabular}[c]{@{}c@{}}6.87\\ (0.143)\end{tabular}        \\[2ex]
GBM  &\renewcommand\arraystretch{1} \begin{tabular}[c]{@{}c@{}}4.95\\ (0.293)\end{tabular}                            &\renewcommand\arraystretch{1} \begin{tabular}[c]{@{}c@{}}1.65\\ (0.799)\end{tabular} &\renewcommand\arraystretch{1} \begin{tabular}[c]{@{}c@{}}2.67\\ [2ex](0.614)\end{tabular} &\renewcommand\arraystretch{1} \begin{tabular}[c]{@{}c@{}}40.3\\ ($<0.001$)$^{***}$\end{tabular}              &\renewcommand\arraystretch{1} \begin{tabular}[c]{@{}c@{}}6.19\\ (0.185)\end{tabular}        \\
CKLS &\renewcommand\arraystretch{1} \begin{tabular}[c]{@{}c@{}}3.36\\ (\textgreater{}0.999)\end{tabular}              &\renewcommand\arraystretch{1} \begin{tabular}[c]{@{}c@{}}0.36\\ (0.986)\end{tabular} &\renewcommand\arraystretch{1} \begin{tabular}[c]{@{}c@{}}4.65\\ (0.325)\end{tabular} &\renewcommand\arraystretch{1} \begin{tabular}[c]{@{}c@{}}$2.3\times 10^2$\\ ($<0.001$)$^{***}$\end{tabular}  &\renewcommand\arraystretch{1} \begin{tabular}[c]{@{}c@{}}10.2\\ (0.372)\end{tabular}        \\[1ex] \bottomrule
\multicolumn{6}{l}{ $p<0.001$: $^{***}$, $p<0.01$: $^{**}$, $p<0.05$: $^{*}$}  
\end{tabular}
\caption{Results of the  $\phi$-divergence tests that $\gamma_t$ follows the CIR process for five historical bubbles. In each bubble case, the time series of $\gamma_t$ is obtained from the stock index price over the time window with $t_{\text{start}}$ and $t_{\text{end}}$ listed in table 1, and the 
corresponding parameters $\pmb{\theta}_0$ of $H_0$ are set to the estimated values $\hat{b}, \hat{\alpha}, \hat{\psi}$. Panels A, B and C respectively give the results 
of the tests for the three $\phi$-divergence measures. Each row lists the values of the $T_{\phi, n}$  statistic and the p-value (below within parenthesis) for 
the $\widehat{\pmb{\theta}}$ estimated by maximization of the quasi-log-likelihood for the corresponding alternative model. } \label{T:2}
\end{table}

We calculate  the statistic $T_{\phi,n}$ for each of three different $\phi(x)$ and then perform the corresponding tests. The three $\phi(x)$ correspond respectively to the KL divergence, BS divergence and RK divergence listed above. 

 As shown in table 2, the tests for NASDAQ 2000, S\&P500 87 and Dow Jones 1929  does not reject the null $H_0$ that $\gamma_t$ obeys the CIR process, regardless of the alternative model to be compared with and for all $\phi$-divergence test statistics. For these three time series, the test results are thus clearly supporting our proposed CIR model for the dynamics of $\gamma_t$  given by eq. (\ref{E: CIRexp}). 
 
However, for the two Chinese stock market bubbles, the three $\phi$-divergence tests give different results. For the case of SSEC 2008, only the test based on 
the BS divergence suggests that the CIR process is more competent than other processes to describe the obtained empirical $\gamma_t$.
In contrast, the KL and RK divergence tests suggest a sharp rejection of $H_0$ in favour of the GBM model or the more general process CKLS 
with an exponent $v$ different from $0.5$.  For SSEC 2015, both BS and RK divergence tests indicate that $H_0$ should not be rejected, 
while the test based on KL divergence suggests the opposite result. Note however in this case that
the null $H_0$ is not strongly rejected with respect to the alternative BM and GBM models. 

Overall, the above tests imply that the proposed model might be better suited to characterize US than Chinese stock markets. However, the comparison 
between the competing models for the three different divergence measures for SSEC 2015 and SSEC 2008 also suggests 
that our model could be applicable to Chinese stock markets.

\section{Conclusion}

We have introduced a new formulation to understand the properties of stock market prices, which is based on 
interpreting the earning-over-price, called here the earning yield process $\gamma_t$, as the key variable driving prices.
$\gamma_t$ has been argued to represent the instantaneous collective belief on the weight of current earnings in pricing the stock. 
In a second interpretation, we have proposed that the earning yield of a company is analogous to the yield-to-maturity of an equivalent perpetual bond 
and $\gamma_t$ stands for the market instantaneous collective belief on such yield-to-maturity.

Our main proposal has been to focus on this $\gamma_t$, in essence the inverse of the price ($\gamma_t \simeq 1/P_t$),
as a more suitable variable to rationalise the many stylised facts of financial prices and returns.
We have illustrated the power of the choice of $\gamma_t \simeq 1/P_t$
as the \emph{right variable} to model financial prices by exploring the properties resulting from perhaps 
the simplest choice for $\gamma_t$ in the form of the CIR process.
This choice was made for simplicity and because the CIR process is the prevailing one to characterise short-term interest properties,
and the earning yield has been argued to be an interest-like variable. 
  
One of the merits of the CIR model for  $\gamma_t$ stems from 
the fact that we have been able to derive explicit analytical solutions for many properties of the resulting price process.
In particular, we have obtained the analytical expression for the 
 transition density for the price process expressed in terms of the modified Bessel function of the first kind. 
 We have obtained the analytical expressions for the first and second moments of prices and cumulative returns, expressed 
 in terms of the Kummer confluent hypergeometric function. We have obtained the ensemble average of the cumulative price.
 We have shown that the inverse CIR process of the price is an ergodic Markovian process.
 We have classified the existence of two regimes: (i) the non-explosive nonlinear regime
 valid for $2\alpha \gamma^* > \psi^2$, in which the price process remains bounded at all times and is also stationary and ergodic;
 (ii) the recurrent explosive bubble regime for $2\alpha \gamma^* \leq \psi^2$, in which 
there is a strictly positive probability for the price to diverge in finite time and then rebound immediately to a finite values.
This implies that the price can transiently explode to infinity. This behavior provides a natural model for explosive financial bubbles.
But even in the non-explosive nonlinear regime, we demonstrated that the price exhibits transient super-exponential behavior
that have been argued to be characteristic signatures of financial bubbles. 
Our model provides a very simple framework, which is quite remarkable in its ability in rationalising 
empirical observations that have not hitherto received explanations, such as the transition of the power law tail exponent  
for the cross-sectional distribution of stock prices
from values larger than $1$ to a value converging to $1$ close to the end of the Japanese Internet bubble.
  
We have developed a quasi-maximum likelihood method with the L-BFGS-B optimisation algorithm to calibrate the model
to five well-known historical bubbles in the US and China stock markets. The estimated $\alpha$ and $\psi$
parameters  for the five bubbles are within the typical interval that accords with the numerical analysis. Furthermore, for the five bubbles, the perceived intrinsic value $P^*$ has been estimated to be close to the peak price in the calibration window, except for the second Chinese bubble. 
This implies that estimating $P^*$ can help determine the degree of maturation of a bubble and should aid in constructing early-warning systems for bubble collapse. 
 
 Finally, let us reflect briefly in more general terms on the profound implications of our proposition that the inverse of the price ($\gamma_t \simeq 1/P_t$)
is the ``right'' financial variable to model. In the systematic scientific investigation of the world, 
a key component is identifying the right variable to model, which involves selecting the key factors 
that are relevant to the phenomenon being studied and developing a model that accurately captures their interactions and relationships.
Identifying the right variable to model is a critical step in the scientific process and 
can have a significant impact on the accuracy and usefulness of the resulting models.  
This translates into the quest of finding the ``right'' variable that makes the theory simple and insightful.
In the present case, we have proposed that the ``right'' variable is not the price but its inverse. In fact, 
from an epistemological perspective, we proposed that, in the causal relationship between price and earning yield, the earning yield is the cause and the price, as expressed through the functional relationship $P_t \simeq 1/\gamma_t$, is the effect. The dynamics of price is determined by the variations in earning yield, not vice versa.
In the physical sciences, it has been documented many times
that observable variables are not necessarily the ``right'' variables in terms of which the theory is simple.
Simultaneously, adhering to the principle of identifying the correct ``cause" and modeling it specifically provides the modeler with guidance on how to construct better and more parsimonious reduced models.
We think that this strategy introduced in the present work can open up a large new set of insights 
in financial economics.

\appendix
\section{Proofs}

\subsection{Proof of Proposition \ref{proposition1} \label{proof-E:Pb}} 

\begin{proof}
	The proof is constructed by directly showing that the original process defined by
expressions \eqref{pricing1} and \eqref{E:bubbleCIRgamma} does satisfy SDE (\ref{E:Pb}). 
For ease of notations, we remove subscript $t$ is neglected. According to the It\^{o} lemma, (\ref{pricing1}) leads to
\begin{align*}
	\dd P &=\dfrac{\pD P}{\pD \gamma}\dd \gamma +\dfrac{1}{2}\dfrac{\pD^2 P}{\pD \gamma^2}(\dd \gamma)^2\\
&= -\dfrac{E}{\gamma^2}~\dd \gamma +\dfrac{1}{2}\cdot 2\cdot \dfrac{E}{\gamma^3}~(\dd \gamma)^2 \\
&=-\dfrac{E}{\gamma}\cdot\dfrac{1}{\gamma}[~-\alpha(\gamma-\gamma^*)~\dd t +\psi \sqrt{\gamma}~\dd W~]+\dfrac{E}{\gamma}\cdot\dfrac{1}{\gamma^2}\cdot \psi^2\gamma~\dd t\\
&= P\left[\alpha\left(1-\dfrac{\gamma^*}{\gamma}\right)+\dfrac{\psi^2}{\gamma}\right]~\dd t - P \psi\sqrt{\dfrac{1}{\gamma}}~\dd W
\end{align*}

Therefore, 
\begin{align*}
\dfrac{\dd P}{P}&=\left[\alpha\left(1-\dfrac{\gamma^*}{\gamma}\right)+\dfrac{\psi^2}{\gamma}\right]~\dd t - \psi\sqrt{\dfrac{1}{\gamma}}~\dd W\\
&\phantom{=} \text{replace}~\gamma ~\text{by}~\dfrac{E}{P} ~\text{and then replace}~\gamma^* ~\text{by}~ \dfrac{E}{P^*}\\
\dfrac{\dd P}{P}&=\left[~\alpha\left(\dfrac{P^*-P}{P^*}\right)+\psi^2\dfrac{P}{E}~\right]\dd t -\psi \sqrt{\dfrac{P}{E}}~\dd W~\\
&\phantom{=} \text {according to the symmetry of }\dd W_t\\
\dfrac{\dd P}{P}&=\left[~\alpha\left(\dfrac{P^*-P}{P^*}\right)+\psi^2\dfrac{P}{E}~\right]\dd t +\psi \sqrt{\dfrac{P}{E}}~\dd W~.
\end{align*}
\end{proof}
{\hfill $\square$}

\subsection{Proof of Proposition \ref{proposition2} \label{proof-martin-SDF}}  

\begin{proof}
The proposition is proven by checking that $m_t$ satisfies the two conditions for qualifying as a stochastic discount factor (SDF). 

Firstly, it is easy to see that $\E_t(\dd m_t)=0$. As $Z^{(i)}_t$ is a Gauss-Winner process,  $\E_t(m_t\vartheta_t^{(i)}\dd Z^{(i)}_t)=m_t\vartheta_t^{(i)}\cdot\E_t(\dd Z^{(i)}_t)=0$. Hence, $\E_t(\dd m_t)=\sum_{i=1}^N \E_t(m_t\vartheta_t^{(i)}\dd Z^{(i)}_t)=0$.

We then check whether the prices of an arbitrary stock and its derivative satisfy the second condition. 

For stock $i$, we have 
\begin{align*}
	\E_t\left(\dfrac{\dd m_t}{m_t}\dfrac{\dd P^{(i)}_t}{P^{(i)}_t}\right)&=\E_t\left(\vartheta^{(i)}_t\dd Z^{(i)}_t\psi^{(i)}\sqrt{\dfrac{P^{(i)}_t}{E^{(i)}}}\dd W^{(i)}_t+\sum_{j\neq i }\vartheta^{(j)}_t\dd Z^{(j)}_t\psi^{(i)}\sqrt{\dfrac{P^{(i)}_t}{E^{(i)}}}\dd W^{(i)}_t\right)\\
&=\vartheta^{(i)}_t\psi^{(i)}\sqrt{\dfrac{P^{(i)}_t}{E^{(i)}}}\rho^{(i)}\delta_{ii}\dd t+\sum_{j\neq i }\vartheta^{(j)}_t\psi^{(i)}\sqrt{\dfrac{P^{(i)}_t}{E^{(i)}}}\rho^{(i)}\delta_{ij}\dd t\\
&=\vartheta^{(i)}_t\rho^{(i)}\cdot \psi^{(i)}\sqrt{\dfrac{P^{(i)}_t}{E^{(i)}}}\,\dd t, \\
\E_t\left(\dfrac{\dd P^{(i)}_t}{P^{(i)}_t}\right)&=\left[\alpha^{(i)} +\left(\dfrac{\psi^{(i)2}}{E^{(i)}}-\dfrac{\alpha^{(i)}}{P^{*(i)}}\right)\right]\dd t.
\end{align*}
If the following equality holds
\begin{equation*}
	\vartheta^{(i)} _t \rho^{(i)} =\frac{[r_f-\alpha^{(i)}]-\psi^{(i)2}}{\psi^{(i)}}\frac{1}{\sqrt{\smash[b] {P_t^{(i)}E^{(i)}}}}+\frac{\alpha^{(i)}\sqrt{\smash[b]{P_t^{(i)}E^{(i)}}}}{P^{*(i)}\psi^{(i)}}~,
\end{equation*}
then the second condition in Equation (\ref{E: martingale2}) for all stocks is also satisfied.

We then check the conditions for the prices of derivatives. For ease of notation, we denote 
\begin{align*}
	 \mu^{(i)}_t:&=\mu(P_t^{(i)})=\left[~\alpha^{(i)}\left(\dfrac{P^{(i)*}-P^{(i)}_t}{P^{(i)*}}\right)+\psi^{(i) 2}~\dfrac{P^{(i)}_t}{E^{(i)}}~\right],\\
     \sigma^{(i)}_t:&=\sigma( P_t^{(i)})=\psi^{(i)} \sqrt{\dfrac{P^{(i)}_t}{E^{(i)}}}. 
\end{align*}
According to the It\^{o} formula, we have,
\begin{equation}
	\dd\xi^{(i)}_t= \left[\dfrac{\pD \xi^{(i)}_t}{\pD t}+\dfrac{\pD \xi^{(i)}_t}{\pD P^{(i)}_t} \mu^{(i)}_t P^{(i)}_t+\dfrac{1}{2}\dfrac{\pD^2\xi^{(i)}_t}{\pD P^{(i)2}_t} \sigma^{(i)2}_t P^{(i)2}_t\right]\dd t +\dfrac{\pD \xi^{(i)}_t}{\pD P^{(i)}_t} P^{(i)}_t \sigma^{(i)}_t \dd W^{(i)}_t.
\end{equation}
At any time $t$, if one take $1$ share of derivative $\xi^{(i)}_t$ and simultaneously short $\dfrac{\pD \xi^{(i)}_t}{\pD P^{(i)}_t}$ shares at the price $P^{(i)}_t$, the diffusion term can be neutralised, and the portfolio must be risk-free. Further, given the no-arbitrage condition that any risk-free portfolio in the market must grow 
at the risk-free rate $r_f$, we get 
\begin{equation*}
	\left[\dfrac{\pD \xi^{(i)}_t}{\pD t}+\dfrac{\pD \xi^{(i)}_t}{\pD P^{(i)}_t} \mu^{(i)}_t P^{(i)}_t+\dfrac{1}{2}\dfrac{\pD^2\xi^{(i)}_t}{\pD P^{(i)2}_t} \sigma^{(i)2}_t P^{(i)2}_t\right]-\dfrac{\pD \xi^{(i)}_t}{\pD P} \mu^{(i)}_t P^{(i)}_t=r_f \left(\xi^{(i)}_t-\dfrac{\pD \xi^{(i)}_t}{\pD P^{(i)}_t} P^{(i)}_t \right).
\end{equation*}
Moreover,
\begin{equation}\label{E: optioneq}
	\dfrac{\pD \xi^{(i)}_t}{\pD t}+\dfrac{\pD \xi^{(i)}_t}{\pD P^{(i)}_t} r_f P^{(i)}_t+\dfrac{1}{2}\dfrac{\pD^2\xi^{(i)}_t}{\pD P^{(i)2}_t} \sigma^{(i)2}_t P^{(i)2}_t=r_f \xi^{(i)}_t.
\end{equation}
This equation can be considered as the principal motion equation for the price of derivatives. Thus,
\begin{align*}
	\E_t(\dd\xi^{(i)}_t)&= \left[\dfrac{\pD \xi^{(i)}_t}{\pD t}+\dfrac{\pD \xi^{(i)}_t}{\pD P^{(i)}_t} \mu^{(i)}_t P^{(i)}_t+\dfrac{1}{2}\dfrac{\pD^2\xi^{(i)}_t}{\pD P^{(i)2}_t} \sigma^{(i)2}_t P^{(i)2}_t\right]\dd t\\
\E_t\left(\dfrac{\dd m_t}{m_t}\dd \xi^{(i)}_t\right)&=\E_t\left(\vartheta^{(i)}_t\dd Z^{(i)}_t\dfrac{\pD \xi^{(i)}_t}{\pD P^{(i)}_t} P^{(i)}_t \sigma^{(i)}_t \dd W^{(i)}_t+\sum_{j\neq i }\vartheta^{(j)}_t\dd Z^{(j)}_t\dfrac{\pD \xi^{(i)}_t}{\pD P^{(i)}_t} P^{(i)}_t \sigma^{(i)}_t \dd W^{(i)}_t\right)\\
&=\vartheta^{(i)}_t   \dfrac{\pD \xi^{(i)}_t}{\pD P^{(i)}_t} P^{(i)}_t \sigma^{(i)}_t    \rho^{(i)}\delta_{ii}\dd t+\sum_{j\neq i }\vartheta^{(j)}_t  \dfrac{\pD \xi^{(i)}_t}{\pD P^{(i)}_t} P^{(i)}_t \sigma^{(i)}_t   \rho^{(i)}\delta_{ij}\dd t\\
&=\vartheta^{(i)}_t     \rho^{(i)}\cdot  \dfrac{\pD \xi^{(i)}_t}{\pD P^{(i)}_t} P^{(i)}_t \sigma^{(i)}_t \dd t.
\end{align*}
It is important to note that $\vartheta^{(i)}_t     \rho^{(i)}=\dfrac{r_f-\mu^{(i)}_t}{\sigma^{(i)}_t}$. Hence, 
\begin{align*}
	\E_t(\dd\xi^{(i)}_t)+\E_t\left(\dfrac{\dd m_t}{m_t}\dd \xi^{(i)}_t\right)&=\left[\dfrac{\pD \xi^{(i)}_t}{\pD t}+\dfrac{\pD \xi^{(i)}_t}{\pD P} r_f P^{(i)}_t+\dfrac{1}{2}\dfrac{\pD^2\xi^{(i)}_t}{\pD P^{(i)2}_t} \sigma^{(i)2}_t P^{(i)2}_t\right]\dd t.
\end{align*}
Recall that, from Equation (\ref{E: optioneq}). the right-hand-side of the above equation is equal to $r_f \xi^{(i)}_t \dd t$. Therefore, 
\begin{align*}
	\E_t\left(\dfrac{\dd\xi^{(i)}_t}{\xi^{(i)}_t}\right)+\E_t\left(\dfrac{\dd m_t}{m_t}\dfrac{\dd\xi^{(i)}_t}{\xi^{(i)}_t}\right)=r_f\dd t~.
\end{align*}
\end{proof}
{\hfill $\square$}

\subsection{Proof of expression eq.(\ref{E:csdecompforlng}) with Campbell-Shiller decomposition}\label{A:laow0nn}

The term $\ln\left(1+\dfrac{1}{c_{t+1}\gamma_{t+1}}\right)$ in equation (\ref{E:lngamma}) can be rewritten as $\ln(1+\exp(-\ln c_{t+1}\gamma_{t+1}))$. 
Performing a first-order Taylor expansion of the function $f(x)=\ln (1+\exp(x))$ around $\overline{\ln c\gamma}:=\E(\ln c_t \gamma_t)$, one has
\begin{align}\label{E:taylorexpan}
\ln\left(1+\dfrac{1}{c_{t+1}\gamma_{t+1}}\right) & \approx \ln(1+e^{-\overline{\ln c\gamma}}) - \dfrac{e^{-\overline{\ln c\gamma}}}{1+e^{-\overline{\ln c\gamma}}}	(\ln \gamma_{t+1}+\ln c_{t+1}-\overline{\ln c\gamma})\notag\\
&=\kappa - \rho (\ln \gamma_{t+1}+\ln c_{t+1})
\end{align}
where, 
\begin{equation}
	\begin{cases}
		\kappa = \ln(1+e^{-\overline{\ln c\gamma}}) + \dfrac{e^{-\overline{\ln c\gamma}}}{1+e^{-\overline{\ln c\gamma}}}	\cdot \overline{\ln c\gamma}\\
		\rho = \dfrac{e^{-\overline{\ln c\gamma}}}{1+e^{-\overline{\ln c\gamma}}}	~. 
	\end{cases}
\end{equation}

By plugging eq. (\ref{E:taylorexpan}) into eq. (\ref{E:lngamma}), we obtain
\begin{equation}\label{E:iterlngamma}
	\ln\gamma_t=-\kappa + r_{t+1}+\rho \ln \gamma_{t+1} -\Delta \ln E_{t+1}-(1-\rho) \ln c_{t+1} 
\end{equation}
where $r_t$ represents $\ln R_t$ and $-\Delta \ln E_{t+1}$ represents $\ln E_{t+1}-\ln E_{t}$.  Iterating eq.(\ref{E:iterlngamma}) forward $n$ steps yields
\begin{multline}
	\ln \gamma_t = -\kappa \sum_{i=0}\rho^i +( \sum_{i=0}^{n}\rho^i  \mathscr{L}^i)r_{t+1}-( \sum_{i=0}^{n}\rho^i  \mathscr{L}^i)\Delta\ln E_{t+1}\\-(1-\rho)(\sum_{i=0}^{n}\rho^i  \mathscr{L}^i)\ln c_{t+1}+\rho^{n+1}\ln \gamma_{t+n+1} ,
\end{multline}
where $\mathscr{L}$ denotes the lead operator that transforms $x_{t}$ into $x_{t+1}$. 

As $n$ approaches infinity and considering the transversality condition $\lim_{n\to \infty}\rho^n \ln \gamma_{t+n}=0$ as well as  the logit term $\rho\in (0,1)$, we finally obtain
\begin{equation}
	\ln \gamma_t =-\dfrac{\kappa}{1-\rho}+\dfrac{r_{t+1}}{1-\rho\mathscr{L}}-\dfrac{\Delta\ln E_{t+1}+(1-\rho)\ln c_{t+1}}{1-\rho\mathscr{L}}~.
\end{equation}

\subsection{Proof of Proposition \ref{proptranden}} \label{proof-transition-prob}

Starting from expression (\ref{defgamma}) relating $\gamma_t$ to $P_t$, 
the transition probability $f(P_t,t\mid P_0)$ for $P_t$ is obtained from the transition probability $f^{CIR}(\gamma_t, t\mid \gamma_0)$ for $\gamma_t$ 
via the standard condition of the conservation of probability under a change of variable, which yields
\begin{equation}
f(P_t,t\mid P_0)=f^{CIR}(\gamma_t, t\mid \gamma_0)\cdot \dfrac{E}{P_t^2}
=f^{CIR}\left(\dfrac{E}{P_t}, t\mid \dfrac{E}{P_0}\right)\cdot \dfrac{E}{P_t^2}.
\label{konkav}
\end{equation}
The well-known CIR conditional transition density is given by
\begin{equation}\label{E:CIRdensity0}
f^{CIR}(\gamma_t, t\mid \gamma_0)=d~ e^{-(u+v)}\left(\dfrac{v}{u}\right)^{\frac{q}{2}}I_q(\sqrt{2uv}),	
\end{equation}
where 
\begin{align*}
d&=\dfrac{2\alpha}{\psi^2(1-e^{-\alpha t})}, \qquad q=\dfrac{2\alpha \gamma^*}{\psi^2}-1\\
u&= d\,\gamma_0\,e^{-\alpha t},  \qquad \qquad v=d\, \gamma_t	~.
\end{align*}
%The above density exists if and only if $2\alpha \gamma^*\geq \psi^2$, which can be reformulated as $H=\dfrac{2\alpha E}{\psi^2}\geq P^*$. 
Substituting $\gamma^*=\dfrac{E}{P^*}$, $\gamma_0=\dfrac{E}{P_0}$, and $\gamma_t=\dfrac{E}{P_t}$ into $q, u, v$, respectively, we obtain
\begin{align*}
q&=\dfrac{2\alpha E}{\psi^2 P^*}-1=\dfrac{H}{P^*}-1 \\
u&=\dfrac{2\alpha E~ e^{-\alpha t}}{\psi^2 P_0 ~(1-e^{-\alpha t})}=\dfrac{H}{P_0}\cdot \dfrac{1}{e^{\alpha t}-1}\\
v&=\dfrac{2\alpha E }{\psi^2 (1-e^{-\alpha t}) P_t}=\dfrac{H}{P_t}\cdot\dfrac{1}{1-e^{-\alpha t}}.
\end{align*}
From Equations (\ref{konkav}) and (\ref{E:CIRdensity0}), we have
\begin{align*}
f(P_t, t\mid P_0)&=\dfrac{2\alpha}{\psi^2(1-e^{-\alpha t})}~ e^{-(u+v)}\left(\dfrac{v}{u}\right)^{\frac{q}{2}}I_q(\sqrt{2uv})\cdot\dfrac{E}{P_t^2}\\
&=\dfrac{2\alpha E}{\psi^2(1-e^{-\alpha t})}~ e^{-(u+v)} v^{\frac{q}{2}}\cdot\dfrac{1}{P_t^2}\cdot u^{-\frac{q}{2}}~I_q(\sqrt{2uv})\\
&=\dfrac{H}{1-e^{-\alpha t}}~ e^{-(u+v)} v^{\frac{q}{2}+2}\cdot\left(\dfrac{H}{1-e^{-\alpha t}}\right)^{-2}\cdot u^{-\frac{q}{2}}~I_q(\sqrt{2uv})\\
&=c~e^{-(u+v)}v^{\frac{q}{2}+2}u^{-\frac{q}{2}}~I_q(2\sqrt{uv}).
\end{align*}

\subsection{Proof of the ergodicity of the price process (\ref{E:Pb})}\label{A:ergodic}

The analysis follows \cite{iacus2009} and \cite{Cherny2005ia}, examining the integral convergence conditions of the so-called \emph{scale measure} and \emph{speed measure}. Specifically, consider a Markovian process in the form of a stochastic differential equation (SDE)
\begin{equation*}
	\dd X_t=b(X_t)~\dd t+\sigma(X_t)~\dd W~,
\end{equation*}
the scale measure and speed measure are defined respectively as
\begin{align*}
	\rho(x)&=e^{-\int_a^x \frac{2 b(y)}{\sigma^2(y)} \dd y}~,\qquad \text{arbitrary}~~a\ge 0~~\text{and}~~a<x\\
	m(x)&=\dfrac{1}{\sigma^2(x)\rho(x)}~.
\end{align*}
If $\int_0^{\infty} m(x)~\dd x <\infty$, the above process is ergodic inside the domain $[\,0,\infty\,]$. Moreover, if 
\begin{align*}
	&\int_0^a \rho(x)\dd x =\infty, \quad \int_a^{\infty} \rho(x)\dd x =\infty,\qquad \text{arbitrary}~~a\ge 0~~\text{and}~~a<\infty,
\end{align*}
the process can never reach $0$ or $\infty$ with a positive probability. However, If one or both integrals are convergent, the corresponding boundary $0$ or $\infty$ is reached and is instantaneously reflecting.

For the price process specified by Equation (\ref{E:Pb}), the scale measure (taking $a=1$) is
\begin{equation*}
	\rho(x)=\exp\left\{-\int_1^x\dfrac{2(\alpha+A y)y}{B y^3}\dd y\right\}=e^{-\frac{2\alpha}{B}}\cdot e^{\frac{2\alpha}{B}\frac{1}{x}}x^{-\frac{2A}{B}},
\end{equation*}
where $A=\dfrac{\psi^2}{E}-\dfrac{\alpha}{P^*}$ and $B=\dfrac{\psi^2}{E}$ with $A\in (-\dfrac{\alpha}{P^*},\,B), ~B \in (0,\infty)$. Thus,
\begin{equation*}
	m(x)=\dfrac{e^{\frac{2\alpha}{B}}}{B}\cdot e^{-\frac{2\alpha}{B}\frac{1}{x}}x^{\frac{2A}{B}-3}~.
\end{equation*}
Then,
\begin{align*}
	\int_0^{\infty} m(x)~\dd x & =\dfrac{e^{\frac{2\alpha}{B}}}{B}\int_0^{\infty}e^{-\frac{2\alpha}{B}\frac{1}{x}}x^{\frac{2A}{B}-3}~\dd x\\
&=\dfrac{e^{\frac{2\alpha}{B}}}{B}\int_0^{\infty} e^{-\frac{2\alpha}{B}\cdot y}y^{-\frac{2A}{B}+1}~\dd y\qquad \text{with $y=\dfrac{1}{x}$}\\
&=\dfrac{e^{\frac{2\alpha}{B}}}{B}\int_0^{\infty} e^{-z} \left(\dfrac{B}{2\alpha}\right)^{1-\frac{2A}{B}}\cdot z^{-\frac{2A}{B}+1}\cdot \dfrac{B}{2\alpha}~\dd z\qquad \text{with $z=\dfrac{2\alpha}{B}$}y\\
&=\dfrac{e^{\frac{2\alpha}{B}}}{B}\left(\dfrac{B}{2\alpha}\right)^{2-\frac{2A}{B}}\Gamma(2-\frac{2A}{B})<\infty~.
\end{align*}
Therefore, the price process is ergodic in the domain $[\,0,+\infty\,]$. 
Since that the price process is the reciprocal of the CIR process, this result is intimately associated with the ergodicity for the CIR process.
Additionally, 
\begin{align*}
\int_1^{\infty} \rho(x)\dd x &=e^{-\frac{2\alpha}{B}} \int_1^{\infty}  e^{\frac{2\alpha}{B}\frac{1}{x}}x^{-\frac{2A}{B}}~\dd x \\
&=e^{-\frac{2\alpha}{B}}\cdot \left(\dfrac{2\alpha}{B}\right)^{1-\frac{2A}{B}}\int_{0}^{\frac{2\alpha}{B}} e^{y}~y^{-2+\frac{2A}{B}}~\dd y \qquad \text{with $y=\dfrac{2\alpha}{B x}$}\\
&=e^{-\frac{2\alpha}{B}}\cdot \left(\dfrac{2\alpha}{B}\right)^{1-\frac{2A}{B}}\cdot(-1)^{-2+\frac{2A}{B}} \int^{0}_{-\frac{2\alpha}{B}} e^{-z}z^{-2+\frac{2A}{B}}~\dd z\qquad \text{with $z=-y$}\\
&=(-1)^{-2+\frac{2A}{B}}e^{-\frac{2\alpha}{B}}\cdot \left(\dfrac{2\alpha}{B}\right)^{1-\frac{2A}{B}}\left(\int^{\infty}_{-\frac{2\alpha}{B}} e^{-z}z^{-2+\frac{2A}{B}}~\dd z-\int^{\infty}_{0} e^{-z}z^{-2+\frac{2A}{B}}~\dd z\right)\\
&=\begin{cases}
C_1\left[\,\Gamma(-1+\frac{2A}{B}, -\frac{2\alpha}{B})-\Gamma(-1+\frac{2A}{B})\,\right]<\infty	,\qquad -1+\frac{2A}{B}>0,\\
\infty
\end{cases}
\end{align*} 
where $C_1:=(-1)^{-2+\frac{2A}{B}}e^{-\frac{2\alpha}{B}}\cdot \left(\frac{2\alpha}{B}\right)^{1-\frac{2A}{B}}$. The integral $\int_1^{\infty} \rho(x)\dd x$ converges when the condition $-1+\frac{2A}{B}>0$ is satisfied. This condition can be finally translated into $\frac{2\alpha E}{\psi^2}< P^*$, i.e.,~$H<P^*$.
However, 
\begin{align*}
\int_0^{1} \rho(x)\dd x &=e^{-\frac{2\alpha}{B}} \int_0^{1}  e^{\frac{2\alpha}{B}\frac{1}{x}}x^{-\frac{2A}{B}}~\dd x \\
&=e^{-\frac{2\alpha}{B}}\cdot \left(\dfrac{2\alpha}{B}\right)^{1-\frac{2A}{B}}\int_{1}^{\infty} e^{y}~y^{-2(1-\frac{A}{B})}~\dd y \qquad \text{with $y=\dfrac{2\alpha}{B x}$}\\
&=\infty~.
\end{align*}
Hence, the price process is ergodic and can never reach $zero$. When $H<P^*$, the process has a strictly positive probability of exploding to infinity 
in finite time. When this happens, the price is instantaneously reflected from infinity. Otherwise ($H>P^*$), the price does not reach infinity in finite time.

\subsection{Proof of Proposition \ref{prop:stableden} \label{proof-stablede}}

\begin{proof}
As the price process that is solution of Equation \eqref{E:Pb} is ergodic, it has a unique stationary distribution, which is the limiting distribution 
of its long-term evolution. Using the definitions (\ref{trhtgbq1}-\ref{trhtgbq4}), irrespective of whether $H \geq P^*$ or $H< P^*$  and for an arbitrary $P_0$, we always have   
\begin{align*}
\pi(P)&=\lim_{t\to\infty} f(P_t,t\mid P_0)\\
&=\lim_{t\to\infty} c~e^{-(u+v)}v^{\frac{q}{2}+2}u^{-\frac{q}{2}}~I_q(2\sqrt{uv})\\
&=\lim_{t\to\infty} c~e^{-(u+v)}v^{\frac{q}{2}+2}u^{-\frac{q}{2}}~\sum_{n=0}^{\infty}\left(\dfrac{2\sqrt{uv}}{2}\right)^{2n+q}\dfrac{1}{n!~\Gamma(n+q+1)}\\
&=\lim_{t\to\infty} c~e^{-(u+v)}v^{q+2}~\sum_{n=0}^{\infty}\dfrac{(uv)^n}{n!~\Gamma(n+q+1)}\\
\text{Notice that}& \text{~~$c=\dfrac{1-e^{-\alpha t}}{H}\overset{t\to\infty}{\longrightarrow}\dfrac{1}{H}$, $v=\dfrac{1}{c P}\overset{t\to\infty}{\longrightarrow}\dfrac{H}{P}$, and $u\sim \dfrac{1}{e^{\alpha t}-1}\overset{t\to\infty}{\longrightarrow} 0$, then}\\
\pi(P)&=\dfrac{1}{H}e^{-\frac{H}{P}}\left(\dfrac{H}{P}\right)^{q+2}\dfrac{1}{\Gamma(q+1)}\\ 
&\text{Replacing $q$ by $\frac{H}{P^*}-1$, this yields}\\
&=\dfrac{H^{\frac{H}{P^*}}}{\Gamma\left(\frac{H}{P^*}\right)}e^{-\frac{H}{P}}P^{-(\frac{H}{P^*}+1)}~.
\end{align*} {\hfill $\square$}

\end{proof}

\subsection{Proof of Proposition \ref{rhj243bq} \label{jiu5mori9}}

Using the definitions (\ref{trhtgbq1}-\ref{trhtgbq4}),
\begin{align*}
\E(P_t\mid P_0)&=\int_0^{\infty}P_t\cdot f(P_t, t\mid P_0)~\dd P_t\\
&=\int_0^{\infty}P_t\cdot c \cdot e^{-(u+v)}~v^{\frac{q}{2}+2}~u^{-\frac{q}{2}}~I_q(2\sqrt{uv})~\dd P_t\\
&\text{Note that  $v=\dfrac{1}{c P_t}$ and performing the change of variable $\dd P_t=-c P_t^2~\dd v$ }\\
&=-\int_{\infty}^0(cP_t)^2 P_t\cdot e^{-(u+v)}~v^{\frac{q}{2}+2}~u^{-\frac{q}{2}}~I_q(2\sqrt{uv})~\dd v\\
&=\int_0^{\infty} (cP_t)^2 P_t \cdot e^{-(u+v)}~v^{\frac{q}{2}-1}\cdot\left(\dfrac{1}{cP_t}\right)^3\cdot u^{-\frac{q}{2}}~I_q(2\sqrt{uv})~\dd v\\
&=\int_0^{\infty} \dfrac{1}{c} \cdot e^{-(u+v)}~v^{\frac{q}{2}-1}~ u^{-\frac{q}{2}}~I_q(2\sqrt{uv})~\dd v\\
&=\int_0^{\infty} \dfrac{1}{c} \cdot e^{-(u+v)}~v^{\frac{q}{2}-1}~ u^{-\frac{q}{2}}~\sum_{n=0}^{\infty}\left(\dfrac{2\sqrt{uv}}{2}\right)^{2n+q}\dfrac{1}{n!~\Gamma(n+q+1)}~\dd v\\
&=\int_0^{\infty} \dfrac{1}{c} \cdot e^{-(u+v)}~v^{\frac{q}{2}-1}~ u^{-\frac{q}{2}}~(uv)^{\frac{q}{2}}\sum_{n=0}^{\infty}\dfrac{(uv)^n}{\Gamma(n+1)\Gamma(n+q+1)}~\dd v\\
&=\dfrac{e^{-u}}{c}\sum_{n=0}^{\infty}\left(\dfrac{u^n}{\Gamma(n+1)\Gamma(n+1+q)}\cdot\int_0^{\infty}e^{-v}v^{q-1+n}~\dd v\right)\\
&=\dfrac{e^{-u}}{c}\sum_{n=0}^{\infty}\dfrac{\Gamma(n+q)}{\Gamma(n+1)\Gamma(n+1+q)}\cdot u^n\\
&=\dfrac{e^{-u}}{c}\cdot\dfrac{\Gamma(q)}{\Gamma(q+1)}\cdot{}_1F_1(q,q+1,u)\\
&=\dfrac{e^{-u}}{c~q}\cdot{}_1F_1(q,q+1,u)
\end{align*}
Therefore, $\E(R_t\mid P_0)=\E\left(\dfrac{P_t}{P_0}\mid P_0\right)=\dfrac{e^{-u}}{cqP_0}\cdot{}_1F_1(q,q+1,u)$.$~~~~~~~~~~~~~~~~~~~~~`~~~~~~~~~~~~~~~~~~~~~~~~~\Box$

\subsection{Proof of Proposition \ref{wrhbtbqbq}} \label{A:condvar}   

Based on the explicit conditional transition density of Equation \eqref{E:traden} and definitions (\ref{trhtgbq1}-\ref{trhtgbq4}), we have
\begin{align*}
\E(P^2_t\mid P_0)&=\int_0^{\infty}P^2_t\cdot f(P_t, t\mid P_0)~\dd P_t\\
&=\int_0^{\infty}P^2_t\cdot c \cdot e^{-(u+v)}~v^{\frac{q}{2}+2}~u^{-\frac{q}{2}}~I_q(2\sqrt{uv})~\dd P_t\\
%&\text{Note that $v=\dfrac{1}{c P_t}$ and applying the substitute $\dd P_t=-c P_t^2~\dd v$ }\\
&=-\int_{\infty}^0(cP_t)^2 P^2_t\cdot e^{-(u+v)}~v^{\frac{q}{2}+2}~u^{-\frac{q}{2}}~I_q(2\sqrt{uv})~\dd v\\
&=\int_0^{\infty} (cP_t)^2 P^2_t \cdot e^{-(u+v)}~v^{\frac{q}{2}-2}\cdot\left(\dfrac{1}{cP_t}\right)^4\cdot u^{-\frac{q}{2}}~I_q(2\sqrt{uv})~\dd v\\
&=\int_0^{\infty} \dfrac{1}{c^2} \cdot e^{-(u+v)}~v^{\frac{q}{2}-2}~ u^{-\frac{q}{2}}~I_q(2\sqrt{uv})~\dd v\\
&=\int_0^{\infty} \dfrac{1}{c^2} \cdot e^{-(u+v)}~v^{\frac{q}{2}-2}~ u^{-\frac{q}{2}}~\sum_{n=0}^{\infty}\left(\dfrac{2\sqrt{uv}}{2}\right)^{2n+q}\dfrac{1}{n!~\Gamma(n+q+1)}~\dd v\\
&=\int_0^{\infty} \dfrac{1}{c^2} \cdot e^{-(u+v)}~v^{\frac{q}{2}-2}~ u^{-\frac{q}{2}}~(uv)^{\frac{q}{2}}\sum_{n=0}^{\infty}\dfrac{(uv)^n}{\Gamma(n+1)\Gamma(n+q+1)}~\dd v\\
&=\dfrac{e^{-u}}{c^2}\sum_{n=0}^{\infty}\left(\dfrac{u^n}{\Gamma(n+1)\Gamma(n+1+q)}\cdot\int_0^{\infty}e^{-v}v^{q-2+n}~\dd v\right)\\
&=\dfrac{e^{-u}}{c^2}\sum_{n=0}^{\infty}\dfrac{\Gamma(n+q-1)}{\Gamma(n+1)\Gamma(n+1+q)}\cdot u^n\\
&=\dfrac{e^{-u}}{c^2}\cdot\dfrac{\Gamma(q-1)}{\Gamma(q+1)}\cdot{}_1F_1(q-1,q+1,u)\\
&=\dfrac{e^{-u}}{c^2~q(q-1)}\cdot{}_1F_1(q,q+1,u)
\end{align*}	

\subsection{Proof of Proposition \ref{wrhwth2tybtbqbq}} \label{jiu52th1mori9}
 
\begin{align}
\E (R_{\infty}\mid P_0)&=\dfrac{\E(P_{\infty})}{P_0}=\dfrac{1}{P_0}\int_0^{\infty} P\cdot\pi(P)~\dd P\notag\\
&= \dfrac{1}{P_0}\int_0^{\infty} P \dfrac{H^{\frac{H}{P^*}}}{\Gamma\left(\frac{H}{P^*}\right)}e^{-\frac{H}{P}}P^{-(\frac{H}{P^*}+1)} ~\dd P\notag\\
&=\dfrac{1}{P_0\cdot\Gamma\left(\frac{H}{P^*}\right)}\int_0^{\infty} e^{-\frac{H}{P}}\left(\dfrac{H}{P}\right)^{\frac{H}{P^*}}~\dd P\notag\\
&\text{Defining $z=\dfrac{H}{P}$ and applying change of variable $\dd P=-\dfrac{P^2}{H}\dd z$}\notag\\
&=\dfrac{H}{P_0\cdot\Gamma\left(\frac{H}{P^*}\right)}\int_0^{\infty} e^{-z}~z^{\frac{H}{P^*}-2}~\dd z\notag\\
\E (R_{\infty}\mid P_0)&=\dfrac{H\cdot\Gamma\left(\frac{H}{P^*}-1\right)}{P_0\cdot\Gamma\left(\frac{H}{P^*}\right)}=\dfrac{P^*}{P_0}\cdot\left(1-\dfrac{P^*}{H}\right)^{-1}~.\label{E:condqtelongR}
\end{align}

A similar derivation holds for the variance of the cumulative return.
\begin{align}
 \E (R^2_{\infty}\mid P_0)&=\dfrac{1}{P^2_0}\int_0^{\infty} P^2\cdot\pi(P)~\dd P\notag\\
&=\dfrac{1}{P_0^2\cdot\Gamma\left(\frac{H}{P^*}\right)}\int_0^{\infty} e^{-\frac{H}{P}}\left(\dfrac{H}{P}\right)^{\frac{H}{P^*}}P~\dd P\notag\\
&=\dfrac{H^2}{P_0^2\cdot\Gamma\left(\frac{H}{P^*}\right)}\int_0^{\infty} e^{-z}~z^{\frac{H}{P^*}-3}~\dd z\notag\\
&=\dfrac{H^2\cdot\Gamma\left(\frac{H}{P^*}-2\right)}{P^2_0\cdot\Gamma\left(\frac{H}{P^*}\right)}=\dfrac{P^{*2}}{P_0^2}\cdot\left(1-\frac{2 P^*}{H}\right)^{-1}\left(1-\frac{P^*}{H}\right)^{-1}.
\end{align}
Hence, we have 
\begin{align}\label{condsqtgteadyvar}
\mathbb{V}ar(R_{\infty}\mid P_0) &=	\E (R_{\infty}\mid P_0)-\E (R_{\infty}\mid P_0)^2\notag\\ &=\dfrac{P^{*2}}{P^2_0}\cdot\left(1-\dfrac{P^*}{H}\right)^{-1}\left[\left(1-\dfrac{2P^*}{H}\right)^{-1}-\left(1-\dfrac{P^*}{H}\right)^{-1}\right]~.	
\end{align}

\subsection{Proof of proposition \ref{prop:super}}\label{A:propsuper}

For the super-exponential growth to persist at least until $t=1$, given that the price started from $P_0$ at $t=0$, 
the following conditions must hold:
\begin{align*}
	\dfrac{\partial}{\partial t}\,&\ln \E(P_{t}\mid P_0)\bigl|_{t=1}>0~,\\
\lim_{t\to 1^-}\dfrac{\partial^2}{\partial t^2}\,&\ln \E(P_{t}\mid P_0)>0~.
\end{align*}
The first order condition (that the expected price grows) is always satisfied. 
We thus focus on the second order condition. Using expression (\ref{E:condret}) for $\E(R_{t}\mid P_0)$, a direct calculation for the 
second-order derivative yields
\begin{align*}
	&\lim_{t\to 1^-}\dfrac{\partial^2}{\partial t^2}\,\ln\E(P_{t}\mid P_0)=\lim_{t\to 1^-}\dfrac{\partial^2}{\partial t^2}\,\ln\E(R_{t}\mid P_0)\\
&=-\biggl\{\,e^{\alpha}\alpha^2\biggl[e^{\alpha+\frac{2H}{(e^{\alpha}-1)P_0}}(e^{\alpha}-1)P_0\left(\dfrac{H}{P_0-e^{\alpha}P_0}\right)^{2q}\\
&\phantom{-\biggl\{\,e^{\alpha}\alpha^2\biggl[}+\left(\Gamma(q)-\Gamma\left(\dfrac{H}{P_0-e^{\alpha}P_0}\right)\right)^2(1+e^{\alpha})H+(e^{\alpha}-1)(q-1)P_0)\\
&\phantom{-\biggl\{\,e^{\alpha}\alpha^2\biggl[}+e^{\frac{H}{(e^{\alpha}-1)P_0}}\left(-\Gamma(q)+\Gamma\left(\dfrac{H}{P_0-e^{\alpha}P_0}\right)\right)\left(\dfrac{H}{P_0-e^{\alpha}P_0}\right)^{q}\\
&\phantom{-\biggl\{\,e^{\alpha}\alpha^2\biggl[++}(e^{\alpha}H+(e^{\alpha}-1)(e^{\alpha}q+1)P_0)\biggr]\biggr\}\biggl/\bigl[\,(e^{\alpha}-1)\left(\Gamma(q)-\Gamma\left(q, -\frac{H}{(e^{\alpha}-1)P_0}\right)\right)^2\,\bigr]~.
\end{align*}
Thus, given that $e^{\alpha}\alpha^2>0$ and $(e^{\alpha}-1)>0$, the second order condition $\lim_{t\to 1^-}\dfrac{\partial^2}{\partial t^2}\,\E(P_{t}\mid P_0)>0$ is equivalent to condition (\ref{supercond}).

The end of the super-exponential growth regime occurs at the time $t_c$ when the log-price exhibits an inflexion point determined by the condition
\begin{align*}
	\dfrac{\partial^2}{\partial t^2}\,&\ln \E(P_{t}\mid P_0)|_{t=t_c}=0~.
\end{align*}
Using expression (\ref{E:condret}) for $\E(R_{t}\mid P_0)$, the above condition leads to Equation (\ref{solvingsupert}). {\hfill $\square$}

. 
\bibliographystyle{elsarticle-harv}
\bibliography{stc-9.bib}

\end{document}